\documentclass[11pt]{article}
\usepackage{fullpage}

\usepackage{float}
\usepackage{tikz}
\usepackage{amsthm}
\usepackage{amsmath}
\usepackage{amssymb}
\usepackage{graphicx}
\usepackage{algorithm,algorithmic}
\usepackage{esint,bbm,enumerate,color}
\usepackage{caption}


\newcommand{\sd}{\Sigma\Delta}

\newcommand{\R}{\mathbb{R}}

\newcommand{\A}{\mathcal{A}}
\newcommand{\supp}{\operatorname{supp}} 
 
\makeatletter


\theoremstyle{plain}
\newtheorem{thm}{\protect\theoremname}
\newtheorem{prop}[thm]{\protect\propname}
\theoremstyle{definition}
\newtheorem{defn}[thm]{\protect\definitionname}

\newtheorem{remark}[thm]{Remark}
\theoremstyle{plain}
\theoremstyle{plain}
\newtheorem{cor}[thm]{\protect\corollaryname}

\newtheorem{lm}[thm]{Lemma}

\theoremstyle{definition}


\providecommand{\corollaryname}{Corollary}
\providecommand{\propname}{Proposition}
\providecommand{\definitionname}{Definition}
\providecommand{\theoremname}{Theorem}
\providecommand{\assumptionname}{Assumption}

\makeatother

\providecommand{\corollaryname}{Corollary}
\providecommand{\definitionname}{Definition}
\providecommand{\theoremname}{Theorem}

\newcommand{\Z}{\mathbb Z}

\newcommand{\E}{\mathcal E}

\newcommand{\SD}{\Sigma\Delta}

\providecommand{\fr}{\frac}

\providecommand{\wtl}{\widetilde}

\usepackage{constants}
\newconstantfamily{errorterms}{symbol=d}
\newconstantfamily{littlec}{symbol=c}

\usetikzlibrary{shapes.geometric,shapes.arrows,decorations.pathmorphing}
\usetikzlibrary{matrix,chains,scopes,positioning,arrows,fit}

\begin{document}

\title{From compressed sensing to compressed bit-streams: practical encoders, tractable decoders}

\author{Rayan~Saab, Rongrong~Wang, and {\"O}zg{\"u}r~Y{\i}lmaz
  \thanks{Rayan~Saab is with the Mathematics Department of University
    of California, San Diego. Rongrong Wang and {\"O}zg{\"u}r~Y{\i}lmaz are with the
    Department of Mathematics, University of British Columbia,
    Vancouver, BC, Canada.}%
  }

\markboth{IEEE Transactions on Information Theory}{Your Name \MakeLowercase{\emph{et al.}}: Your Title}
\maketitle
\begin{abstract}
  Compressed sensing is now established as an effective method for
  dimension reduction when the underlying signals are sparse or
  compressible with respect to some suitable basis or frame. One
  important, yet under-addressed problem regarding the compressive
  acquisition of analog signals is how to perform quantization. This
  is directly related to the important issues of how ``compressed''
  compressed sensing is (in terms of the total number of bits one ends
  up using after acquiring the signal) and ultimately whether
  compressed sensing can be used to obtain compressed representations
  of suitable signals.  Building on our recent work, we propose a
  concrete and practicable method for performing
  ``analog-to-information conversion'' . Following a compressive
  signal acquisition stage, the proposed method consists of a
  quantization stage, based on $\sd$ (sigma-delta) quantization, and a
  subsequent encoding (compression) stage that fits within the
  framework of compressed sensing seamlessly. We prove that, using
  this method, we can convert analog compressive samples to compressed
  digital bitstreams and decode using tractable algorithms based on
  convex optimization. We prove that the proposed AIC provides a
  nearly optimal encoding of sparse and compressible signals. Finally,
  we present numerical experiments illustrating the effectiveness of
  the proposed analog-to-information converter.

\medskip
\noindent \textbf{Keywords.}
Compressed sensing, quantization, source coding, exponential accuracy,
analog-to-information conversion
\end{abstract}

\section{Introduction}\label{sec:intro}

An analog-to-information converter (AIC) collects compressive samples
of inherently analog signals and recovers these signals
using tractable algorithms (see, e.g., \cite{LaskaA2I, LaskaA2I_2}). We address the following outstanding
question in the compressive sampling literature: Can we design an AIC
such that (a) the sampling is compressive, (b) it results in a
(nearly) optimal encoding (in the sense of Kolmogorov) of the original
signals, (c) it is practicable?  The compressive AIC model that we
 focus on consists of a \emph{compressive sampling} stage, a
\emph{quantization} stage, an \emph{encoding stage}, and a
\emph{reconstruction} or \emph{decoding} stage where the signal of
interest is approximated, see Figure~\ref{fig:block_diagram}.

\begin{figure}[t]\begin{center}
\sffamily
\begin{tikzpicture}[node distance=5cm,auto]
  \matrix (m) [matrix of nodes, 
    column sep=20mm,
    row sep=1cm,
    nodes={draw, 
      line width=1pt,
      anchor=center, 
      text centered,
      rounded corners,
      minimum width=1cm, minimum height=8mm,
      scale=0.5
    }, 
    right iso/.style={isosceles triangle,scale=0.4,sharp corners, anchor=center, xshift=-4mm},
    left iso/.style={right iso, rotate=180, xshift=-8mm},
    txt/.style={text width=1cm,anchor=center,scale=0.8},
    ellip/.style={ellipse,scale=0.4},
    empty/.style={draw=none}
    ]
  {

  compressive sampler 
  & 
quantizer
  & 
encoder
& 
decoder
  \\
  };  
\node [coordinate] (end) [right of=m-1-4, node distance=2cm]{};
\node [coordinate] (start) [left of=m-1-1, node distance=3cm]{};
  \tikzset{blue dotted/.style={draw=blue!50!white, line width=1pt,
                               dash pattern=on 1pt off 4pt on 6pt off 4pt,
                                inner ysep=10mm, inner xsep=3mm, rectangle, rounded corners}};

  \tikzset{red dotted/.style={draw=red!50!white, line width=1pt,
                               dash pattern=on 1pt off 4pt on 6pt off 4pt,
                                inner ysep=10mm, inner xsep=2mm, rectangle, rounded corners}};

  \node (first dotted box) [blue dotted, 
                            fit =  (m-1-1) (m-1-3)] {};
  \node (second dotted box) [red dotted, 
                            fit =  (m-1-4) (m-1-4)] {};

    \path[->] (start) edge node [pos = .0]{${x}\in\mathcal{X},$} (m-1-1) ;
    \path[->] (start) edge node [below,pos = .0]{$\mathcal{X}\subset\R^N$} (m-1-1) ;
    
    \path[->] (m-1-1) edge node [above]{$y=\Phi x$} (m-1-2);  
    \path[->] (m-1-1) edge node [below]{$\Phi(\mathcal{X})\subset \R^m$} (m-1-2);  

    \path[->] (m-1-2) edge node {$q\in \mathcal{C}_0$} (m-1-3); 
        \path[->] (m-1-2) edge node [below]{$|\mathcal{C}_0| < \infty$} (m-1-3); 
        
    \path[->] (m-1-3) edge node [pos=0.45]{$c\in \mathcal{C}$} (m-1-4);  
    \path[->] (m-1-3) edge node [below, pos=0.52]{$|\mathcal{C}| < |\mathcal{C}_0|$} (m-1-4);  
  
    \path[->] (m-1-4) edge node [pos = .95]{$\hat{x}\in\R^N$} (end) ;

  \node at (first dotted box.north) [above, inner sep=3mm] {\textbf{Acquisition}};
  \node at (second dotted box.north) [above, inner sep=3mm] {\textbf{Reconstruction}};

\end{tikzpicture}\end{center}
\caption{A block diagram depicting a compressive AIC for the
  acquisition and reconstruction of a signal $x\in \mathcal{X}\subset
  {\R^m}$. As part of the \emph{acquisition} stage of the AIC, a
  compressive sampler produces a vector of measurements $y=\Phi x \in
  \R^m$. The measurements are then quantized, i.e., replaced by a
  vector $q$ from a finite set $\mathcal{C}_0$. The encoder then
  replaces $q$ by an element $c$ of an even smaller finite set
  $\mathcal{C}$, known as the codebook. Usually the codebook satisfies
  $\log_2|\mathcal{C}| \ll \log_2|\mathcal{C}_0|$, as this reduces the
  number of bits needed to represent $c$ compared to $q$. Finally, the
  decoder produces an estimate $\hat{x}$ of the signal $x$, using only
  $c$ and knowledge of the maps associated with the three acquisition
  stages. The goal of an AIC is to produce, in a computationally
  tractable way, a good approximation of $x$ with a small codebook $\mathcal{C}$. \label{fig:block_diagram}}

\end{figure}
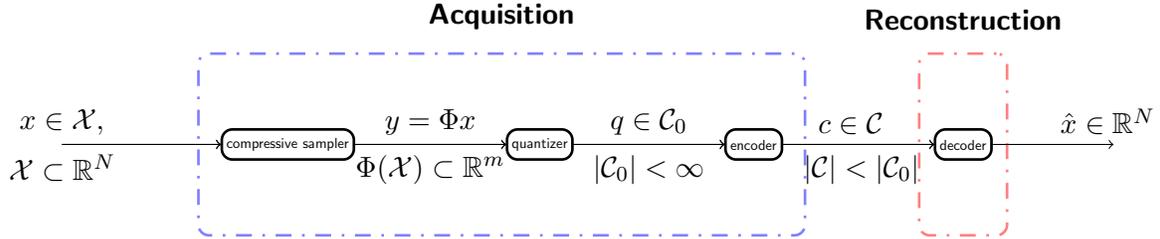

One of the main original insights of our approach is that the inclusion of
the encoding stage in this model makes it possible to answer the above
question affirmatively. To make the discussion concrete, we now discuss
the class of signal of interest and the individual stages of the
compressive AIC. 
\medskip





\noindent{\bf Signals:} We model signals as vectors in a fixed compact
set $\mathcal{X}$ in $\R^N$. Of particular interest are compressible
vectors, i.e., vectors
that can be well represented by their best $k$-term
approximation. These include bounded $k$-sparse signals as well as
signals in some fixed ball of a weak $\ell_p$ space  in
$\R^N$, denoted $w\ell_p$, with $0<p<1$ (see Section \ref{sec:def} for the precise definition).
\medskip

\noindent{\bf Compressive sampling:} A compressive sampling operator
$\Phi$ is an $m \times N$  matrix (typically with $m\ll N$) that provides
the vector of measurements $y=\Phi x$. The entries $y_i$ of the vector
$y$ are the compressive samples of $x$.
\medskip

\noindent{\bf Quantization:} The compressive samples must be
transmitted, stored, and processed using digital media. Therefore,
they need to be quantized: A quantization operator $\mathcal{Q}$ maps
$y\in \R^m$ to $q\in \mathcal{C}_0$ where $\mathcal{C}_0$ is a finite
set. Accordingly, the quantized measurements can be represented using
finite bitstreams. A notable special case, which we will mostly
restrict our attention to, is when $\mathcal{C}_0=\A^m$ for a finite set
$\A\subset \R$ called the quantization alphabet (an example is the ``1-bit" alphabet $\A=\{-1,1\}$).  Progressive quantizers such as
memoryless scalar quantization and $\sd$ quantization are of
this form.

\medskip

\noindent{\bf Encoding (Compression):} 
The quantized measurements require $\log_2|\mathcal{C}_0|$ bits to be
represented.  Often, we can reduce this bit budget by incorporating an
encoding stage. We denote by $\mathcal{E}: \mathcal{C}_0 \to
\mathcal{C},$ the encoding map, where $\mathcal{C}$ is a finite set
called the codebook, usually satisfying $\log_2|\mathcal{C}| \ll
\log_2|\mathcal{C}_0|$. The goal of encoding is to reduce the number of
bits while still permitting accurate reconstruction. We focus
  on simple encoding schemes implemented via, e.g., discrete
  Johnson-Lindenstrauss embeddings \cite{JLoriginal}
  (cf. \cite{ailon2006approximate, ailon2009fast, krahmer2011new,
    vybiral2011variant}).
%
%
\vspace{5pt}\\{\bf Reconstruction:} The final stage of a compressive
AIC is the reconstruction or decoding stage where we recover an
approximation to the original signal $x$. To that end we use a map $\Delta:
\mathcal{C} \to \R^N$.  Since we do not impose a probabilistic model on the signals,
it is natural to study the worst case reconstruction error, i.e.,
the distortion $\mathcal{D}$, in terms of the bit rate $\mathcal{R}$
where
 \begin{align} 
\mathcal{D}&:=
\sup\limits_{x\in\mathcal{X}}\|\Delta(\mathcal{E}(\mathcal{Q}(\Phi x))) -x\|_2\label{eq:worst_case},\\
\mathcal{R}&:= \log_2|\mathcal{C}|.\label{eq:rate}
 \end{align}
%
%
\subsection{Definitions and notation} \label{sec:def}
Throughout, for an $m\times N$ matrix $A$ and $T\subseteq \{1,\dots, N\}$ we denote by $A_T$ the submatrix formed by the columns of $A$ indexed by $T$. Similarly, for $x\in \R^N$, $x_T$ denotes the restriction of $x$ to $T$. We denote the set of $k$-sparse vectors in $\R^N$ by 
$$\Sigma_k^N :=\{ x\in \R^N, |\supp(x)|\leq k  \}.$$

\begin{defn}
We say that a vector $x \in \R^N,$ belongs to the weak $\ell_p$ ball
of radius $C$ if $|x|_{(j)} \leq Cj^{-1/p}$ where $|x|_{(j)}$ denotes
the magnitude of the $j$th largest-in-magnitude entry of $x$. 
\end{defn}

\begin{defn}[The restricted isometry property]
  We say that an $m\times N$ matrix $A$ satisfies the Restricted
  Isometry Property (RIP) of order $k$ and constant $\delta_k$ if for
  all $k$-sparse vectors $x$ we have
\begin{equation}
(1-\delta_k) \|x\|_2^2 \leq \|A x\|_2^2 \leq (1+\delta_k) \|x\|_2^2. 
\end{equation}

\end{defn}


\begin{defn}[Sub-Gaussian random variables and matrices]
\hfill

\begin{enumerate}[(i)]
\item A random variable $\eta$ is sub-Gaussian with parameter $c>0$ if
  it satisfies  $P(|\eta|>t) \leq e P(|\xi|>t)$ where $\xi$ is a
  Gaussian random variable with mean 0 and variance $c^2$. 
\item A matrix $E$ is sub-Gaussian with parameter $c$, mean $\mu$ and
  variance $\sigma^2$ if its entries are independent sub-Gaussian
  random variables  with parameter $c$ , mean $\mu$, and variance $\sigma^2$.
\end{enumerate}
\end{defn}

We remark that one can also define sub-Gaussian random variables via
their moments or, when they are zero mean, via their moment generating
functions. See \cite{ve12-1} for a proof that all these definitions
are equivalent.
Note that Gaussian random variables and all bounded random variables
(e.g., Bernoulli), as well as their linear combinations are
sub-Gaussian random variables. 

Among all sub-Gaussian random matrices,
we give special emphasis to \emph{Bernoulli matrices} (because
we use them for encoding our quantized
measurements): 
A matrix $B$ is said to be a Bernoulli matrix if each of its entries
$B_{ij}$ is drawn randomly from $\{\pm1\}$ such that
$\mathbb{P}(B_{ij}=
1)=1/2$. 

Throughout, we write $a(x)\lesssim b(x)$ if and only if there exists a constant $C$ such that $a(x)\le C b(x)$. 
Given a set $\mathcal{X}\subset \R^N$, we denote its image under a map $f$ by  $f(\mathcal{X}):=\{f(x), x \in \mathcal{X}\}$. Finally, note that we use the terms ``compressed sensing" and ``compressive sampling" interchangeably.

\subsection{Main contributions} Below, the compressive
 sampling matrix $\Phi$ is an $m\times N$ sub-Gaussian matrix. The
 measurements $y=\Phi x + e$ are possibly corrupted by noise $e$ with
 $\|e\|_\infty \le \epsilon$. The quantization operator $\mathcal{Q}$
 is an $r$th-order Sigma-Delta ($\sd$) scheme yielding $q=\mathcal{Q}(y)=y-D^r
 u$; here $u$ is a bounded state vector and $D$ is the bidiagonal
 matrix with entries on the main diagonal equal to 1 and on the
 subdiagonal equal to $-1$ (see Section~\ref{sec:SD}).  The action
 of the encoding map $\mathcal{E}$ can be decomposed into two
 stages. First, we apply an $L\times m$ Bernoulli matrix $B$ to
 $D^{-r}q$ so that by construction $BD^{-r}q$ takes values from a finite
 set. The second stage of the encoding is simply to assign binary
 labels to $BD^{-r}q$, which is an invertible operation and will be mostly ignored
 in our statements below. The reconstruction operator $\Delta$ is a modified version of the one we proposed in \cite{SWY15}, where no encoding was assumed.
It is based on solving the convex optimization problem
\begin{equation}\label{eq:opt_int}
(\hat{x},\hat{u}, \hat{e})=\arg\min \|\widetilde{x}\|_1, \text{  s.t. } \left\{\begin{array}{rcl} 
BD^{-r}(\Phi \widetilde{x}+\widetilde{e}) -B\widetilde{u} &=&BD^{-r}{q}\\
\|B\widetilde{u}\|_2 &\leq& 3Cm\\
\|\widetilde{e}\|_2 &\leq & \sqrt{m} \varepsilon
\end{array}.\right.
\end{equation}
Here, $C$ is a known constant that depends on the specific $\SD$
quantizer used.  Our main theorem is as follows.

\begin{thm}\label{thm:main_intro} 
With high probability, the following holds for all $x$ that
satisfy  $\|\Phi x\|_\infty \le \mu < 1$ where $\mu$ is a fixed
constant. 
Let $q:=\mathcal{Q}(\Phi x+e)$,  where  $\|e\|_\infty\le \varepsilon$ for some $0\leq	\varepsilon<1-\mu$.  Then the solution $\hat{x}$ to \eqref{eq:opt_int} satisfies
\begin{equation}\label{eq:err01}
\|x-\hat{x}\|_2 \lesssim \Big(\frac{L}{m}\Big)^{r/2-3/4} + \sqrt{\frac{m}{L}}\varepsilon + \frac{\sigma_k(x)}{\sqrt{k}},
\end{equation}
for all $k$ satifying $L\ge C_0 k \log N$ where $C_0$ is a constant
that depends on $\Phi$ and $r$.  

\end{thm}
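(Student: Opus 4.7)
The plan is to reduce the theorem to a standard BPDN-type analysis for the composite measurement operator $\tilde{\Phi} := B D^{-r} \Phi$, suitably renormalized. I would first verify that the true triple $(x,u,e)$ is feasible for \eqref{eq:opt_int}. The equality constraint is immediate from the Sigma-Delta relation $q = \Phi x + e - D^r u$, and $\|e\|_2 \le \sqrt{m}\,\varepsilon$ follows from $\|e\|_\infty \le \varepsilon$. The bound $\|Bu\|_2 \le 3Cm$ follows by combining the stability estimate $\|u\|_\infty \le C$ for the $\Sigma\Delta$ state vector (so $\|u\|_2 \le C\sqrt{m}$) with the operator-norm concentration $\|B\|_{\mathrm{op}} \lesssim \sqrt{m}$, which holds with high probability for a Bernoulli matrix when $L \le m$.

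Once feasibility is in hand, the $\ell_1$-optimality of $\hat{x}$ gives $\|\hat{x}\|_1 \le \|x\|_1$, and subtracting the equality constraints satisfied by the two triples yields
\[
\tilde{\Phi}(\hat{x} - x) = B(\hat{u}-u) - BD^{-r}(\hat{e}-e).
\]
From this I would derive a measurement-residual bound of the form $\|\tilde{\Phi}(\hat{x}-x)\|_2 \lesssim Cm + \eta(\varepsilon)$, where the first term comes from the $\|B\widetilde u\|_2 \le 3Cm$ constraint applied to both $u$ and $\hat u$, and $\eta(\varepsilon)$ is a bound on $\|BD^{-r}(\hat{e}-e)\|_2$ obtained by combining the noise constraint $\|\widetilde e\|_2 \le \sqrt m\,\varepsilon$ with a matrix-concentration estimate for the operator $BD^{-r}$ (whose typical behavior is governed by the polynomial variance profile of its entries inherited from $D^{-r}$).

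The heart of the proof is a restricted-isometry-type lower bound
\[
\|\tilde{\Phi} z\|_2 \gtrsim \alpha\|z\|_2 \quad \text{for all } z \in \Sigma_{2k}^N,
\]
with $\alpha = \alpha(L,m,r)$ chosen so that dividing the measurement residual by $\alpha$ reproduces the quantization term $(L/m)^{r/2-3/4}$ and the noise term $\sqrt{m/L}\,\varepsilon$. Following the strategy of \cite{SWY15}, I would use the singular value decomposition of $D^{-r}$ to split its action into a low-dimensional ``head'' carrying its largest singular values and a complementary ``tail'' of small operator norm. On the head, sub-Gaussian concentration for $\Phi$ combined with a Johnson--Lindenstrauss-type embedding by $B$ produces an approximate isometry on $k$-sparse vectors provided $L \gtrsim k\log N$; on the tail, an operator-norm estimate yields a controllable perturbation. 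Choosing the truncation level to balance these two contributions is the main obstacle, and it is precisely this balance (between the sub-Gaussian RIP threshold for $B$ and the polynomial singular-value structure of $D^{-r}$) that produces the specific exponent $r/2 - 3/4$ in the quantization term.

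Given these ingredients, the conclusion follows from the standard $\ell_1$-minimization cone/tube argument: the sparsity optimality $\|\hat{x}\|_1 \le \|x\|_1$ combined with the RIP-type lower bound on the descent cone yields
\[
\|x - \hat{x}\|_2 \lesssim \frac{\|\tilde{\Phi}(x - \hat{x})\|_2}{\alpha} + \frac{\sigma_k(x)}{\sqrt{k}},
\]
and substituting the residual and $\alpha$ estimates gives the three-term bound claimed in the theorem.
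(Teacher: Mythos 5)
Your overall strategy — verify feasibility of the true triple, bound the residual $\|BD^{-r}(\Phi(\hat x - x) + (\hat e - e))\|_2$ via the constraints, then exploit a lower bound on $BD^{-r}\Phi$ restricted to sparse vectors and close with a Foucart-type $\ell_1$ argument — is indeed the skeleton of the paper's proof. But there is a concrete gap in how you propose to handle the noise, and your mechanism for the RIP-type lower bound is left vague at the precise point where the work lies.

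On the noise term: you propose to bound $\|BD^{-r}(\hat e - e)\|_2$ separately, using the constraint $\|\tilde e\|_2 \le \sqrt m\,\varepsilon$ together with an operator estimate for $BD^{-r}$. This cannot produce the claimed term $\sqrt{m/L}\,\varepsilon$: the operator norm of $BD^{-r}$ scales like $\sqrt{mL}\cdot m^r$ (driven by the top singular value of $D^{-r}$), so a direct bound gives $\|BD^{-r}(\hat e - e)\|_2 \lesssim m^{r+1}\varepsilon$, and after dividing by $\alpha \asymp m(m/L)^{r/2-3/4}$ you are left with a noise contribution that grows polynomially in $m$ rather than decaying like $\sqrt{m/L}$. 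The paper avoids this entirely by never separating $\Phi h$ from $v:=e-\hat e$ before passing through the SVD: writing $BD^{-r} = TSR^T$, it lower-bounds $\|S(R^T\Phi h + R^T v)\|_2 \ge \sigma_L(BD^{-r})\,\sqrt L\,\|\frac{1}{\sqrt L}(\widetilde\Phi_L h + (R^T)_L v)\|_2$ and only then uses the reverse triangle inequality together with the crucial fact that $(R^T)_L$ is a restriction of an orthonormal matrix, so $\|(R^T)_L v\|_2 \le \|v\|_2 \le 2\sqrt m\,\varepsilon$ with no $m^r$ amplification. Your route, as stated, loses this cancellation.

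On the restricted-isometry step: you gesture at a head/tail decomposition of $D^{-r}$ and acknowledge that "choosing the truncation level to balance these two contributions is the main obstacle," but you do not resolve it. The paper instead works directly with the SVD of $BD^{-r}$ and invokes two black-box lemmas from \cite{KSY13}: Corollary \ref{cor:sinvalue}, which gives $\sigma_L(BD^{-r}) \ge \sqrt m\,(m/L)^{r/2-1/4}$ with high probability (a genuinely non-trivial anisotropic smallest-singular-value estimate), and Lemma \ref{lm:RIP}, which gives the RIP for $\frac{1}{\sqrt L}V^T\Phi$ when $V$ is a tall orthonormal matrix. Combining these two with the SVD produces the exponent $r/2-3/4$ cleanly, and, as noted above, simultaneously supplies the orthonormal $(R^T)_L$ factor needed for the noise term. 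Your intuition about where the exponent comes from is correct, but without these (or equivalent) ingredients the argument does not close.
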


\begin{remark}Examining \eqref{eq:err01}, our results are meaningful for quantizers of order $r\ge 2$
  as that ensures the exponent $r/2-3/4$ in \eqref{eq:err01} is positive. When
  $r=1$, we have an alternative approach that yields analogous results
  with an improved exponent $r/2-1/4$, though only in the strictly
  sparse and noiseless case. See Section~\ref{sec:main_results}, Theorem \ref{thm:impv}. 
\end{remark}

The error bound in Theorem \ref{thm:main_intro} is comparable to the analogous bound obtained in \cite{SWY15}, which studied $r$-th order $\sd$ quantization \emph{without} encoding. In particular, the error bound associated with $r$-th order $\sd$ quantization from \cite{SWY15} differs from \eqref{eq:err01} only in the exponent associated with the $\frac{L}{m}$ term: $r-1/2$ in \cite{SWY15} versus $r/2 -3/4$ in our case. While \cite{SWY15} still obtained root-exponential error decay (in the bit-rate) by selecting an optimal order for the quantizer, here we obtain \emph{exponential} error decay \emph{for every} order $r\geq2$ by incorporating encoding, as we see in  Corollary \ref{corr:nearop} below. 
We omit technical details here for the sake of clarity, see Section~\ref{sec:main_results} for the full versions. 

\begin{cor}[\bf Exponential error decay and near-optimal encoding of sparse signals] \label{corr:nearop} In the noise-free case, i.e., when $\epsilon=0$, we have  
\begin{equation}\label{expdecaycomp}
\mathcal{D} \lesssim
2^{-\Cl{c_delta} \frac{\mathcal{R}}{k_0\log N}} +
\frac{\sigma_{k_0}(x)}{\sqrt{k_0}}
\end{equation}
where $k_0:=\lfloor \frac{L}{C_0 \log N}\rfloor$. Furthermore, if $x$
is $k$-sparse with $k\le k_0$, this shows that
$$\mathcal{D}\lesssim 2^{-\Cl{c_delta} \frac{\mathcal{R}}{k_0\log
    N}},$$
i.e., we have exponential accuracy. 
\end{cor}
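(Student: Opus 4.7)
The plan is to derive the corollary from Theorem~\ref{thm:main_intro} in the noiseless case by combining its pointwise error bound with a careful count of the codebook size, then tuning the quantizer order $r$ as a function of the resulting rate. Setting $\varepsilon=0$ and $k=k_0:=\lfloor L/(C_0\log N)\rfloor$ in \eqref{eq:err01} gives
$$\|x-\hat x\|_2 \lesssim (L/m)^{r/2-3/4} + \sigma_{k_0}(x)/\sqrt{k_0}.$$
The second summand is already in the form required by \eqref{expdecaycomp}, so everything reduces to turning $(L/m)^{r/2-3/4}$ into a quantity that decays exponentially in $\mathcal{R}/(k_0\log N)$.

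The next step is to control $\mathcal{R}=\log_2|\mathcal{C}|$. Since $q\in\A^m$ for a finite alphabet $\A$ and $D$ has integer entries, $BD^{-r}q$ lies in a fixed integer lattice inside $\R^L$. A triangle inequality together with the standard estimate $\|D^{-r}\|_{\infty\to\infty}\lesssim m^{r-1}$ (with an $r$-dependent constant) yields $\|BD^{-r}q\|_\infty\lesssim m^r$, so each of the $L$ coordinates of the codeword takes values in a set of size $\mathcal{O}(m^r)$. This gives the crude but serviceable estimate $\mathcal{R}\lesssim L r\log m$.

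I would then fix the oversampling at a constant ratio $m=\rho L$ for some $\rho>1$ compatible with the high-probability hypotheses of Theorem~\ref{thm:main_intro}, which converts the first summand into $\rho^{-(r/2-3/4)}$ and makes the rate bound $\mathcal{R}\lesssim L r\log L$. Solving for $r$ in terms of $\mathcal{R}$ and substituting back yields
$$(L/m)^{r/2-3/4}\lesssim 2^{-(\log_2\rho)r/2+O(1)}\lesssim 2^{-c\,\mathcal{R}/(L\log L)},$$
and since $L\asymp k_0\log N$ by the definition of $k_0$, the residual $\log L$ is of order $\log(k_0\log N)\lesssim\log N$ and can be absorbed into a new constant, producing $2^{-c_\delta \mathcal{R}/(k_0\log N)}$. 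Combining with the compressibility term establishes \eqref{expdecaycomp}; the second assertion follows because $\sigma_{k_0}(x)=0$ whenever $x$ is $k$-sparse with $k\leq k_0$.

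The hardest part is the bit-count step: the naive $m^r$ range per coordinate introduces a $\log m$ factor that must be folded into constants, which works cleanly only in the compressive regime $m\asymp L$. A sharper route would exploit sub-Gaussian concentration of $B$ to reduce the effective per-coordinate range to $\sqrt{m}$ times a polynomial in $r$, eliminating the need for the worst-case $m^r$ bound. A secondary technical worry is that the prefactors suppressed by $\lesssim$ in \eqref{eq:err01}---notably the $\sd$ stability constants---depend on $r$, and one must ensure that their growth does not overwhelm the geometric factor $\rho^{-r/2}$ once $r$ is taken $\asymp \mathcal{R}/L$.
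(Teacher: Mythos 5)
Your proof takes a genuinely different route from the paper's, and that route does not actually work: you fix the oversampling ratio $m/L=\rho$ and tune the $\sd$ order $r$ as a function of $\mathcal{R}$, whereas the paper (in Corollary~\ref{cor:RD}) fixes $r\ge 2$ once and for all and lets $m\to\infty$. The whole point of the encoding stage is that it decouples the bit rate from the number of measurements: with $\mathcal{R}= L(r+1)\log_2 m + L\log_2(2K)$, raising $m$ costs only $O(L\log m)$ bits while the error decays polynomially in $m$. Solving $m\approx (2^{\mathcal{R}/L}/(2K))^{1/(r+1)}$ and substituting into $(L/m)^{r/2-3/4}$ gives $\mathcal{D}\lesssim 2^{-\frac{r/2-3/4}{r+1}\cdot\mathcal{R}/L}$ directly; no tuning of $r$ is ever needed, and the multiplicative prefactor $L^{r/2-3/4}(2K)^{(r/2-3/4)/(r+1)}\delta$ is a genuine constant because $L$ is held fixed as $\mathcal{R}$ grows.

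Your $r$-tuning approach has two concrete problems. First, even granting the bit count, you arrive at $\mathcal{D}\lesssim 2^{-c\,\mathcal{R}/(L\log L)}$; since $L\asymp k_0\log N$, the $\log L\asymp \log(k_0\log N)$ factor cannot be ``absorbed into a new constant'' --- it grows with $N$, so your exponent is smaller than the target $\mathcal{R}/(k_0\log N)$ by this $\log L$ factor, yielding a strictly weaker statement. Second, and more fatally, the ``secondary technical worry'' you flag is not secondary --- it is precisely where the argument breaks. The constant $\Cr{d3}$ in \eqref{eq:err0} scales with the $\sd$ stability constant $\gamma(r)$, and for coarse (fixed-alphabet, e.g.\ one-bit) quantizers the paper notes $\gamma(r)\lesssim C^r r^r\delta$. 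With $m/L=\rho$ fixed, the true error is $\gamma(r)\,\rho^{-(r/2-3/4)}\gtrsim (Cr/\sqrt\rho)^{r}$, which \emph{diverges} as $r\to\infty$. This is exactly why the earlier work \cite{SWY15} without encoding only achieves root-exponential decay by optimizing $r$, and it is why the paper's mechanism --- exponential accuracy \emph{for every fixed $r\ge 2$} by varying $m$ --- is the new contribution. Replacing coarse schemes by greedy $\sd$ schemes fixes the stability constant but blows up $K\approx 2^r$, which just moves the $r$-dependence into the rate and still leaves you with the $\log L$ deficit above.
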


\begin{remark}[\bf One-bit quantization] Earlier use of $\SD$ quantization in the compressed sensing setup \cite{GLPSY13, KSY13} was restricted to multi-bit alphabets. This was primarily due to the decoder proposed in \cite{GLPSY13} incorporating a support-recovery stage. As our decoder is now solely based on solving the optimization problem \eqref{eq:opt_int}, it allows one-bit compressed sensing albeit with a $\SD$ quantizer. The advantage is that Theorem \ref{thm:main_intro} and Corollary \ref{corr:nearop} apply and we therefore have \emph{exponential error decay with a one-bit quantizer.} 
\end{remark}

 \begin{remark}[\bf Near-optimal \emph{compressive} encoding] As seen in \eqref{expdecaycomp},
    in the no-noise setting the error decays exponentially fast in the bit rate until it hits the
   best $k_0$-term approximation error. Noting that the AIC scheme we propose effectively reduces
   the number of measurements to $L$, consider the same
   scheme but replace the quantization with the identity map (i.e.,
   do not quantize). We now have a compressive sampling scheme with an
   $L\times N$ measurement matrix. Accordingly, the classical CS results (e.g., \cite{CRT05}) can at
   best guarantee a reconstruction error of
   $\sigma_{k_0}(x)/\sqrt{k_0}$ where $k_0$ is as in
   Corollary~\ref{corr:nearop}. Our scheme approaches this optimal
   lower bound \cite{cohen2009compressed} exponentially fast in $\mathcal{R}$. 
\end{remark}

\begin{remark}[\bf Near-optimal encoding for the $w\ell_p$ ball] The rate-distortion relationship when encoding the unit $w\ell_p$ ball $B_{w\ell_p}$ (using any method) satisfies 
\begin{equation}\mathcal{D}(B_{w\ell_p}) \gtrsim \left( \frac{1}{\mathcal{R}} \log(N/\mathcal{R}+1) \right)^{1/p-1/2} \quad \text{whenever}  \quad \log_2 N \leq \mathcal{R}\leq N.\label{eq:wlp_lower}\end{equation}
In fact, this bound is optimal for the unit $\ell_p$ ball (cf. \cite{edmunds1996function,kuhn2001lower, schutt1984entropy}), which is slightly smaller than $B_{w\ell_p}$. Note that the range of $\mathcal{R}$ given above is appropriate for our setting. When $\mathcal{R}$ is smaller, one cannot obtain meaningful bounds. On the other hand, when $\mathcal{R} > N$, the optimal error bound (which is attained by directly encoding the coefficients of $x$, say using MSQ) decays exponentially in $\mathcal{R}$.  With our AIC, using the observation $\frac{\sigma_k(x)}{\sqrt{k}} \lesssim k^{1/2-1/p}$ when $x$ is in $B_{w\ell_p}$, we can show that an appropriate choice of $L$ and $m$ yields the rate-distortion relationship
\begin{equation}\label{eq:wlp_upper}
\mathcal{D}(B_{w\ell_p}) \lesssim  \left(\frac{\log N \log(\mathcal{R}/\log N- \log \log N)}{\mathcal{R}-\log N \log\log N}\right)^{1/p-1/2}.
 \end{equation}
Whenever $\mathcal{R}\gtrsim \log N \log\log N$, this yields
\[\mathcal{D}(B_{w\ell_p}) \lesssim \left(\frac{\log N\log (\mathcal{R}/\log N )}{\mathcal{R}}\right)^{1/p-1/2}.\]
Except for a $\log (\mathcal{R}/\log N)$ factor, this is optimal.
\end{remark}

\begin{remark}[\bf Robustness] $\SD$ quantizers are robust to certain circuit imperfections which makes them popular in practical applications (see, e.g., \cite{daub-dev, NST96}). The proposed AIC inherits all such favorable properties of $\SD$ schemes. 
\end{remark}

\begin{remark}[{\bf Bounded measurements}] The condition $\|\Phi x\|_\infty \leq \mu <  1$ (from Theorem \ref{thm:main_intro}) is a natural one in any quantization context, as it ensures that finite alphabets can be used for quantization. Moreover, there are many regimes where such a condition is easily satisfied.  For example, when the
  entries of $\Phi$ are bounded random variables, 
  $\|\Phi x\|_\infty < \mu$ can be guaranteed independently of $m$ and
  $k$.  This is the case when $\Phi$ is a Bernoulli matrix with
  i.i.d. $\pm1$ entries, and $x\in
  \mathcal{B}:=\mu B_{\ell^N_1}$ where $B_{\ell^N_1}$ denotes the unit
  $\ell_1$ ball in $\R^N$. Similarly, when the entries of $\Phi$ are sub-Gaussian random variables and $x$ is drawn from the $w\ell_p$ ball, the boundedness of $\Phi x$ can be guaranteed with high probability (and with probability $1$ if $\Phi$ is, say, Bernoulli). 
 \end{remark}

\subsection{Roadmap}
In Section \ref{sec:setup}, we discuss the motivation for studying quantization and encoding problems in the compressed sensing setup. Section \ref{sec:prior} discusses the prior work in this area,  while Section \ref{sec:encoding} describes the proposed AIC in detail. Section \ref{sec:main_results} provides the technical statements of our main results, the proofs of which are provided in Sections \ref{sec:proof1} and \ref{sec:proof2}.


\section{The quantization and encoding problems in compressed sensing}\label{sec:setup}
Optimal encoding of signals in a given compact metric space $(\mathcal{X},d)$ is
an approximation theoretic problem. Given an acceptable approximation
error margin $\epsilon$, one seeks to cover $\mathcal{X}$ with the smallest
number of $\epsilon$-balls associated with the metric $d$. This
number, denoted by $\mathcal{N}(\mathcal{X},d,\epsilon)$, is called the covering
number of $\mathcal{X}$. In essence one can then \emph{encode} $x\in \mathcal{X}$ using
$\log_2 \mathcal{N}(\mathcal{X},d,\epsilon)$ bits (this quantity is called the
Kolmogorov $\epsilon$-entropy of $\mathcal{X}$) by mapping it to the center of
an $\epsilon$-ball in which $x$ lies. The set $\mathcal{C}$ of all
such centers (or ``codewords'') is called the codebook of the encoder. Clearly, the resulting
approximation error (or ``distortion'') is a decreasing function of
the number of bits required to encode a signal (``rate''), thus there
is a rate-distortion trade-off. For example, for the set of bounded $k$-sparse signals $\mathcal{X}=\Sigma_k^*:=\{x\in\R^N, \|x\|_2 \leq 1, |\supp{x}| \leq k\}$ and the $\ell_2$ metric, the optimal rate distortion relationship \cite{BB_DCC07} (see also \cite{BJKS15}) is: 
\begin{equation}\mathcal{D}_{\Sigma_{k}^{*}}(\mathcal{R}) \gtrsim \frac{N}{k}2^{-\frac{\mathcal{R}}{k}}.\label{eq:sparse_lower_bound}\end{equation} This entropy based approach to encoding, while useful for
 providing the optimal rate-distortion relationship of a given
signal class, is not practicable in the compressed sensing scenario (among others) for various reasons:

\begin{itemize}
\item \emph{It assumes direct access to $x$.} This rules out the
  compressed sensing setup because recovering $x$ from \emph{unquantized}
  measurements requires implementing a compressed sensing decoder on
  analog hardware. As such decoders involve solving large scale
  convex optimization problems or implementing greedy algorithms, they
  are not amenable to analog computation. 
\item \emph{It suffers from the curse of dimensionality.} Even if we
  ignore the above issue, as $\epsilon$ decreases,
  $\mathcal{N}(X,d,\epsilon)$ typically grows fast. For example,
  $\mathcal{N}(B_2^N,d_2,\epsilon)$ scales like $\epsilon^{-N}$ where
  $B_2^N$ denotes the unit $\ell_2$-ball of $\R^N$ and $d_2$ is the
  metric induced by the $\ell_2$ norm. Accordingly, to achieve a
  distortion of $\epsilon$ one must enumerate and accurately store the
  $O(\epsilon^{-N})$ points in the codebook and subsequently compute
  the distance of arbitrary points to them. This can quickly become
  prohibitive as $\epsilon$ decreases.
\item \emph{It is not robust with respect to hardware imperfections.}
  To correctly assign a signal $x$ to a codeword,  the hardware must
  distinguish analog values that are separated by $O(\epsilon)$. This
  is expensive and quickly becomes prohibitive as $\epsilon$
  decreases. Furthermore, even a small error in the comparison may
  lead to selecting the ``wrong'' codeword, and hence an error exceeding
  $\epsilon$.   
\end{itemize}

A practical quantization scheme in the CS setting must
avoid the issues listed above. Moreover, depending on the 
implementation details, it may be desirable (or even crucial in some cases) that the quantizer possess the following
properties. 

\begin{enumerate}[(P1)]
\item \emph{It should be compatible with the requirements of (say, state-of-the-art) analog to digital conversion.} For example, the scheme should not store more than a few analog quantities, or store them for too long, or require sophisticated analog computation. 
\item \emph{It should admit a computationally efficient reconstruction algorithm, or decoder.}
\item \emph{It should be universal.} The quantizer should not use any prior
  information about the measurement scheme or the signal. In
  particular, the quantization scheme should be a ``black
  box'' that can be placed after any CS measurement system. 
\item \emph{It should be causal.} Many important CS systems such as the
  single-pixel camera \cite{duarte2008single} and those based on coded-aperture imaging (see, e.g., \cite{willett2011compressed, wagadarikar2008single}) obtain the measurements
sequentially. In such instances,  a quantizer should not assume
knowledge about ``future'' measurements.     
\item \emph{It should be progressive.} It should be able to
  incorporate any additional measurements to improve the approximation
  accuracy.
\item \emph{It should be coarse.}  Given a fixed quantization
  alphabet, it should allow an arbitrarily accurate approximation to
  the original signal by increasing the number of
  measurements. 
\item \emph{It should be robust.} Any quantization scheme must involve
  certain arithmetic and Boolean operations, e.g., addition and
  comparison with a reference value. These operations cannot be
  implemented with infinite accuracy on analog circuits because of
  physical limitations (see, e.g., \cite{NST96}).  Thus, a practical quantization scheme must be
  robust with respect to such imperfections.
 \end{enumerate}
Given a practical quantizer, often one can incorporate an encoding stage and reduce the total number of bits used to represent the signal. With such an encoding stage, the rate-distortion trade-off can be observed by considering the distortion $\mathcal{D}$ (as in \eqref{eq:worst_case}) as a function of the final bit-rate after encoding, i.e., $\mathcal{R}$ as in \eqref{eq:rate}. 
This is the approach we follow in this paper and it stands in contrast to the case where no encoding is present (as in, e.g., \cite{GLPSY13,SWY15,baraniuk2014exponential}) and the final bit-rate is a constant multiple of the number of measurements.


\section{Relevant prior work}\label{sec:prior}
There has been growing interest in quantization for compressed
sensing, which has resulted in a number of important
contributions. Broadly speaking, the techniques proposed in the
literature fall in one of the two main quantization paradigms: fine
quantization or coarse quantization. In \emph{fine quantization}, one
achieves the desired accuracy by refining the finite quantization
alphabet $\A$, or, equivalently, reducing the quantization step size
$\delta$ (defined as the largest gap between two consecutive elements
of $\A\subset\R$). In this case, it is easy to obtain exponential
accuracy in terms of the bit budget as one can use $n$ additional bits
to reduce the stepsize $\delta$ by a factor of $2^{n}$. One can then
use any quantization method that ensures $\|y-q\|_\infty \lesssim
\delta$; consequently, any robust CS recovery algorithm will yield an
approximation with accuracy on the order of $\delta$.  Noting that
these small step sizes need to be accurately implemented on analog
hardware, a major shortcoming of fine quantization algorithms lies in
the difficulty in (and sometimes the impossibility of) reducing
$\delta$ to sufficiently small values due to physical constraints.
Therefore, quantizers with small step size $\delta$ are more expensive
and there is a physical lower bound on how small $\delta$ can be. On
the other hand, in \emph{coarse quantization} one uses a fixed
alphabet, possibly as coarse as 1-bit, and improves accuracy by
increasing the number of measurements. Accordingly, coarse quantizers
are typically cheap to implement robustly on analog hardware. However,
obtaining nearly optimal rate distortion characteristics, i.e.,
exponential decay of approximation error as a function of the bit
budget, is highly non-trivial (see, e.g., \cite{CGKSY15}).  We now
provide a brief (non-exhaustive) overview of the literature that is
most related to our work. We focus primarily on memoryless scalar
quantization and on $\sd$ quantization but we also give some attention
to the special case of one-bit quantization due to the attention it
has recently received.  More detailed reviews can be found in
\cite{boufounos2014quantization,CGKSY15}.

We begin by observing that many of the quantizers studied in the literature
fall under the umbrella of memoryless scalar quantization, while only some
use a noise-shaping (see, e.g., \cite{CGKSY15}) approach, of which $\sd$ quantization is an example.  All these
quantizers can be used within the fine quantization paradigm as one
can make the quantizer step size as small as the desired
reconstruction accuracy demands require. On the other hand, noise
shaping quantizers are much better suited to the coarse quantization
approach as sufficiently increasing the number of measurements can
meet any reconstruction accuracy demand without a need to change the
step size-- see below for more details.

\subsection{Memoryless scalar quantization}
Memoryless scalar quantization (MSQ) is possibly the simplest (but
certainly not the most efficient) way to quantize compressed sensing
measurements. Given an alphabet $\mathcal{A}\subset \R$, MSQ applies
\emph{scalar quantization} to each measurement independently by
replacing the measurement by the element of $\A$ nearest to it. More
precisely, define the scalar quantizer associated with $\mathcal{A}$,
$Q_\A: \R \to \A$ by
 \begin{align*}
 Q_\A(z)& \in\arg\min\limits_{v\in\A}|v-z| \label{eq:scalar_quantizer}
\end{align*} 
(if there are two minimizers $v_1,v_2$, pick $Q_\A(z)$ to be any one
of these) and denote by $y=\Phi x$ the compressive measurements. We
define the MSQ operator $Q_{\mathcal{A}}^{MSQ}: \Phi(\mathcal{X})
\rightarrow \mathcal{A}^m$ by
\begin{align*}
q=Q_{\mathcal{A}}^{MSQ}(y), \textrm{\quad where \quad} q_i=Q^{}_\mathcal{A}(y_i).
\end{align*}
Here, $\mathcal{X}\subset \R^N$ is the space of signals and $\Phi$ is the compressed sensing matrix. 

Usually the set $\A\subset \R$ is centered around zero and consists of elements $v$ separated by a quantization step size $\delta$.  In this case, provided the measurements $y_i$ are appropriately bounded, we have $|y_i-Q_\mathcal{\A}(y_i)| \leq \delta/2$. Thus, one approach to decoding MSQ-quantized CS measurements is to treat the quantization error $y-q$ as bounded measurement error (as considered in say, \cite{CRT05}) and to approximate $x$ from $q=Q_{\mathcal{A}}^{MSQ}(y)$ using the solution to the so-called Basis Pursuit De-Noising (BPDN) optimization problem as in \cite{Donoho2006_CS, CRT05}. This yields the decoder $\Delta_{\textrm{BPDN}}$, given by
\[
\Delta_{\textrm{BPDN}}(q):=\arg\min\limits_z \|z\|_1 \textrm{ subject to } \|\Phi z-q\|_2 \leq \delta\sqrt m/2.
\] 
Provided $\Phi$ is an appropriately chosen matrix (for example, satisfying the restricted isometry property \cite{CRT05})\footnote{The standard definition of the  restricted isometry property of a matrix $\Phi$ normally requires the columns of $\Phi$ to be normalized. Since we are interested in the quantization context where the number of measurements (hence the number of rows of $\Phi$) is variable, we do not normalize the columns of $\Phi$. 
},  the reconstruction error obeys
\begin{equation}\label{eq:noise}
\|\hat{x}-x\|_2 \lesssim \delta/2+ \frac{\sigma_k(x)}{\sqrt k}, 
\end{equation}
with $\hat{x}=\Delta_{\textrm{BPDN}}(q)$ and  $\sigma_k(x)=\min_{z\in \Sigma_k} \|x-z\|_1 $ being the $\ell_1$ error associated with the best $k$-term approximation of $x$. Note that, in the case of sparse signals, replacing $\delta$ by $2^{-n}\delta$ in \eqref{eq:noise} requires $n$ additional bits. Thus MSQ yields exponential accuracy \emph{when considered within the fine quantization paradigm}. On the other hand, 
once the quantization alphabet $\mathcal{A}$ is fixed (e.g., when the AIC hardware is fixed), the step-size $\delta$ is also fixed. In this case we are in the coarse quantization paradigm and the rate-distortion relationship associated with 
MSQ quantization and BPDN reconstruction of $k$-sparse signals is given by $$\mathcal{D}(\mathcal{R}) = \text{Constant}.$$
This is far from satisfactory as increasing the number of measurements, hence the rate, does not decrease the distortion. Moreover, the lower-bound for MSQ when $\delta$ is fixed (which is associated with optimally decoding MSQ quantized measurements) is only slightly better. It satisfies 
\begin{equation}
\mathcal{D}^{MSQ}(\mathcal{R}) \gtrsim \frac{k}{\mathcal{R}}.
\label{eq:lower_bound_MSQ}
\end{equation}
as derived using a frame-theoretic argument \cite{GVT98} (cf. \cite{boufounos2014quantization}). 
Consequently, even with a decoder that is optimal for MSQ-quantized CS measurements, one cannot hope to achieve the exponential rate-distortion relationship \eqref{eq:sparse_lower_bound} associated with entropy based encoding of sparse signals.  Nevertheless, there has been much work (e.g., \cite{zymnis2010compressed,QIHT,Jacques2010,Jacques2013}) focused around proposing  decoders to improve the reconstruction error associated with MSQ quantization of compressed sensing measurements and approach the lower-bound \eqref{eq:lower_bound_MSQ}.

\subsection{Sigma-Delta quantization for compressed sensing}\label{sec:SD}

\subsubsection*{Sigma-Delta quantization}\label{sec:SD} Let $y=\Phi x\in \R^m$ be as above. The simplest $\sd$ quantizer, known as the first order greedy $\sd$ scheme, maps $y$ to $q\in \A^m$ by running the iteration
 \begin{align}\label{eq:SD_greedy}
q_i &= Q_\A\left(y_i + u_{i-1}\right), \notag \\
(\Delta u)_i& := u_i-u_{i-1}=   y_i - q_i.
\end{align}
It simply consists of scalar quantizing the sum of the current measurement $y_i$ with a state variable $u_{i-1}$, and subsequently updating the state variable. More generally, a generic $r$th order $\sd$ quantizer maps $y$ to $q\in \A^m$ by running the iteration
\begin{align}
q_i &= Q_\A\left(\rho_r(u_{i-1},\dots,u_{i-r},y_i,\dots,y_{i-r+1})\right), \notag \\
(\Delta^r u)_i& =   y_i - q_i.
\label{equ:rthOrdu_gen}
\end{align}
Above the $r$th order difference operator $\Delta^r$ is defined via $\Delta^r(u):= \Delta(\Delta^{r-1} u)$, and the vector $u=(u_i)_{i=1}^m$ is called the state vector. It is typically ``initialized to zero'', i.e.,  $u_i=0$ for $i\le 0$. Similarly, the ``input'' $y$ is also initialized to zero. Note that in this case, the relationship between the vectors $u$, $y$, and $q$ can be described by the matrix equation 
\begin{equation}\label{eq:u}
y-q=D^r u
\end{equation}
where $D$ is the difference matrix defined in Section~\ref{sec:intro}. 

\subsubsection*{$\sd$ quantization and stability}The function $\rho_r$ in \eqref{equ:rthOrdu_gen} is called the quantization rule and is chosen to ensure that the $\sd$ scheme is \emph{stable}, i.e., there exist universal constants $\beta$ and $\gamma$ (independent of $y$ and $m$) such that $\|u\|_\infty \le \gamma$ whenever $\|y\|_\infty\le \beta$. Stability plays an important role in practice and also in the analysis of $\sd$ schemes due to the appearance of terms involving $u$ in error estimates. There are two main approaches for designing a quantization rule that ensures stability of an $r$th-order scheme. An $r$th-order \emph{greedy} $\sd$ quantizer  uses the quantization rule
\begin{equation}
\rho_r(u_{i-1},\dots,u_{i-r},y_i,\dots,y_{i-r+1}):=Q_\A\left(\sum\limits_{j=1}^r (-1)^{j-1} \binom{r}{j} u_{i-j}+y_i\right).
\end{equation}
where the alphabet $\A$ is tailored to the order $r$. A typical choice
for $\A$ is the \emph{$K$-level midrise alphabet with step size
  $\delta$} given by  
\begin{equation} \label{eq:alph}
\mathcal{A}^K_{\delta}:=\left\{\pm (j-1/2)\delta, 
j\in\{1,...,K\}\right\}.
\end{equation}
In this case, one chooses $K$ as a function of
$r$. Specifically, if $\|y\|_\infty \le \beta$, it is sufficient to
choose $$K\ge 2\big\lceil \frac{\beta}{\delta}\big\rceil+2^r+1$$ as such a
choice yields a stable $r$th-order $\sd$ scheme with stability
constant $\gamma=\delta/2$.  On the other hand, \emph{coarse $\sd$ quantizers} use a fixed alphabet $\A$ regardless of the order $r$, e.g., $\A=\{\pm 1\}$ or $\A=\A_\delta^K$ with $\delta$ and $K$ fixed. Designing families of stable $\sd$ schemes of arbitrary order is highly non-trivial, e.g., \cite{daub-dev,G-exp,DGK10}. We will use the schemes that were originally proposed in \cite{G-exp} and refined in \cite{DGK10} in the setting of 1-bit quantization. 
These $r$th-order coarse schemes use the alphabet $\A_\delta^K$ and produce state vectors $u$ that satisfy $$\|u\|_\infty \lesssim C^r r^r \delta$$ whenever $\|y\|_\infty \le (K-1/2)\delta$.


Originally proposed to quantize oversampled bandlimited functions by Inose and Yasuda \cite{inose1963unity}, cf. \cite{NST96}, $\sd$ quantization has been shown to be well suited for quantizing redundant frame expansions. This fact holds both in finite dimensions (e.g., \cite{benedetto2006sigma, BP07, BLPY10, KSW12, GLPSY, KSY13} and infinite dimensions \cite{daub-dev, G-exp, DGK10}. In each case, $\sd$ quantizers ``shape'' the quantization error such that a significant portion of the error energy falls into the kernel of the corresponding decoder. For example, in the case of oversampled bandlimited functions, the decoder can be described as the convolution of the quantized samples with an appropriate low-pass filter \cite{daub-dev}.  In the case of finite frames, on the other hand, a typical decoder is given by applying an alternative dual\footnote{We view a frame for $\R^d$ as a full-rank  $m\times d$ matrix $E$ where $m\geq d$ and we identify any left inverse of $E$ with a so-called \emph{dual frame} of $E$. In particular, the \emph{canonical} dual is identified with the Moore-Penrose pseudo-inverse $E^\dagger:=(E^*E)^{-1}E^*$, while the Sobolev dual is given by $(D^{-r}E)^\dagger D^{-r}$. } known as the Sobolev dual operator \cite{BLPY10}.

\subsubsection*{In the compressed sensing setting}It was recently shown by \cite{GLPSY}, cf. \cite{KSY13}, that $\sd$
quantization can be also used in the compressed sensing setting. The
idea is that if the signal $x$ is strictly sparse with support $T$,
$y=\Phi x= \Phi_Tx_T$ which is the vector of frame coefficients of
$x_T$. Here $\Phi_T$ is the (analysis operator of the) corresponding
frame. Thus, \cite{GLPSY} proposed a two stage method where in the
first stage one recovers the support $T$ using any robust compressed
sensing decoder. In the second stage, one uses the Sobolev dual of
$\Phi_T$ to obtain a finer estimate of $x$. This two-stage method is
successful in the case of strictly sparse signals whose smallest
entries are as large as the quantizer step size. In particular, when the entries of $\Phi$ are Gaussian \cite{GLPSY} or sub-Gaussian \cite{KSY13} random variables, the two-stage reconstruction method produces an estimate $\hat{x}$ that satisfies
\begin{equation}
\|\hat{x}-x\|_2 \lesssim \big(\frac{m}{k}\big)^{-({r}/{2}-{1}/{4})}\delta. \label{eq:err_GLPSY}
\end{equation}
As in the MSQ case, since the error is proportional to the step-size $\delta$, one obtains exponential accuracy in the bit-rate \emph{within the fine quantization paradigm}. On the other hand, when $\delta$ is fixed, i.e., in the coarse quantization paradigm, \eqref{eq:err_GLPSY} shows that the approximation error decays polynomially in the number of measurements, hence the rate. When $r\geq 2$ this is a faster decay rate than that of MSQ, which is limited by \eqref{eq:lower_bound_MSQ}.
 However,   \eqref{eq:err_GLPSY} is contingent on the success of the support
recovery stage, which becomes problematic in the case of sparse vectors with
non-zero entries that are much smaller in magnitude than the quantizer
step size. For example, this rules out quantizers with low bit-depth such as one-bit
quantizers. In addition, the two-stage method is not well suited for
compressible signals and for noisy measurements.

To overcome all these issues, in recent work we proposed a
one-stage decoder for $\sd$-quantized compressed sensing measurements
\cite{SWY15}. This one-stage decoder is based on solving a tractable
convex optimization problem and it allows us to remove the above
mentioned size condition, thus allowing quantization alphabets as
coarse as one-bit. Furthermore, this decoder is stable and robust,
i.e., it can be used with compressible signals in the presence of
noise. It also yields an approximation error bound that decays polynomially in
the number of measurements, and this again outperforms the optimal error decay
associated with MSQ---as given in \eqref{eq:lower_bound_MSQ}---for
$\sd$ schemes of order $r\ge 2$.

\subsection{One-bit quantization for compressed sensing}
A ``one-bit" quantization scheme is one where the alphabet $\mathcal{A}$ contains only two elements, with the usual choice being $\A=\{\pm 1\}$. There are multiple approaches to one-bit quantization for compressed sensing, including those based on MSQ, $\SD$, and other noise-shaping techniques. Among these, one-bit MSQ (e.g., \cite{BB_CISS08,jacques2013robust,PV13}) has received significant attention, usually under the monicker "one-bit compressed sensing".  One-bit MSQ schemes with $\mathcal{A}=\{\pm 1\}$ produce quantized measurements $q_i=\textrm{sign}(\langle \phi_i, x \rangle )$ and have the advantage of being simple to implement. On the other hand, due to the minimal size of the alphabet there are unique challenges associated with one-bit MSQ.  For example, since constant multiples of $x$ all yield the same quantized measurements, magnitude information (i.e., $\|x\|_2$) is not retrievable. The goal is then to recover only the directional information $x/\|x\|_2$ as accurately as possible. Another challenge associated with one-bit MSQ is tractable decoding. In fact \cite{BB_CISS08}, which initiated this line of work, formulated a recovery algorithm for extracting the direction of sparse signals from one-bit, MSQ quantized, compressive measurements. However this algorithm did not have theoretical recovery guarantees.  Later, a decoder based on convex optimization was proposed for recovering $x/\|x\|_2$ in \cite{PV13} and a rate-distortion relationship \mbox{$\mathcal{D}(\mathcal{R}) \lesssim \mathcal{R}^{-1/5}$} was derived  (here, the distortion is measured using the magnitude-normalized signal and its approximation). Another issue worth mentioning here is that while multi-bit MSQ approaches to quantizing compressed sensing measurements generally allow for sub-Gaussian measurements, the one-bit MSQ setup is different. In particular, general sub-Gaussian measurements in this setting necessitate imposing restrictive assumptions on the signal class (e.g., requiring that the signal not be too sparse  \cite{ai2014one}). Thus, the sensing matrix $\Phi=[\phi_1,...,\phi_m]^T$ is usually restricted to be Gaussian.  

Nevertheless there has been progress in circumventing these issues, for example by deviating slightly from the one-bit MSQ paradigm as outlined in \cite{BB_CISS08}. For example, to circumvent the loss of magnitude information associated with one-bit MSQ, \cite{knudson2014one} added Gaussian (or constant) dither to the measurements so that $$q_i=\textrm{sign}(\langle \phi_i, x \rangle +b_i),$$ where $b_i$ is known. They also proposed techniques for decoding and proved the associated rate-distortion relationship $\mathcal{D}(\mathcal{R})\lesssim \mathcal{R}^{-1/5}$ with magnitude information now accounted for. Finally, we remark that due to the fundamental limitation \eqref{eq:lower_bound_MSQ} associated with MSQ, the exponent $-1/5$ in the rate-distortion relationship can at best be improved to $-1$.

The MSQ limitation \eqref{eq:lower_bound_MSQ}, along with the other issues associated mentioned above, motivated alternative approaches to one-bit quantization in the compressed sensing framework. One such approach, described in \cite{chou2013beta}, introduces a new quantization technique called ``distributed noise-shaping" and obtains a near optimal rate-distortion relationship (in the sense of \eqref{eq:sparse_lower_bound}) with a tractable decoder. The idea here is to replace the difference matrix $D$ in the $\SD$ approach with a block-diagonal matrix $H$. Each block of $H$ is a bidiagonal matrix with $1$ on the diagonal and $-\beta$, where $1 \leq \beta<2$, on the sub-diagonal. Thus, this method can be seen as a generalization of $\beta$-encoding techniques (see, e.g., \cite{daubechies2002beta, daubechies2006robust}). The block-structure of $H$  allows the quantization to be done in a distributed, rather than fully sequential, way.
The obvious advantage of this method is the near-optimal recovery guarantee that it achieves. On the other hand, as a coarse quantization method, this approach requires a large number of analog memory elements (on the order of $k\log N$) to handle additional measurements. 

Yet another one-bit quantization scheme in the compressed sensing setting was proposed in \cite{baraniuk2014exponential}. Here, the idea is to update the quantization thresholds adaptively (as noise-shaping techniques do) albeit by solving a sophisticated convex optimization problem or running a greedy algorithm within the quantization procedure. In particular, \cite{baraniuk2014exponential} proposes the quantization scheme
\begin{equation} q^{(j)} = \textrm{sign}(\Phi^{(j)}(x-x^{(j-1)})-2^{2-j}\tau_j), \label{eq:Bar_it}\end{equation}
where $\Phi^{(j)}$ is a sub-matrix of $\Phi$, $x^{(j-1)}$ is an estimate of $x$ and $\tau_j$ is the current value of the quantization threshold. Here, $x^{(j-1)}$  and $\tau_j$ are obtained by solving an intermediate convex optimization problem. To quantize the measurements of a single vector $x$, one needs to solve many such intermediate problems. Another important issue here is that as $j$ increases, both $\Phi^{(j)}(x-x_{j-1})$ and $2^{2-j}\tau_j$ decrease exponentially fast in $j$. This requires that the physical implementation of the $\textrm{sign}$ function in \eqref{eq:Bar_it} be able to accurately distinguish between very small negative and very small positive quantities. Such ``delicate" comparisons are typical of fine (rather than coarse) quantization schemes 
and are only physically possible up to a certain accuracy. 
In short, for the price of running a polynomial-time algorithm each time the thresholds are updated, \cite{baraniuk2014exponential} achieves exponential error decay in the bit-rate, when the signals are sparse. 

Finally, as noted in Section \ref{sec:SD}, \cite{SWY15} proposed using
  Sigma-Delta ($\Sigma\Delta$) quantization with a subsequent
  reconstruction scheme based on convex optimization. The approach in \cite{SWY15} allows one-bit quantization, provided the $\sd$ scheme is stable. For example, one could use the simple $1$st order greedy $\sd$ scheme in \eqref{eq:SD_greedy} with a one-bit scalar quantizer. One could also use any stable one-bit $r$th order scheme, such as those of \cite{G-exp, DGK10}.
  
 In particular \cite{SWY15}  proves that
  the reconstruction error due to quantization decays polynomially in
  the number of measurements. It  also applies to arbitrary signals,
  including compressible ones,  so it is robust. Moreover, it is stable in the presence of measurement noise.
  This approach, and its associated analysis applies to sub-Gaussian (including Gaussian and
  Bernoulli) random compressed sensing measurements. In this paper we build on \cite{SWY15} and show that by adding an appropriate encoding stage exponential error decay (in the number of bits) can be achieved without sacrificing stability or robustness. Moreover, the results still hold for sub-Gaussian measurements. 


\section{Encoding quantized compressive samples: exponential accuracy}\label{sec:encoding}
We now describe the proposed AIC in detail following the framework and
notation laid out in the Introduction. Thus, our scheme consists of a
compressive sampling stage followed by quantization and encoding
stages. Subsequently, the underlying signal is reconstructed via a
one-stage decoder. 
\medskip

\noindent {\bf Compressive sampling.} We assume that the signal of
interest is $x\in \R^N$. We use an $m\times N$ sub-Gaussian (e.g.,
Gaussian or Bernoulli) compressive sensing matrix $\Phi$. We denote
the rows of $\Phi$ by $\phi_i$ which we view as vectors in $\R^N$.
The resulting (possibly noisy) measurement vector is 
\begin{equation}\label{cs_noisy}
y=\Phi x+e
\end{equation}
with entries
$$y_i=\langle \phi_i,x\rangle+e_i,$$ 
where $1\le i\le
m$, $e$ denotes additive noise, and $|e_i|\le \epsilon$ for a known
$\epsilon\ge 0$. 
\medskip

\noindent{\bf Quantization.} We quantize the compressive measurement
vector $y$ using a stable, $r$th-order $\sd$ scheme (fine or coarse) with alphabet
$\A_\delta^K$ as defined in \eqref{eq:alph} -- see Section~\ref{sec:SD} for details.  
\medskip

\noindent{\bf Encoding.} 
As initially proposed in \cite{IS13} in the context of finite-frames, we encode the $r$-th order $\sd$-quantization $q\in\A^m$ via $$\mathcal{E}: q \mapsto BD^{-r} q.$$ 
Here, $B$ is an $L \times m$ matrix with i.i.d equiprobable Bernoulli random entries and $$m \geq L \geq  c k\log(N/k) $$ 
for an appropriate constant $c$ (see Section \ref{sec:main_results}). Thus, the encoding consists of first multiplying integer-valued (modulo $\delta/2$) vectors  $D^{-r}q$ by a Bernoulli matrix to reduce the dimension, and then assigning a binary label to the result. So, it is easily implementable in the digital domain. In short, it can be seen that the encoding map $\mathcal{E}: \A^m \to \mathcal{C}$ produces codewords in  $\mathcal{C}$ with $\log_2|\mathcal{C}|\ll m\log_2|\mathcal{A}| $ so the encoded measurements to be represented by $\log_2|\mathcal{C}|$ bits instead of the original $m\log_2|\mathcal{A}|$ bits. The goal of the decoder will be to ensure that this compression does not adversely affect the reconstruction quality.

Algorithm~\ref{alg1} illustrates the acquisition side of the proposed AIC. 
 Algorithm~\ref{alg1}  shows that the acquisition side of the AIC is causal
and progressive, i.e., each additional measurement $y_i$ is quantized
without a need to know the ``future'' measurements $y_j$, $j >
i$. Further, during the encoding, each $\sd$ quantized measurement $q_i$ is
incorporated to obtain an updated codeword. This process yields a progressively better
approximation for sufficiently large $i$, as our main theoretical results demonstrate. 

\begin{algorithm}
\renewcommand{\algorithmicrequire}{\textbf{Initialize:}}
\renewcommand{\algorithmicensure}{\textbf{Input:}}
\caption{Acquisition side of the proposed AIC.}
\begin{algorithmic}[1]\label{alg1}
\ENSURE Sub-Gaussian compressive sensing measurement vectors $\phi_i\in \R^N$
\ENSURE Bernoulli encoding matrix $B\in\{\pm 1\}^{L\times m}$ with columns $B_i\in \R^L$
\ENSURE Quantization alphabet $\A$
    \REQUIRE State variables $u_j=0$ and measurements $y_j =0$, for all $j\leq 0$.
    \REQUIRE $L$-dimensional codeword $c = 0$
\FOR {$i=1$ to $m$}
\STATE Obtain the (possibly noisy) compressive measurement: $y_i = \langle \phi_i, x \rangle + e_i$.
\STATE Quantize the measurement and update the state variable, for example using the $1$st order greedy $\sd$ scheme:  \begin{align*}
q_i &= Q_\A\left(y_i + u_{i-1}\right), \notag \\
u_i&=  u_{i+1} + y_i - q_i.
\end{align*} 
Alternatively, use a stable $r$th order scheme as in \eqref{equ:rthOrdu_gen}.
\STATE Update the encoding: 
$c \leftarrow c + B_i q_i$ (note that this is equivalent to setting $c=Bq$).

\ENDFOR
    \end{algorithmic}
\end{algorithm}

\medskip

\noindent{\bf Decoding:} The decoding is done via convex optimization. Specifically, in the absence of non-quantization noise, we compute the estimate
\begin{eqnarray}\label{eq:decoder_m}
& \hat{x} :=  \arg\min\limits_{z}\|z\|_1, \text{ subject to }  \|B D^{-r}(\Phi z-q )\|_2 \leq 3C m,
\end{eqnarray} 
where $C$ is a constant that depends on the $\sd$ quantizer. More generally, in the presence of bounded non-quantization noise $e\in\R^m$ satisfying $\|e\|_\infty\leq \epsilon$, our decoder computes an approximation $\hat{x}$ by solving
\begin{equation}\label{eq:opt}
(\hat{x},\hat{u}, \hat{e})=\arg\min \|\widetilde{x}\|_1, \text{  s.t. } \left\{\begin{array}{rcl} 
BD^{-r}(\Phi \widetilde{x}+\widetilde{e}) -B\widetilde{u} &=&BD^{-r}{q}\\
\|B\widetilde{u}\|_2 &\leq& 3Cm\\
\|\widetilde{e}\|_2 &\leq & \sqrt{m} \varepsilon.
\end{array}\right.
\end{equation}
Intuitively, this decoder makes sense as it produces an estimate that is consistent with the properties of the encoding, the quantization, and the noise. In particular, as we are quantizing the noisy measurements $\Phi x + e$, the stable $\sd$ quantizer will produce $q=\Phi x + e +D^r u $. Applying $BD^{-r}$ on both sides of \eqref{eq:opt} yields the first constraint. Moreover, due to stability of the $\sd$ quantizer we have $\|u\|_\infty \leq C$. This implies that $\|u\|_2 \lesssim \sqrt{m}$ and concentration of measure properties of Bernoulli matrices then yield $\|Bu\|_2 \lesssim m $, hence the second constraint. Finally, the third constraint in \eqref{eq:opt} is a direct consequence of $\|e\|_\infty \leq \epsilon$.

Theorem \ref{thm:main}, our main rate-distortion result, holds for $r\geq 2$, the decoder \eqref{eq:decoder_m}, and \emph{uniformly} for all appropriately bounded signals $x$, but does not cover the case $r=1$. To partially remedy this, we obtain a similar \emph{unifrom} result  (Theorem \ref{thm:impv}) that deals with the $r=1$ case, albeit for strictly sparse signals under noiseless measurements. Here our decoder is modified so that it now obtains an approximation $\hat{x}$ from the encoded measurements of $x$ by solving 
\begin{equation}\label{eq:noiseless}
\hat{x}=\arg\min \|\widetilde{x}\|_1, \text{  s.t. } \|BD^{-r}\Phi \widetilde{x}-BD^{-r} q\|_2\leq (2+\eta) \gamma(r)\sqrt {mL}.
\end{equation}
The constraint in the modified decoder \eqref{eq:noiseless} is based on a series of observations. First, a stable $\sd$ quantizer with stability constant $\gamma(r)$ produces $q$ satisfying $D^{-r}q:= D^{-r}\Phi x - u$ with  $$\|u\|_2 \leq \gamma(r)\sqrt{m}.$$ Second, the random matrix $B$ serves as a Johnson-Lindenstrauss embedding (see Lemma \ref{lm:ls_pts}) so that for a fixed finite set of signals $x$ the associated state variables $u$ satisfy $$\|Bu\|_2 \leq (1+\eta)\gamma(r)\sqrt{mL} 
$$ with high probability (that depends on $\eta$). Passing to arbitrary sparse signals $x$ (i.e., not just from the finite set), we replace the upper bound on $\|Bu\|_2$ by $(2+\eta)\gamma(r)\sqrt{mL}$ (see the proof of Theorem \ref{thm:impv} for the details).

\section{Main Results}\label{sec:main_results}

%
%
\begin{thm}\label{thm:main} Let $\Phi$ be an $m\times N$ 
sub-Gaussian matrix with mean zero and unit variance, and let $B$ be an $L\times m $ Bernoulli matrix with  $\pm 1$ entries. Moreover, let $k\in \{1,...,\min\{m,N\}\}$.  

Denote by $Q_{\sd}^r$ a stable
  $r$th-order scheme with alphabet $\A_\delta^K$ and stability
  constant $\gamma(r)$. There exist positive constants $\Cl{RIPc}$, $\Cl{C_alpha}$, $\Cl{whatever1}$, and $\Cl[littlec]{whatever2}$ such that whenever $\frac{m}{\Cr{C_alpha}}\ge
L \ge \Cr{RIPc} k \log(N/k)$ the following holds with probability
  greater than $1-\Cr{whatever1}e^{-\Cr{whatever2}\sqrt{mL}}$ on the draws of $\Phi$ and $B$: 

Suppose that $x\in \R^N$, $e\in\R^m$ with $\|\Phi x\|_\infty \le \mu < 1$ and that $q:=Q_{\sd}^r(\Phi x+e)$,  where  $\|e\|_\infty\le \epsilon$ for some $0\leq	\epsilon<1-\mu$.  Then the solution $\hat{x}$ to \eqref{eq:opt} satisfies
\begin{equation}\label{eq:err0}
\|x-\hat{x}\|_2 \leq \Cl[errorterms]{d3}\Big(\frac{L}{m}\Big)^{r/2-3/4}\delta + \Cl[errorterms]{d4} \sqrt{\frac{m}{L}}\varepsilon + \Cl[errorterms]{d5} \frac{\sigma_k(x)}{\sqrt{k}}.
\end{equation}
where $\Cr{d3}$, $\Cr{d4}$, and $\Cr{d5}$ are constants.
\end{thm}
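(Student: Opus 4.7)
The plan is to perform a Cohen--Dahmen--DeVore-style stability analysis for the effective measurement matrix $A := BD^{-r}\Phi$, treating the quantization residual $D^r u$ and the measurement noise $e$ as adversarial perturbations of a standard compressed sensing problem.

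First I would verify that the true triple $(x,u,e)$ is feasible for \eqref{eq:opt}. The equality constraint holds by the $\sd$ identity $q = \Phi x + e - D^r u$; the bound $\|e\|_2 \le \sqrt m\,\varepsilon$ follows from $\|e\|_\infty \le \varepsilon$; and the state-vector constraint $\|Bu\|_2 \le 3Cm$ follows by combining the stability bound $\|u\|_\infty \le C$ with a standard sub-Gaussian operator-norm estimate $\|B\|_{op} \lesssim \sqrt m$, which holds with high probability under the assumption $L \le m/C_\alpha$. This gives $\|Bu\|_2 \le \|B\|_{op}\|u\|_2 \le 2Cm$ uniformly in $x$, so feasibility holds on a single high-probability event on the draw of $B$. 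Letting $(\hat x,\hat u,\hat e)$ be an optimizer and $h := \hat x - x$, subtracting the equality constraints for the optimal and the true triples produces the residual identity
\[
Ah \;=\; B(\hat u - u) - BD^{-r}(\hat e - e),
\]
while $\ell_1$-minimality $\|\hat x\|_1 \le \|x\|_1$ yields the standard cone inequality $\|h_{T_0^c}\|_1 \le \|h_{T_0}\|_1 + 2\sigma_k(x)$, where $T_0$ indexes the top-$k$ magnitudes of $x$.

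The technical heart of the proof is a restricted isometry estimate for $A$: with the probability stated in the theorem,
\[
\alpha_-\|v\|_2 \;\le\; \|Av\|_2 \;\le\; \alpha_+\|v\|_2 \quad\text{for all $2k$-sparse } v \in \R^N,
\]
with $\alpha_\pm$ of the same order and $\alpha_- \asymp m^{r/2+1/4}L^{3/4-r/2}$. I would derive this by combining (i) the sub-Gaussian RIP of $\Phi$, so that $\Phi v$ behaves like a typical sub-Gaussian vector in $\R^m$ for sparse $v$; (ii) a singular-value analysis of $D^{-r}$ (whose $j$-th largest singular value scales like $(m/j)^r$) identifying the ``middle'' portion of the spectrum that dominates the action after projection by $B$; and (iii) a Johnson--Lindenstrauss concentration estimate for the Bernoulli matrix $B$. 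The uniform statement over all $2k$-sparse vectors is obtained by an $\epsilon$-net over the union of coordinate subspaces indexed by $2k$-subsets of $\{1,\dots,N\}$; the geometric-mean exponent $\sqrt{mL}$ in the failure probability emerges from balancing the net entropy against the joint concentration scales of $\Phi$ (scale $m$) and $B$ (scale $L$).

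Once the RIP is in hand, I would close the argument with the usual manipulation: decompose $h_{T_0^c}$ into successive blocks $T_1,T_2,\dots$ of $k$ largest-magnitude entries, apply the tail estimate $\sum_{j \ge 2}\|h_{T_j}\|_2 \le k^{-1/2}\|h_{T_0^c}\|_1$, and combine with the cone inequality to obtain
\[
\|h\|_2 \;\lesssim\; \frac{\|Ah\|_2}{\alpha_-} + \frac{\sigma_k(x)}{\sqrt k}.
\]
Bounding $\|Ah\|_2 \le 6Cm + \|BD^{-r}(\hat e - e)\|_2$ via the residual identity and the feasibility constraint, then dividing through by $\alpha_-$, recovers the quantization term $(L/m)^{r/2-3/4}\delta$ (after absorbing $C \lesssim \delta$) together with the noise term $\sqrt{m/L}\,\varepsilon$. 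The main obstacle I expect is the restricted isometry estimate: because $D^{-r}$ is severely ill-conditioned, with singular values ranging from $O(1)$ to $O(m^r)$, the naive bound $\|Av\|_2 \le \|B\|_{op}\|D^{-r}\|_{op}\|\Phi v\|_2$ is far too loose to deliver the sharp scaling $\alpha_\pm \asymp m^{r/2+1/4}L^{3/4-r/2}$. The delicate point is that after the Bernoulli averaging by $B$, the extremal singular directions of $D^{-r}$ are effectively suppressed, so that $BD^{-r}$ behaves on a typical sparse image as if its spectrum were much more uniform than that of $D^{-r}$ itself. A closely related difficulty is to obtain a matching sharp bound $\|BD^{-r}(\hat e - e)\|_2 \lesssim \alpha_-\sqrt{m/L}\,\varepsilon$, which again must exploit the spectral structure of $D^{-r}$ rather than a crude operator-norm bound.
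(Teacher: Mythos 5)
Your overall scaffold---feasibility, residual identity, an isometry estimate for the effective matrix, and a Cohen--Dahmen--DeVore tube-and-cone closure---is the right shape, but the central technical claim is false, and you yourself flag the spot where it breaks. You assert a two-sided restricted isometry for $A := BD^{-r}\Phi$ with $\alpha_+ \asymp \alpha_- \asymp m^{r/2+1/4}L^{3/4-r/2}$, on the grounds that ``after the Bernoulli averaging by $B$, the extremal singular directions of $D^{-r}$ are effectively suppressed.'' This mechanism does not exist: if $v_1$ is the top right singular vector of $D^{-r}$ with singular value $\sim m^r$, then $BD^{-r}v_1 = m^r B u_1$ for the corresponding left singular vector $u_1$, and $\|Bu_1\|_2 \asymp \sqrt L$, so $\sigma_1(BD^{-r}) \gtrsim m^r\sqrt L$; meanwhile $\sigma_L(BD^{-r}) \asymp \sqrt m\,(m/L)^{r/2-1/4}$, so $BD^{-r}$ remains badly ill-conditioned. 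Consequently over the union of $2k$-sparse subspaces one can make $\Phi h$ pick up a non-trivial component along the top singular directions, and $\alpha_+$ for $A$ scales with a much larger power of $m$ than $\alpha_-$. Because the tube-and-cone estimate genuinely needs both inequalities (the upper one is used to bound $\sum_{j\ge 2}\|Ah_{T_j}\|_2$), the argument does not close once $\alpha_+/\alpha_-$ blows up, and you also acknowledge being stuck on $\|BD^{-r}(\hat e - e)\|_2$ for exactly the same reason.

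The paper's proof sidesteps this altogether, and the trick is worth learning: it never tries to show that $BD^{-r}\Phi$ is a RIP matrix. Instead it takes the SVD $BD^{-r} = TSR^T$ and uses only the \emph{lower} bound on $\sigma_L(BD^{-r})$ to pass from $\|BD^{-r}(\Phi h + v)\|_2 \le 6Cm$ to a bound on $\|\tfrac{1}{\sqrt L}\widetilde\Phi_L h\|_2$, where $\widetilde\Phi = R^T\Phi$ and $\widetilde\Phi_L$ denotes its first $L$ rows. The noise term is handled for free in the same step, since $\|(R^T)_L v\|_2 \le \|v\|_2$ by the isometry of $R^T$, with no operator-norm bound on $BD^{-r}$ needed. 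Then the RIP that is actually used is for $\tfrac{1}{\sqrt L}(R^T\Phi)_L$ (Lemma \ref{lm:RIP}, a result from \cite{KSY13} on orthogonal projections of sub-Gaussian matrices---this does \emph{not} follow from just saying ``$R^T\Phi$ is sub-Gaussian,'' which is false in general for non-Gaussian $\Phi$). Finally, Proposition \ref{pro:foucart} (Foucart's error bound) is applied to $\tfrac{1}{\sqrt L}\widetilde\Phi_L$ rather than to $A$. In short: the ill-conditioning of $BD^{-r}$ is not averaged away, it is rotated out of the way; only the smallest singular value matters, and the matrix to which you apply RIP is the rotated-and-truncated $\widetilde\Phi_L$, not $BD^{-r}\Phi$. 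Your lower bound $\alpha_-$ is correct and your cone inequality and feasibility check are fine, but without the SVD reduction you cannot establish the two-sided estimate your outline relies on, and the failure-probability exponent $\sqrt{mL}$ comes from the singular-value concentration in Corollary \ref{cor:sinvalue} (with $\alpha=1/2$ in Lemma \ref{lm:sg}), not from the net/concentration balance you describe.
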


\remark{In practice, if the noise vector $e$ is comprised of zero-mean i.i.d. Gaussian random variables with variance $\varepsilon^2$, one can replace the constraint on $\|\widetilde{e}\|_2$ in \eqref{eq:opt} with $\|\widetilde{e}\|_2 \leq  \sqrt L \varepsilon$ and the same proof holds.  By using this new constraint, we obtain an error bound of the form 
\begin{equation}\label{eq:err}
\|x-\hat{x}\|_2 \lesssim \Big(\frac{L}{m}\Big)^{r/2-3/4} + \varepsilon +  \frac{\sigma_k(x)}{\sqrt{k}}.
\end{equation}
In particular, now the reconstruction error due to additive noise is independent of the number of measurements. 
}

\begin{cor}[rate-distortion relationship and exponential error decay]\label{cor:RD} Let $\Phi \in \R^{m\times N}$ and $B\in \R^{L\times m}$ be compressed sensing and encoding matrices, as in Theorem \ref{thm:main}. 
Denote by $Q_{\sd}^r$ a stable $r$th order $\sd$ scheme with alphabet $\A_{\delta}^{K}$, and let $\mathcal{X}:=\{ x\in \R^N: \|\Phi x\|_\infty \le 1\} $. Let all numbered constants be as before. Then, the following hold with probability greater than $1-\Cr{whatever1}e^{-\Cr{whatever2}\sqrt{mL}}$ on the draws of $\Phi$ and $B$:

\begin{itemize}
\item[(i)] The number of bits needed to represent all elements of the codebook  $\mathcal{C} := BD^{-r} \circ Q_{\sd}^r(\mathcal{X})$ is bounded above by 
$$\mathcal{R} = L(r+1)\log_2(m) + L \log_2(2K).$$

\item[(ii)] The resulting rate-distortion relationship associated with decoding by \eqref{eq:opt} is given by
$$\mathcal{D}(\mathcal{R}) \leq \Cl[errorterms]{d_rate} \cdot
2^{-\Cl[littlec]{c_rate} \frac{\mathcal{R}}{L}} +
\Cr{d5}\frac{\sigma_{ \lfloor L/(\Cr{RIPc}\log
      N)\rfloor}}{\sqrt{\lfloor L/(\Cr{RIPc}\log N)}\rfloor},$$
      where 
      $$\Cr{d_rate}:=\Cr{d_rate}(L,r,K,\delta)=\left(\Cr{d3} \cdot L^{r/2-3/4}\cdot \left( 2K \right)^\frac{r/2-3/4}{r+1}\cdot \delta\right),$$
      and 
      $$ \Cr{c_rate}:=\Cr{c_rate}(r)= \frac{r/2-3/4}{r+1}.$$ 

\item[(iii)] In the case of $k$-sparse signals with $k :=\lfloor \frac{L}{\Cr{RIPc}\log N} \rfloor$, we have
$$\mathcal{D}(\mathcal{R}) \leq \Cl[errorterms]{d_rate} \cdot
2^{-\Cl[littlec]{c_rate2} \frac{\mathcal{R}}{k\log N}}$$
where $\Cr{c_rate2}=\Cr{c_rate}/\Cr{RIPc}$.
\end{itemize}
\end{cor}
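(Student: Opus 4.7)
The plan is to prove parts (i)--(iii) in order; parts (ii) and (iii) will follow from algebraic manipulation after substituting the bit-budget bound into the estimate \eqref{eq:err0} from Theorem~\ref{thm:main}, so the bulk of the work lies in a careful count of the codebook in (i).

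For part (i), the key observation is that every element of $\mathcal{C} = BD^{-r}\circ Q_{\sd}^r(\mathcal{X})$ lies on a coarse lattice with bounded support. Writing $D^{-r}$ out explicitly, its entries are $\binom{i-j+r-1}{r-1}$ on and below the diagonal, which are nonnegative integers bounded by $\binom{m+r-1}{r-1} \lesssim m^{r-1}$; hence each row has entries summing to $\lesssim m^r$. Combined with the coordinatewise bound $\|q\|_\infty \le (K-1/2)\delta$ coming from $q \in (\A_\delta^K)^m$, this yields $\|D^{-r}q\|_\infty \lesssim K\delta m^r$ and then $\|BD^{-r}q\|_\infty \le m\|D^{-r}q\|_\infty \lesssim K\delta m^{r+1}$. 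Moreover, since $q$ has entries in $(\delta/2)\cdot\Z$ and both $D^{-r}$ and $B$ have integer entries, every coordinate of $BD^{-r}q$ lies on the lattice $(\delta/2)\Z$, so each coordinate takes at most $O(K m^{r+1})$ values and $|\mathcal{C}| \le O((K m^{r+1})^L)$. Taking $\log_2$ and absorbing absolute constants gives the stated bound $\mathcal{R} \le L(r+1)\log_2(m) + L\log_2(2K)$.

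For part (ii), I would invert this relation to obtain $m \ge (2^{\mathcal{R}/L}/(2K))^{1/(r+1)}$ and substitute into the noise-free ($\varepsilon = 0$) version of \eqref{eq:err0}: the quantization-error term becomes
\begin{equation*}
\Cr{d3}\left(\tfrac{L}{m}\right)^{r/2-3/4}\delta \le \Cr{d3}\, L^{r/2-3/4}\,(2K)^{(r/2-3/4)/(r+1)}\,\delta \cdot 2^{-\mathcal{R}(r/2-3/4)/(L(r+1))},
\end{equation*}
which is exactly $\Cr{d_rate}\cdot 2^{-\Cr{c_rate}\mathcal{R}/L}$ with $\Cr{c_rate} = (r/2-3/4)/(r+1)$. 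Applying Theorem~\ref{thm:main} at the largest admissible sparsity level $k = \lfloor L/(\Cr{RIPc}\log N)\rfloor$ and adding in the $\sigma_k(x)/\sqrt{k}$ term produces the bound in (ii). For (iii), I specialize to $k$-sparse $x$ at this $k$, so $\sigma_k(x)=0$, and convert the exponent using $L\ge \Cr{RIPc} k\log N$: this replaces $1/L$ by $1/(\Cr{RIPc} k\log N)$ and gives the stated rate $\Cr{c_rate2} = \Cr{c_rate}/\Cr{RIPc}$.

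The main obstacle is essentially bookkeeping in (i): the lattice structure of $q$ together with the integrality of $B$ and $D^{-r}$ must be exploited with enough care that the logarithm of the naive count simplifies to the clean expression $L(r+1)\log_2 m + L\log_2(2K)$, with only a constant (and no additional $\log m$) coming out of the $2K$ factor. A minor subtlety is that $\mathcal{X} = \{x : \|\Phi x\|_\infty \le 1\}$ is slightly larger than the signal class $\{x : \|\Phi x\|_\infty \le \mu < 1\}$ covered by Theorem~\ref{thm:main}; this can be handled by taking $\mu$ arbitrarily close to $1$, which only affects the hidden constants. All other steps are direct substitutions into estimates already proved, so no new analytic difficulty arises.
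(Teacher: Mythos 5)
Your proposal follows the same route as the paper: bound the lattice and support of $BD^{-r}q$ coordinatewise to get $\mathcal R \le L\log_2(2m^{r+1}K)$, then solve for $m$ in terms of $\mathcal R$, substitute into the $\varepsilon=0$ case of \eqref{eq:err0}, and specialize to $k=\lfloor L/(\Cr{RIPc}\log N)\rfloor$ for (ii) and (iii). Your handling of (i) is if anything more explicit than the paper's (you spell out that the entries of $D^{-r}$ are integers and that $q\in(\delta/2)\Z^m$, whereas the paper asserts the cardinality bound directly from $\|D^{-r}\|_{\ell_\infty\to\ell_\infty}\le m^r$ and $\|B\|_{\ell_\infty\to\ell_\infty}\le m$), but the substance and all the algebra in (ii)--(iii) are identical.
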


\begin{cor}[encoding the weak-$\ell_p$ ball]\label{cor:compr} Let $\Phi \in \R^{m\times N}$ and $B\in \R^{L\times m}$ be compressed sensing and encoding matrices, as in Theorem \ref{thm:main}. 
Denote by $Q_{\sd}^r$ a stable $r$th order $\sd$ scheme with alphabet $\A_{\delta}^{K}$, and let 
$B_{w\ell_p}$ be the  weak $\ell_p$ ball of radius $\Cl{C_wlp}$ with $0<p<2$. Then it holds
\begin{equation}\label{eq:wlp_upper}
\mathcal{D}(B_{w\ell_p}) \lesssim  \left(\frac{\log N \log(\mathcal{R}/\log N- \log \log N)}{\mathcal{R}-\log N \log\log N}\right)^{1/p-1/2}. \end{equation}
\end{cor}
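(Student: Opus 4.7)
\medskip

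\noindent\textbf{Proof proposal for Corollary \ref{cor:compr}.} The plan is to specialize the rate--distortion bound of Corollary \ref{cor:RD}(ii) to signals in $B_{w\ell_p}$, and then optimize over the free parameter $L$ (the number of rows of the encoding matrix $B$) so that the two terms in that bound balance. Fix $r\geq 2$, $K$, and $\delta$ so that the quantity $\Cr{d_rate}$ grows only polynomially in $L$, and recall that we always take $m$ on the order of $L$ (say $m = \lceil \Cr{C_alpha} L\rceil$), so that $\mathcal{R}=L(r+1)\log_2 m + L\log_2(2K)$ satisfies $\mathcal{R}\asymp L\log L$.

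First I would observe that for $x\in B_{w\ell_p}$, the standard weak-$\ell_p$ tail bound used in the introduction gives $\sigma_k(x)/\sqrt{k}\lesssim k^{1/2-1/p}$. Substituting $k=\lfloor L/(\Cr{RIPc}\log N)\rfloor$ into Corollary \ref{cor:RD}(ii) then yields, for all $x\in B_{w\ell_p}$,
\begin{equation}\label{eq:two_terms}
\|\hat x-x\|_2 \;\lesssim\; L^{r/2-3/4}\,2^{-\Cr{c_rate}\,\mathcal{R}/L} \;+\; \Bigl(\tfrac{L}{\log N}\Bigr)^{1/2-1/p}.
\end{equation}
The first summand is increasing in $L$ (for $\mathcal{R}/L \asymp \log L$ fixed) and the second is decreasing in $L$, so there is a unique scale at which they balance.

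Next I would choose $L$ to make the exponential factor eat the polynomial prefactor and dominate $(L/\log N)^{1/2-1/p}$. Taking logarithms in the balancing equation gives $\mathcal{R}/L \asymp \log L$, which suggests the parametric choice
\[
L \;=\; \left\lfloor \frac{\mathcal{R} - \log N\,\log\log N}{\log(\mathcal{R}/\log N - \log\log N)} \right\rfloor.
\]
With this $L$, a direct computation shows $\mathcal{R}/L \gtrsim \log(\mathcal{R}/\log N - \log\log N)$, so $2^{-\Cr{c_rate}\mathcal{R}/L}$ decays at least as a negative power of $\mathcal{R}/\log N$, which swamps the polynomial prefactor $L^{r/2-3/4}$. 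The second term in \eqref{eq:two_terms} then becomes
\[
\Bigl(\tfrac{L}{\log N}\Bigr)^{1/2-1/p}
\;=\; \left(\frac{\log N\,\log(\mathcal{R}/\log N - \log\log N)}{\mathcal{R} - \log N\log\log N}\right)^{1/p-1/2},
\]
which is exactly the target bound. The appearance of the $\log N\log\log N$ correction inside $L$ accounts precisely for the fact that the encoding rate carries a $\log m\asymp\log L$ overhead, and this correction propagates into the stated distortion.

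Finally I would check the feasibility constraints of Theorem \ref{thm:main} with this choice: namely, $\Cr{RIPc} k\log(N/k)\leq L\leq m/\Cr{C_alpha}$. The upper bound holds by our choice $m=\lceil \Cr{C_alpha}L\rceil$. For the lower bound, the definition $k=\lfloor L/(\Cr{RIPc}\log N)\rfloor$ and $\log(N/k)\leq \log N$ make it automatic. The whole argument requires only that $\mathcal{R}\gtrsim \log N\log\log N$ so that $L\geq 1$ in the above formula, matching the simplified form $\mathcal{D}\lesssim(\log N\log(\mathcal{R}/\log N)/\mathcal{R})^{1/p-1/2}$ stated after the corollary.

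The main obstacle I expect is the bookkeeping: verifying rigorously that at the chosen $L$ the polynomial factor $L^{r/2-3/4}$ is indeed absorbed by $2^{-\Cr{c_rate}\mathcal{R}/L}$, uniformly in $p$ and in the regime $\mathcal{R}\gtrsim \log N\log\log N$. This requires some care because the exponent $\Cr{c_rate}=(r/2-3/4)/(r+1)$ is small, and one must confirm that the implicit constant in the $\lesssim$ compares correctly against the constants appearing in $\Cr{d_rate}$. Once that polynomial--vs.--exponential comparison is done cleanly, the rest is arithmetic manipulation of the logarithms.
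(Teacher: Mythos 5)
Your parametrization has a fatal flaw at the very start: you set $m = \lceil \Cr{C_alpha} L\rceil$, reading the hypothesis $m/\Cr{C_alpha}\ge L$ of Theorem~\ref{thm:main} as if it pinned $m$ to the order of $L$. But that inequality is only a \emph{lower} bound on $m$; the theorem allows $m$ to be much larger, and the whole mechanism by which the first error term $(L/m)^{r/2-3/4}\delta$ shrinks is precisely by sending $m$ to a superlinear power of $L$. With $m\asymp L$ the factor $(L/m)^{r/2-3/4}$ is $\asymp 1$, so the first term in your display \eqref{eq:two_terms} is a constant, not a decaying quantity. Equivalently, in the form $\Cr{d_rate}\cdot 2^{-\Cr{c_rate}\mathcal{R}/L}$ of Corollary~\ref{cor:RD}(ii): with $m\asymp L$ you have $\mathcal{R}/L\asymp\log L$, so $2^{-\Cr{c_rate}\mathcal{R}/L}\asymp L^{-\Cr{c_rate}}$, and multiplying by the prefactor $L^{r/2-3/4}$ gives $L^{(r/2-3/4)-\Cr{c_rate}} = L^{(r/2-3/4)\,r/(r+1)}$, which \emph{grows} with $L$ since $r/2-3/4 > \Cr{c_rate}=(r/2-3/4)/(r+1)$. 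So the assertion that the exponential ``eats'' the polynomial prefactor under your parametrization is false; in fact the product diverges, and your final bound cannot decay to zero as $\mathcal{R}\to\infty$.

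The paper's proof takes the opposite route: it keeps $L$ tied to $k\log N$ and instead \emph{varies $m$ freely}. After inserting the weak-$\ell_p$ tail bound $\sigma_k(x)/\sqrt{k}\lesssim k^{1/2-1/p}$ with $k\asymp L/\log N$, the two summands $(L/m)^{r/2-3/4}$ and $(L/\log N)^{1/2-1/p}$ are balanced by solving for $m$ in terms of $L$ and $\log N$ (Equation~\eqref{eq:mL}); this gives $m$ as a superlinear power of $L$, not a constant multiple. Only then is the rate $\mathcal{R}=L\log_2(2m^{r+1}K)$ evaluated, $L$ eliminated via the implicit relation $L\lesssim \log N\,(\mathcal{D})^{1/(1/2-1/p)}$, and the elementary inequality ``$a\le cb\log b\Rightarrow a\lesssim b\max\{\log(a/c),1\}$'' invoked to unwind the implicit dependence. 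If you want to salvage your outline you would need to replace $m=\lceil \Cr{C_alpha}L\rceil$ by the balancing choice of $m$ from \eqref{eq:mL}; at that point the argument collapses into the paper's and the remaining work is the logarithmic bookkeeping you flagged at the end.
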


\begin{thm}\label{thm:impv} Let $\Phi$ be an $m\times N$ 
sub-Gaussian matrix with mean zero and unit variance and $B$ be an $L\times m $ Bernoulli matrix with  $\pm 1$ entries. Moreover, let $k\in \{1,...,\min\{m,N\}\}$.  Denote by $Q_{\sd}^r$ a stable $r$th-order scheme with alphabet $\A_\delta^K$ and stability constant $\gamma(r)$. There exist positive constants $\Cl{RIPc2}$, $\Cl{C_alpha2}$, $\Cl{whatever12}$, and $\Cl[littlec]{whatever22}$ such that whenever $\frac{m}{\Cr{C_alpha2}}\ge
L \ge \Cr{RIPc2} (k \log N + k\log m)$ the following holds with probability greater than $1-\Cr{whatever12}e^{-\Cr{whatever22}{L}}$ on the draws of $\Phi$ and $B$: 

Suppose that $x\in \Sigma_k^N\cap B_2^N$ with $\|\Phi x\|_\infty  \leq 1$ and that  $q:=Q_{\sd}^r(\Phi x)$. Then the solution $\hat{x}$ to 
\eqref{eq:noiseless}, with $\eta=1$, satisfies
\[
\|\hat{x}-x\|_2 \leq \Cl[errorterms]{c_3}\left( \frac{L}{m} \right)^{r/2-1/4}\delta,
\]
where $\Cr{c_3}$ is a constant.
\end{thm}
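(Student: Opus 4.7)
My plan is to follow the classical three-step template for $\ell_1$-minimization recovery with an RIP-type property, adapted to propagate the $\sd$ state vector and the encoding matrix $B$ through the analysis. The three steps are: (i) verify that the true signal $x$ is feasible for \eqref{eq:noiseless}; (ii) establish a uniform lower bound of the form $\|BD^{-r}\Phi z\|_2 \gtrsim \sqrt{L}\,\sigma_\ast\,\|z\|_2$ for all $z\in\Sigma_{2k}^N$; and (iii) combine feasibility with the cone condition coming from the optimality of $\hat{x}$.

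For step (i), write $q=\Phi x-D^r u$ with $\|u\|_\infty\le\gamma(r)$ by stability of $Q_{\sd}^r$, so that the feasibility residual at $\widetilde{x}=x$ is exactly $\|Bu\|_2$. For any fixed $u$ with $\|u\|_2\le \gamma(r)\sqrt{m}$, Johnson-Lindenstrauss concentration for Bernoulli $B$ (Lemma~\ref{lm:ls_pts}) gives $\|Bu\|_2\le (1+\eta)\gamma(r)\sqrt{mL}$ with probability exponentially close to $1$. The delicate point is that $u$ depends nonlinearly on $x$ through the $\sd$ recursion on $\Phi x$, so a single JL bound does not suffice. To get a uniform bound I will cover the admissible measurement set $\{\Phi x:x\in\Sigma_k^N\cap B_2^N,\ \|\Phi x\|_\infty\le 1\}\subset\R^m$ by a finite $\epsilon$-net (whose log-cardinality contributes the additional $k\log m$ term in the hypothesis on $L$), take a union bound of JL over the finitely many state vectors associated with the net, and extend to arbitrary admissible inputs via the Lipschitz dependence of $u$ on its input. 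The extension slack is absorbed by replacing $(1+\eta)$ with $(2+\eta)$, which is exactly why the constraint in \eqref{eq:noiseless} uses the enlarged constant.

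For step (ii), I combine two ingredients. First, for a fixed $T$ with $|T|\le 2k$, the Sobolev-dual analysis of \cite{GLPSY,KSY13} gives $\sigma_{\min}(D^{-r}\Phi_T)\gtrsim \sqrt{m}\,(m/k)^{r-1/2}$, and the sub-Gaussian variant applies here. Second, $B$ is an isometry up to a factor $1\pm\eta$ on the $2k$-dimensional subspace $D^{-r}\Phi_T(\R^{|T|})$, so $\|By\|_2\ge (1-\eta)\sqrt{L}\|y\|_2$ uniformly on it; a union bound over the $\binom{N}{2k}$ supports yields the required uniform lower bound under the assumption $L\gtrsim k\log(N/k)$. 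Step (iii) is then the classical argument: optimality of $\hat{x}$ combined with $|\supp(x)|\le k$ yields the cone condition $\|h_{T^c}\|_1\le \|h_T\|_1$ for $h:=\hat{x}-x$ and $T:=\supp(x)$; splitting $h=h_{T'}+h_{(T')^c}$ with $T'$ the union of $T$ and the indices of the $k$ largest-magnitude entries of $h_{T^c}$, the lower bound from (ii) controls $\|h_{T'}\|_2$, while a standard tail estimate controls $\|h_{(T')^c}\|_2$ in terms of $\|h_T\|_2$. Combining with the constraint bound $\|BD^{-r}\Phi h\|_2\le 2(2+\eta)\gamma(r)\sqrt{mL}$ (triangle inequality applied to the feasibility of both $x$ and $\hat{x}$) yields an estimate on $\|h\|_2$ which, after using $\gamma(r)\lesssim\delta$ and $k\lesssim L/\log N$, produces the claimed $(L/m)^{r/2-1/4}\delta$ rate.

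The hardest step is (i): everything else is a careful assembly of well-known ingredients, but propagating the nonlinear dependence of the state vector $u$ on $\Phi x$ through a covering-and-Lipschitz-extension argument requires care to align the probability budget with the $L\gtrsim k\log N+k\log m$ hypothesis and to justify the enlarged constant $(2+\eta)$ appearing in the decoder's constraint.
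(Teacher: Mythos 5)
Your three-step skeleton — feasibility of $x$, a lower bound on $\|BD^{-r}\Phi\cdot\|_2$ on the relevant cone, and the standard $\ell_1$-minimization argument — matches the paper's overall structure, and you rightly flag step (i) as the crux. But the mechanism you propose for step (i) does not work, and this is exactly where the paper brings in its two auxiliary lemmas (Lemmas~\ref{lm:sinan} and~\ref{lm:constraint}).

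The claimed ``Lipschitz dependence of $u$ on its input'' is false. The state vector $u=D^{-r}(\Phi x - q)$ is a \emph{discontinuous} function of $\Phi x$: each time $\Phi x$ crosses a quantization threshold, an entry of $q=Q_{\sd}^r(\Phi x)$ jumps by a multiple of $\delta$, and $D^{-r}(q_x-q_s)$ then has Euclidean norm that can be as large as $O(m^{r+1/2}\delta)$. Consequently, for $x$ and a nearby net point $s$ lying in different quantization cells, $\|Bu_x-Bu_s\|_2$ has no bound that is $o(\gamma(r)\sqrt{mL})$, so one learns nothing about $\|Bu_x\|_2$ from a JL bound on $\|Bu_s\|_2$. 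A naive net for $B_2^k$ plus a union bound plus ``extend by continuity'' therefore breaks down precisely at the step you describe as delicate.

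The paper's fix (Lemma~\ref{lm:constraint}) is to build a net that is compatible with the cell structure: for every admissible $x$ there is a net point $s$ with $\|x-s\|_2\le\epsilon$ \emph{and} $Q_{\sd}^r(\Phi x)=Q_{\sd}^r(\Phi s)$, so that in the triangle inequality
\[
\|BD^{-r}(Ex-q_x)\|_2\le \|BD^{-r}(Es-q_s)\|_2+\|BD^{-r}E(s-x)\|_2+\|BD^{-r}(q_x-q_s)\|_2
\]
the last term vanishes and the middle term is tamed by choosing $\epsilon=\sqrt{mL}/\|BD^{-r}E\|_2$. Constructing such a net requires bounding the number of $\sd$ quantization cells that intersect $B_2^k$, which is the content of G{\"u}nt{\"u}rk's Lemma~\ref{lm:sinan}; the resulting $|S|\lesssim m^{4kr}$ is what produces the $k\log m$ term in the hypothesis on $L$. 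Your attribution of $k\log m$ to ``log-cardinality of the net'' is heuristically right, but the net has to be a per-cell net (with the cell count controlled by Lemma~\ref{lm:sinan}), not a plain $\epsilon$-net of $B_2^k$; without that, the ``cover and extend'' argument fails.

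As a secondary comment, your step (ii) takes a slightly different and also valid route — working with $\sigma_{\min}(D^{-r}\Phi_T)$ (\`a la \cite{GLPSY,KSY13}) combined with a subspace JL bound for $B$ and a union bound over supports. The paper instead factors $BD^{-r}=TSR^T$, bounds $\sigma_L(BD^{-r})$ directly via Corollary~\ref{cor:sinvalue}, and then applies Lemma~\ref{lm:RIP} to $\frac{1}{\sqrt L}R^T\Phi$, which lets Proposition~\ref{pro:foucart} be invoked in one stroke without a manual cone argument. Either route closes step (ii); the essential missing idea in your proposal is the cell-structured net in step (i).
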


\begin{remark}

The above result on first-order $\sd$ quantization holds only for
sparse signals. For arbitrary signals and when the measurements are
noisy, as in \eqref{cs_noisy},
we can obtain a \emph{non-uniform} version of Theorem \ref{thm:main}
to handle $r=1$. That is, the result holds with high probability on
the draw of the encoding matrix, in the regime where one draws a new
random encoding matrix after sensing a fixed finite number of signals.  

\end{remark}

%
%
%
%
%
%
%
%

In this case, we use the decoder 
\begin{equation}\label{eq:opt2}
(\hat{x},\hat{u}, \hat{e})=\arg\min \|\widetilde{x}\|_1, \text{  s.t. } \left\{\begin{array}{rcl} 
BD^{-r}(\Phi \widetilde{x}+\widetilde{e}) -B\widetilde{u} &=&BD^{-r}{q}\\
\|B\widetilde{u}\|_2 &\leq& 2C\sqrt{Lm}\\
\|\widetilde{e}\|_2 &\leq & \sqrt{m} \varepsilon
\end{array}\right.
.\end{equation}
The corresponding approximation sastisfies
\begin{equation}\label{eq:err2}
\|x-\hat{x}\|_2 \leq d_1\Big(\frac{L}{m}\Big)^{r/2-1/4}\delta + d_2 \sqrt{\frac{m}{L}}\varepsilon + \Cr{d3} \frac{\sigma_k(x)}{\sqrt{k}},
\end{equation}
with high probability. The proof of this is essentially the same as that of Theorem
\ref{thm:main} and we omit the details. However, it is worth noting that
the probability with
which \eqref{eq:err2} holds depends on the probability that the true solution $(x,u,e)$ satisfies the
constraint in \eqref{eq:opt2}. This can be calculated by invoking the
Johnson-Lindenstrauss Lemma (Lemma ~\ref{lm:ls_pts})  as $B$ and $u$ are
independent.

\section{Proof of Theorem \ref{thm:main} and its corollaries}\label{sec:proof1}
To prove Theorem \ref{thm:main}, we will require some results from the literature, which we now present.

\subsection{Preliminaries}\label{sec:preliminaries}
We begin with a lemma from \cite{Foucart13}, which will be important for our analysis. 
\begin{prop}[\cite{Foucart13}]\label{pro:foucart} Let $f,g\in \mathbb{C}^N$, and $\Phi \in \mathbb{C}^{m,N}$. Suppose that $\Phi$ has $\delta_{2k}$-RIP with $\delta_{2k}<1/9$, then for any $1\leq p\leq 2$, we have
\[
\|f-g\|_p\leq C_1k^{1/p-1/2} \|\Phi(f-g)\|_2+\frac{C_2}{k^{1-1/p}}(\|f\|_1-\|g\|_1+2\sigma_k(g)_1),
\]
with constants $C_1$, $C_2$ only depend on $\delta_{2k}$.
\end{prop}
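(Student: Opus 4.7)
The plan is the classical RIP-based cone-and-tube argument, adapted to handle a general (not necessarily $\ell_1$-optimal) $f$. Set $h := f - g$ and let $T_0$ be a support of a best $k$-term approximation of $g$, so that $\|g - g_{T_0}\|_1 = \sigma_k(g)_1$. Partition $T_0^c$ into successive size-$k$ blocks $T_1, T_2, \ldots$ in decreasing order of $|h|$-magnitude. A direct manipulation of $\|f\|_1 = \|g + h\|_1$ (reverse triangle inequality on $T_0$ and on $T_0^c$ separately) yields the ``defect cone inequality''
\[
\|h_{T_0^c}\|_1 \;\le\; \|h_{T_0}\|_1 + (\|f\|_1 - \|g\|_1) + 2\sigma_k(g)_1,
\]
which is the general-$f$ extension of the usual cone inequality.

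The next step bounds the head $\|h_{T_0\cup T_1}\|_2$ in terms of $\|\Phi h\|_2$. The tube lemma gives $\|h_{T_{j+1}}\|_2 \le \|h_{T_j}\|_1/\sqrt k$ for $j \ge 1$, hence $\sum_{j\ge 2}\|h_{T_j}\|_2 \le \|h_{T_0^c}\|_1/\sqrt k$. Combining this with $\delta_{2k}$-RIP applied to the $2k$-sparse $h_{T_0\cup T_1}$ and the sharper inner-product RIP bound $|\langle\Phi u,\Phi v\rangle| \le \delta_{2k}\|u\|_2\|v\|_2$ (for disjoint-support jointly $2k$-sparse $u,v$, proved via the parallelogram identity), the expansion
\[
\|\Phi h_{T_0\cup T_1}\|_2^2 \;=\; \langle\Phi h_{T_0\cup T_1},\Phi h\rangle \;-\; \sum_{j\ge 2}\langle\Phi h_{T_0\cup T_1},\Phi h_{T_j}\rangle
\]
yields a linear inequality $\|h_{T_0\cup T_1}\|_2 \lesssim_{\delta_{2k}} \|\Phi h\|_2 + \sum_{j\ge 2}\|h_{T_j}\|_2$. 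Substituting the tube estimate and the defect cone --- and using $\|h_{T_0}\|_1 \le \sqrt k\,\|h_{T_0\cup T_1}\|_2$ --- produces a self-referential inequality in $\|h_{T_0\cup T_1}\|_2$; solving yields $\|h_{T_0\cup T_1}\|_2 \lesssim \|\Phi h\|_2 + (\|f\|_1 - \|g\|_1 + 2\sigma_k(g)_1)/\sqrt k$, with constants depending only on $\delta_{2k}$.

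To extend from the head bound to $\|h\|_p$ for general $p\in[1,2]$, decompose $\|h\|_p \le \|h_{T_0\cup T_1}\|_p + \|h - h_{T_0\cup T_1}\|_p$. Holder's inequality $\|v\|_p \le |T|^{1/p - 1/2}\|v\|_2$ applied to the head (supported on a set of size $2k$) produces the factor $k^{1/p-1/2}$ multiplying $\|\Phi h\|_2$. For the tail $\sum_{j\ge 2}h_{T_j}$, the sharpened tube estimate $\|h_{T_{j+1}}\|_p \le k^{1/p}\|h_{T_{j+1}}\|_\infty \le k^{1/p-1}\|h_{T_j}\|_1$ combined with disjoint-support Minkowski and the defect cone produces the factor $k^{-(1-1/p)} = k^{1/p-1}$ multiplying the defect-cone term, matching the prefactors in the proposition.

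The main obstacle is closing the self-referential inequality in paragraph two: $\|h_{T_0^c}\|_1$ appears on both sides via the defect cone and must be controlled by the head. A naive Cauchy-Schwarz/triangle estimate for the cross terms would only give $\delta_{2k}<1/2$, which is enough for qualitative recovery but not for the explicit constants appearing in the $\ell_p$ bound; the tighter hypothesis $\delta_{2k}<1/9$ in the statement is what keeps the implicit coefficient positive after the inner-product RIP bound and the $\ell_p$-interpolation constants are carefully tracked. The resulting $C_1,C_2$ depend on $\delta_{2k}$ in a way that blows up as $\delta_{2k}\uparrow 1/9$, reflecting how tight the self-reference becomes near the boundary.
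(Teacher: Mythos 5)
The paper does not prove Proposition~\ref{pro:foucart}; it is imported from \cite{Foucart13} and used as a black box in Section~\ref{sec:proof1}. Your cone-and-tube argument---the ``defect cone'' inequality $\|h_{T_0^c}\|_1 \le \|h_{T_0}\|_1 + (\|f\|_1-\|g\|_1) + 2\sigma_k(g)_1$, the tube estimate, the head bound obtained by closing a self-referential inequality in $\|h_{T_0\cup T_1}\|_2$, and the H\"older interpolation to $\ell_p$---is correct and is essentially the route Foucart takes. One point in the exposition to fix: the last paragraph misattributes the role of $\delta_{2k}<1/9$. Tracking your own sketch with the disjoint-support inner-product estimate $|\langle\Phi u,\Phi v\rangle|\le\delta_{2k}\|u\|_2\|v\|_2$, the coefficient in front of $\|h_{T_0\cup T_1}\|_2$ stays positive whenever $\sqrt{2}\,\delta_{2k}/(1-\delta_{2k})<1$, i.e.\ for all $\delta_{2k}<\sqrt{2}-1\approx 0.41$. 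Thus $1/9$ is not ``what keeps the implicit coefficient positive''; it is a strictly stronger-than-necessary hypothesis, presumably chosen in \cite{Foucart13} to produce clean explicit values of $C_1,C_2$ rather than mere positivity, and nothing in your argument becomes tight near $1/9$. A minor redundancy: ``disjoint-support Minkowski'' for the tail is not needed---the ordinary triangle inequality for $\ell_p$ with $p\ge1$ already gives $\bigl\|\sum_{j\ge2}h_{T_j}\bigr\|_p\le\sum_{j\ge2}\|h_{T_j}\|_p$ without using disjointness.
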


Next we present a lemma from \cite{KSY13} essentially bounding, from below, the smallest singular vector of an anisotropic random matrix. We use the lemma to deduce a corollary about the $L$th singular value of the $L\times m$ matrix $BD^{-r}$, which will be useful in the analysis of the encoding scheme. 
\begin{lm}[\cite{KSY13}]\label{lm:sg} let $E$ be an $m\times k$ sub-Gaussain matrix with mean zero, unit variance, and parameter $c$, let $S=\text{diag}(s)$ be a diagonal matrix, and let $V$ be an orthonormal matrix, both of size $m\times m$. Further, let $r\in \mathbb{Z}^+$ and suppose that $s_j\geq \Cl{C_1}^r\left(\frac{m}{j}\right)^r$, where $\Cr{C_1}$ is a positive constant that may depend on $r$. Then there exist constants $\Cl{C_2},\Cl[littlec]{C_3}>0$ (depending on $c$ and $\Cr{C_1}$) such that for $0<\alpha <1$ and $\lambda:=\frac{m}{k} \geq \Cr{C_2}^{\frac{1}{1-\alpha}} $
\[
\mathbb{P} \left(\sigma_{\min} \left(\frac{1}{\sqrt m} SV^*E\right) \leq \lambda^{\alpha(r-1/2)}\right)\leq  2 \exp(-\Cr{C_3} m^{1-\alpha} k^\alpha).
\]
\end{lm}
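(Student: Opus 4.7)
The plan is to prove Lemma \ref{lm:sg} by a truncation-plus-covering argument: the first few singular values of $SV^*$ are large by hypothesis, so one restricts to the top $J$ coordinates in the $m$-dimensional output, applies Hanson--Wright to a projection-quadratic form in $Ey$ for each fixed $y$ in an $\epsilon$-net of $S^{k-1}$, and extends to the full sphere using a standard sub-Gaussian operator-norm bound on $E$.

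First, by permuting $s$ and the corresponding rows of $V^*$ (which leaves $SV^*$ and hence $\sigma_{\min}(SV^*E)$ unchanged), we may assume $s_1\geq\cdots\geq s_m$; the hypothesis $s_j\geq C_1^r(m/j)^r$ is preserved since its right side is nonincreasing in $j$. Set $J:=\lfloor c_0\, m^{1-\alpha}k^\alpha\rfloor$ for a small absolute constant $c_0$, let $V_J^*$ denote the $J\times m$ matrix of the first $J$ rows of $V^*$, and let $S_J:=\text{diag}(s_1,\dots,s_J)$. Dropping the bottom $m-J$ coordinates only decreases $\ell_2$-norms, so
\[
\sigma_{\min}(SV^*E)\ \geq\ \sigma_{\min}(S_J V_J^* E)\ \geq\ s_J\,\sigma_{\min}(V_J^* E)\ \geq\ C_1^r\bigl(m/J\bigr)^r\sigma_{\min}(V_J^* E).
\]

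It remains to show $\sigma_{\min}(V_J^*E)\gtrsim \sqrt{J}$ with probability $1-2\exp(-c\,m^{1-\alpha}k^\alpha)$. For fixed $y\in S^{k-1}$, the vector $w:=Ey$ has independent mean-zero unit-variance sub-Gaussian entries (with $\psi_2$-norm controlled by $c$), and since $V_J^*$ has orthonormal rows, $P:=(V_J^*)^* V_J^*$ is an orthogonal projection of rank $J$, satisfying $\text{tr}(P)=\|P\|_F^2=J$ and $\|P\|_{op}=1$. The Hanson--Wright inequality therefore yields
\[
\mathbb{P}\bigl(\|V_J^*Ey\|_2^2<J/2\bigr)\ =\ \mathbb{P}\bigl(w^*Pw<J/2\bigr)\ \leq\ 2\exp(-c'J).
\]
Covering $S^{k-1}$ by an $\epsilon$-net $\mathcal{N}_\epsilon$ of size $(1+2/\epsilon)^k$ with $\epsilon\asymp\sqrt{J/m}$, union-bounding, and passing to the full sphere via
\[
\sigma_{\min}(V_J^*E)\ \geq\ \min_{y\in\mathcal{N}_\epsilon}\|V_J^*Ey\|_2 - \epsilon\|V_J^*E\|_{op}\ \geq\ \min_{y\in\mathcal{N}_\epsilon}\|V_J^*Ey\|_2 - \epsilon\|E\|_{op},
\]
combined with the standard tail bound $\|E\|_{op}\leq 2C\sqrt{m}$ (which fails with probability at most $2e^{-cm}$), gives $\sigma_{\min}(V_J^*E)\geq\tfrac{1}{2}\sqrt{J/2}$ on an event of probability at least $1-2\exp(-c'J+k\log(1+2/\epsilon))-2e^{-cm}$, which is bounded below by $1-2\exp(-c''m^{1-\alpha}k^\alpha)$ as soon as $\lambda=m/k\geq C_2^{1/(1-\alpha)}$ is chosen large enough that $c'J$ dominates $k\log(m/J)$.

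Dividing through by $\sqrt m$ and substituting $J=c_0 m^{1-\alpha}k^\alpha$ yields
\[
\tfrac{1}{\sqrt m}\,\sigma_{\min}(SV^*E)\ \gtrsim\ C_1^r\,(m/J)^{r-1/2}\ =\ C_1^r\,\lambda^{\alpha(r-1/2)},
\]
so taking $c_0$ sufficiently small (depending on $C_1$ and $c$) absorbs the multiplicative constant and delivers the stated bound. The main obstacle is the net-to-sphere extension in the genuinely sub-Gaussian setting: unlike the Gaussian case, $V^*E$ is not equidistributed with $E$, so one cannot rotate $V^*$ away and work with a canonical sub-Gaussian matrix directly. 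The projection formulation $\|V_J^*Ey\|_2^2=(Ey)^*P(Ey)$ used inside Hanson--Wright is precisely what permits the pointwise concentration with only the rank and trace of the projection appearing (so that no upper bound on the $s_j$'s is needed), while balancing the net resolution $\epsilon\asymp\sqrt{J/m}$ against the operator-norm bound on $E$ is what ultimately forces the lower threshold $\lambda\geq C_2^{1/(1-\alpha)}$.
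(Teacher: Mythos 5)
The paper does not contain a proof of this lemma; it is stated with a citation to \cite{KSY13}, so your proposal is filling in an argument the paper omits rather than reproving one it gives.

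Your overall strategy---sort the diagonal, truncate to the top $J$ output coordinates, apply Hanson--Wright to $w^*Pw$ with $P=(V_J^*)^*V_J^*$ for a fixed $y$, then pass from a net to the sphere using an operator-norm bound---is a natural and largely sound route, and the truncation step correctly exploits that the hypothesis $s_j\ge C_1^r(m/j)^r$ is preserved under sorting. There is, however, a genuine gap in the net-to-sphere step: you control the extension error by $\epsilon\|V_J^*E\|_{op}\le\epsilon\|E\|_{op}\lesssim\epsilon\sqrt m$, which forces $\epsilon\asymp\sqrt{J/m}$ and therefore a covering number of size $\exp\bigl(\Theta(k\log(m/J))\bigr)=\exp\bigl(\Theta(\alpha k\log\lambda)\bigr)$. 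For this to be absorbed by the Hanson--Wright exponent $cJ\asymp c_0k\lambda^{1-\alpha}$ you would need $c_0\lambda^{1-\alpha}\gtrsim\alpha\log\lambda$; but the lemma only guarantees $\lambda^{1-\alpha}\ge C_2$, and at that threshold $\log\lambda=\tfrac{\log C_2}{1-\alpha}\to\infty$ as $\alpha\to1$ while $\lambda^{1-\alpha}$ stays at $C_2$. Since the lemma requires $C_2$ and $C_3$ to depend only on $c$ and $C_1$ (not on $\alpha$), no uniform choice of $C_2$ closes your union bound; as written, your proof only yields the conclusion with an $\alpha$-dependent threshold.

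The fix is to replace the crude bound $\|E\|_{op}\lesssim\sqrt m$ by the sharper $\|V_J^*E\|_{op}\lesssim\sqrt J+\sqrt k\lesssim\sqrt J$ with probability $1-2e^{-cJ}$. This holds because the columns $V_J^*e_j$ of $V_J^*E$ are independent, isotropic (since $V_J^*(V_J^*)^*=I_J$), sub-Gaussian $J$-vectors whose sub-Gaussian parameter is controlled by that of $E$ (composition with $V_J^*$ of unit operator norm does not increase it), so the standard singular-value bound for matrices with independent isotropic sub-Gaussian rows applies to $E^TV_J$. With this, the net resolution can be taken $\epsilon\asymp 1$, giving a covering number $\exp(O(k))$; and since $J\gtrsim c_0C_2k$, the exponent $cJ$ dominates $O(k)$ uniformly in $\alpha$ once $C_2$ is large (depending only on $c$ and $C_1$). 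The rest of your argument then carries through as written.
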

Lemma \ref{lm:sg} implies the following fact.

\begin{cor}\label{cor:sinvalue}
 Let $D$ be the $m\times m$ difference matrix, and $B$ be an $L\times m$ $(L\leq m)$ Bernoulli random matrix. Let $k\leq L$ be such that ${\Cr{C_2}^\frac{1}{1-\alpha}}k \leq {m}$ where $\Cr{C_2}$ and $\alpha$ are as in Lemma \ref{lm:sg}. Then with probability at least 
 $1-2e^{-\Cr{C_3} m^{1-\alpha} k^\alpha}$
$$
\sigma_k(BD^{-r})\geq \sqrt m\left(\frac{m}{k}\right)^{\alpha(r-1/2)}.
$$
\end{cor}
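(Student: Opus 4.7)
The plan is to reduce the corollary to a direct application of Lemma \ref{lm:sg}, using the SVD of $D^{-r}$ to peel off orthogonal factors and then a singular value interlacing argument to trade the $L$-column matrix $BD^{-r}$ for an $m\times k$ submatrix that fits the hypotheses of the lemma verbatim.

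First I would pass to the adjoint. Writing the SVD $D^{-r}=USV^*$ with $U,V$ orthogonal $m\times m$ matrices and $S=\mathrm{diag}(s_1,\dots,s_m)$ holding the singular values of $D^{-r}$ in decreasing order, and using the fact that singular values are invariant under left-multiplication by an orthogonal matrix,
\[
\sigma_k(BD^{-r}) \;=\; \sigma_k\bigl((BD^{-r})^*\bigr) \;=\; \sigma_k\bigl(VSU^*B^*\bigr) \;=\; \sigma_k\bigl(SU^*B^*\bigr).
\]
The matrix $SU^*B^*\in \R^{m\times L}$ has $L\geq k$ columns, so I fix any $k$-subset $T\subset\{1,\dots,L\}$ (e.g.\ $T=\{1,\dots,k\}$) and invoke Cauchy interlacing for principal submatrices of $(SU^*B^*)^*(SU^*B^*)$ to get
\[
\sigma_k(SU^*B^*) \;\geq\; \sigma_{\min}\bigl(SU^*(B_{T,\cdot})^*\bigr),
\]
where $(B_{T,\cdot})^*\in\R^{m\times k}$ has as its columns the (transposed) rows of $B$ indexed by $T$. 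Since $B$ has i.i.d.\ Bernoulli $\pm 1$ entries, $(B_{T,\cdot})^*$ is an $m\times k$ sub-Gaussian matrix of zero mean and unit variance.

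I would then apply Lemma \ref{lm:sg} with $E:=(B_{T,\cdot})^*$, the orthogonal factor in the lemma taken to be $U$, and the diagonal factor equal to $S$. The hypothesis $s_j\geq C_1^{r}(m/j)^{r}$ on the singular values of $D^{-r}$ is the standard estimate from the $\sd$-quantization literature (see, e.g., \cite{DGK10, KSY13}), while the condition $\lambda:=m/k\geq C_2^{1/(1-\alpha)}$ is exactly our assumption $C_2^{1/(1-\alpha)}k\leq m$. The lemma then supplies
\[
\sigma_{\min}\!\left(\tfrac{1}{\sqrt m}\,SU^*E\right) \;\geq\; (m/k)^{\alpha(r-1/2)}
\]
with probability at least $1-2\exp(-C_3 m^{1-\alpha}k^\alpha)$. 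Chaining this with the two inequalities above and multiplying through by $\sqrt m$ yields the claimed bound.

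The proof contains no genuinely deep step once the right reduction is set up; the only thing to be careful about is the direction of the interlacing inequality. Column restriction only \emph{decreases} $\sigma_k$, so passing from the $L$-column matrix $SU^*B^*$ to the $m\times k$ submatrix $SU^*(B_{T,\cdot})^*$ costs nothing in the target inequality (which is a lower bound on $\sigma_k$), and lets us apply Lemma \ref{lm:sg} with $E$ of the exact shape the lemma requires.
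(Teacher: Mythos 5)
Your argument follows the same route as the paper's proof: diagonalize $D^{-r}$ (equivalently $(D^*)^{-r}$) via its SVD, invoke the lower bound $s_j\gtrsim (m/j)^r$ on its singular values, and apply Lemma \ref{lm:sg}. You are in fact more careful than the published proof at one point: Lemma \ref{lm:sg} is stated for an $m\times k$ sub-Gaussian matrix $E$, whereas $B^*$ is $m\times L$ with $L\geq k$, and a direct application with $E=B^*$ would only control $\sigma_L$, yielding the weaker $(m/L)^{\alpha(r-1/2)}$ exponent rather than $(m/k)^{\alpha(r-1/2)}$. Your column-restriction step---passing to a fixed $k$-subset of the rows of $B$, which preserves independence and sub-Gaussianity, and then using Cauchy interlacing, correctly oriented so that restricting columns can only decrease $\sigma_k$---is exactly what lets one apply Lemma \ref{lm:sg} with the matrix dimensions it demands and recover both the $(m/k)^{\alpha(r-1/2)}$ bound and the $e^{-C_3 m^{1-\alpha}k^\alpha}$ failure probability that the corollary claims. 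The paper's own proof simply asserts that the result follows by Lemma \ref{lm:sg} and elides this; the omission is harmless in context because the corollary is only invoked with $k=L$ in the proof of Theorem \ref{thm:main}, but your version proves the statement as written for all $k\leq L$.
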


\begin{proof}
Note that $\sigma_k(BD^{-r}) = \sigma_k((D^*)^{-r}B^*) = \sigma_k(SV^*B^*)$ where $USV^* = (D^*)^{-r}$ is the singular value decomposition of $(D^*)^{-r}$. By Proposition 3.2 of \cite{GLPSY}, we have that the diagonal entries $s_j$ of the diagonal matrix $S$ satisfy $s_j\geq C_1^r\left(\frac{m}{j}\right)^r$ for a (known) constant $C_1$. The result then follows by Lemma \ref{lm:sg}.
\end{proof}

\begin{lm}[corollary of Proposition 4.1 in \cite{KSY13}] \label{lm:RIP} 
Let $V $ be an $m \times L$ $(m\geq L)$ orthonormal matrix and $\Phi $ be an $m\times N$ $(N\geq m)$ sub-Gaussian matrix with mean zero, unit variance, and parameter $c$. With probability over $1-e^{-\Cl[littlec]{C1_KSY}L}$, the matrix $\frac{1}{\sqrt L} V^T \Phi $ satisfies the RIP of order $2k$ and constant $\delta_{2k}$, provided that 
\begin{equation}\label{eq:L} 
L\geq \frac{1}{\delta_{2k}^2}\Cl{C2_KSY}k\log N. 
\end{equation} Here $\Cr{C1_KSY}$ and $\Cr{C2_KSY}$ are absolute constants.
\end{lm}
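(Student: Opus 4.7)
The plan is to reduce the claim to the standard restricted isometry property result for $L\times N$ matrices with i.i.d.\ isotropic sub-Gaussian columns, which is essentially Proposition~4.1 of \cite{KSY13}. The main work consists of verifying that $A := V^T\Phi$ has the correct distributional structure, after which the usual concentration-plus-net argument applies.

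First, I would establish the column distribution of $A$. Writing $\phi_j$ for the $j$-th column of $\Phi$, entrywise independence implies that $\phi_1,\dots,\phi_N$ are i.i.d.\ mean-zero sub-Gaussian vectors in $\R^m$ with covariance $I_m$ and sub-Gaussian parameter $c$. Since $V$ has orthonormal columns, the $j$-th column of $A$, namely $V^T\phi_j$, satisfies
\[
\mathbb{E}\bigl[V^T\phi_j\phi_j^T V\bigr]=V^TV=I_L,
\]
and for any $u\in\R^L$ the marginal $\langle u,V^T\phi_j\rangle=\langle Vu,\phi_j\rangle$ is sub-Gaussian with parameter at most $c\|Vu\|_2=c\|u\|_2$. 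Hence the columns of $A$ are i.i.d.\ isotropic sub-Gaussian vectors in $\R^L$ with sub-Gaussian parameter independent of $m$.

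Second, I would concentrate $\tfrac{1}{L}\|Ax\|_2^2$ around $1$ for each fixed $2k$-sparse unit vector $x$. Setting $z:=\Phi x\in\R^m$, the entries of $z$ are independent sub-Gaussian scalars with variance $1$, and $\|Ax\|_2^2=z^T(VV^T)z$ is a quadratic form in $z$ whose kernel $M=VV^T$ is a rank-$L$ orthogonal projection satisfying $\mathrm{tr}(M)=L$, $\|M\|_F^2=L$, and $\|M\|_{\mathrm{op}}=1$. The Hanson--Wright inequality then gives
\[
\mathbb{P}\!\left(\bigl|\tfrac{1}{L}\|Ax\|_2^2-1\bigr|>\delta_{2k}\right)\le 2\exp(-c\,\delta_{2k}^2 L)
\]
for $\delta_{2k}\in(0,1)$. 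A standard $\eta$-net argument over the union of unit spheres of $2k$-sparse vectors (net cardinality at most $\binom{N}{2k}(1+2/\eta)^{2k}$), combined with a union bound, promotes this to the desired uniform RIP with failure probability at most $e^{-\Cr{C1_KSY}L}$ once $L\ge \Cr{C2_KSY}\delta_{2k}^{-2}k\log N$.

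The main (minor) obstacle lies in the first step: the sub-Gaussian parameter of $V^T\phi_j$ must not inflate with $m$. This is controlled precisely by the orthonormality $V^TV=I_L$, and is the only place the structure of $V$ enters in an essential way. Once this is in place, the rest is either a direct invocation or a routine repetition of \cite[Prop.~4.1]{KSY13}.
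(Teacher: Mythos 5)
Your proof is correct, and it takes a more self-contained route than the paper's. The paper's proof for this lemma treats the single-support concentration as a black box: it invokes Theorem~3.7 and Equation~(19) of \cite{KSY13} to get, for each fixed support $T$ with $|T|=k$, a uniform deviation bound for $\|\frac{1}{\sqrt m}V^T\Phi_T y\|_2^2$ over $y\in S^{k-1}$, and then union-bounds over the $\binom{N}{k}$ supports. You instead derive the pointwise concentration from first principles: writing $\|V^T\Phi x\|_2^2 = z^T(VV^T)z$ with $z=\Phi x$ a vector of independent unit-variance sub-Gaussian entries, and observing that $M=VV^T$ is a rank-$L$ orthogonal projection with $\mathrm{tr}(M)=L$, $\|M\|_F^2=L$, $\|M\|_{\mathrm{op}}=1$, you apply Hanson--Wright to get $\mathbb{P}(|\tfrac{1}{L}\|V^T\Phi x\|_2^2-1|>\delta_{2k})\le 2\exp(-c\delta_{2k}^2L)$ and then run a net plus union bound. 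This is clean and arguably more transparent, since it exposes exactly where the orthonormality of $V$ enters (controlling $\mathrm{tr}$, $\|\cdot\|_F$, and $\|\cdot\|_{\mathrm{op}}$ of the quadratic-form kernel) and why the bound depends only on $L$ and not on $m$. One small remark: your opening paragraph verifying that the columns of $A=V^T\Phi$ are i.i.d.\ isotropic sub-Gaussian in $\R^L$ is true, but it is not actually used by the Hanson--Wright argument that follows, which only needs independence of the entries of $z=\Phi x$ (i.e., independence of the rows of $\Phi$); you could drop that step and the proof would stand as is.
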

\begin{proof}
By Theorem 3.7 and Equation (19) of \cite{KSY13},  we can assert that there exists $\Cl[littlec]{C3_KSY}$ such that for any fixed index set $T \subseteq [N]$ with cardinality $k$, we have
\[
P\left(\sup\limits_{y\in S^{k-1} } \left| \|\frac{1}{\sqrt m} V^T\Phi_T y\|_2^2-\mathbb{E} \|\frac{1}{\sqrt m} V^T\Phi_T y\|_2^2\right|\geq \delta \frac{L}{2m}\right ) \leq e^{-\Cr{C3_KSY}\delta^2 L}.
\]
Inserting $\mathbb{E} \| V^T\Phi_T y\|_2^2 = L\|y\|_2^2$ into the above equation and rescaling each term inside the probability, we obtain
\[
P\left(\sup\limits_{y\in S^{k-1} } \left| \|\frac{1}{\sqrt L} V^T\Phi_T y\|_2^2-1\right|\geq \delta /2 \right) \leq e^{-\Cr{C3_KSY}\delta^2 L}.
\]
Under the condition that $L\geq \frac{\Cr{C2_KSY}}{ \delta^2} k\log N$ with some large enough constant $\Cr{C2_KSY}>0$ independent of $\delta, L, m$, and  $N$, a union bound over $N\choose k$ $k$-dimensional subspaces gives
\[
 1-\delta/2 \leq \| \frac{1}{\sqrt L} V^T\Phi_T z\|_2^2  \leq 1+ \delta/2,  \quad \textrm{for all $z\in \R^k$ and $T \subset [N]$ with $|T|=k$ },
\]
 with probability over $1-e^{-\Cr{C1_KSY}L} $. Rescaling $\delta$ completes the argument.
\end{proof}

\subsection{Proofs of main results}

%
%

\begin{proof}[\textbf{ Proof of Theorem \ref{thm:main}}]
By hypothesis, $\|\Phi x + e\|_\infty \leq \mu + \varepsilon \leq 1$. 
This guarantees that the $r$th order $\sd$ quantization is stable with  stability constant $C:=\gamma(r)=\tilde\gamma(r)\delta$. 
Let $$\E_1= \{B\in Bern(L,m): \sigma_L
(BD^{-r})\geq \left(\frac{m}{L}\right)^{r/2 -1/4}\sqrt{m}\},$$ and
$$\E_2:=\{B\in Bern(L,m): \|B\|_{\ell_2\to\ell_2} \le \sqrt{L}+2\sqrt{m}\}.$$  By Corollary \ref{cor:sinvalue}, with $\Cr{C_alpha}=\Cr{C_2}^\frac{1}{1-\alpha}$ (and $L$ in place of $k$), we have $${P}(\E_1)\geq 1-2e^{-\Cr{C_3}
  (mL)^{1/2}}.$$ Furthermore, $${P}(\E_2)\ge 1-\Cl{Feld1}
e^{-\Cl[littlec]{Feld2} L}$$ for some constants $\Cr{Feld1}$ and $\Cr{Feld2}$ by \cite [Corollary V.2.1 with $\epsilon=1$]{feldheim2010}.  
 Setting $\E=\E_1\cap \E_2 $, we get $$\mathbb{P}(\E)\geq 1-2e^{-\Cr{C_3}(mL)^{1/2}}-\Cr{Feld1}e^{-\Cr{Feld2}L} .$$ Now, we note that for any $B\in \E$, by the constraints in \eqref{eq:opt} and the fact that $$\|Bu\|_2 \leq \|B\|_{op} \|u\|_\infty \sqrt{m} \leq 3Cm,$$ we have
 \[
\|BD^{-r}\big(\Phi (\hat{x}-x)+(\hat{e}-e)\big)\|_2 = \|B(\hat{u}-u) \|_2 \leq  \|B\hat{u}\|_2 + \|B{u}\|_2 \leq 6Cm.
\]
Let $BD^{-r}=TSR^T$ be the singular value decomposition of $BD^{-r}$, with the diagonal entries of $S$ arranged in decreasing order, and define $\widetilde{\Phi} = R^T\Phi$. Moreover, let $h=x-\hat{x}$ and $v=e-\hat{e}$. Then
\begin{align*}
6C m& \geq \|BD^{-r}(\Phi h+v)\|_2 = \|TSR^T (\Phi h+v) \|_2 = \|S\widetilde{\Phi} h+SR^Tv\|_2, 
\end{align*}
where the last equality is by unitary invariance of the norm. Denoting, for the moment,
by $A_L$ the restriction of a matrix $A$ to its first $L$ rows, we have 
\begin{align*}
6Cm & \geq \|S_L \widetilde{\Phi} h + S_L R^T v\|_2 \\
&\geq \sigma_L(S)\sqrt L\|\frac{1}{\sqrt L}(\widetilde{\Phi}_Lh + (R^T)_{{}_L }v)\|_2  \\&= \sigma_L(BD^{-r})\sqrt L\|\frac{1}{\sqrt L}(\widetilde{\Phi}_Lh + (R^T)_{{}_L} v)\|_2\\ 
&\geq \sqrt m \left(\frac{m}{L}\right)^{r/2-1/4}\sqrt L \|\frac{1}{\sqrt L}(\widetilde{\Phi}_Lh + (R^T)_{{}_L}  v\|_2.
\end{align*}
Above, for the second inequality we used the fact that $S$ is diagonal with its diagonal elements in decreasing order. For the last inequality we used the fact that $B\in \E$ to bound $\sigma_L(BD^{-r})$. Rearranging and using the reverse triangle inequality,
\[
\|\frac{1}{\sqrt L} \widetilde{\Phi}_Lh\|_2\leq 6C \Big(\frac{L}{m}\Big)^{r/2-3/4} + \frac{1}{\sqrt{L}}\|(R^T)_{{}_L} v\|_2. 
\]
Now, using the fact that \begin{equation}\|(R^T)_{{}_L} v\|_2\leq \| R^T v\|_2 = \|v\|_2 \leq 2\sqrt{m} \varepsilon\label{eq:v_bound},\end{equation}  we deduce that
\[
\|\frac{1}{\sqrt L} \widetilde{\Phi}_Lh\|_2\leq 6C \Big(\frac{L}{m}\Big)^{r/2-3/4} + 2\sqrt{\frac{m}{L}}\varepsilon.
\]
Let $\mathcal{E}_3$ be the event that $\frac{1}{\sqrt L} \widetilde{\Phi}_L$ satisfies the RIP of order $2k$ 
with constant $\delta_{2k}<1/9$. Then $$\mathbb{P}(\mathcal{E}_3) \geq 1-e^{-\Cr{C1_KSY}L}$$ by Lemma~\ref{lm:RIP}. 
Applying Proposition \ref{pro:foucart} with $p=2$, $\frac{1}{\sqrt
  L}\widetilde{\Phi}_L$ in place of $\Phi$, and $x$ and
$\hat{x}$ in place of $g$ and $f$ respectively, we obtain
\begin{equation}\label{eq:err}
\|x-\hat{x}\|_2 \leq \Cr{d3}\Big(\frac{L}{m}\Big)^{r/2-3/4}\delta + \Cr{d4} \sqrt{\frac{m}{L}}\varepsilon + \Cr{d5} \frac{\sigma_k(x)}{\sqrt{k}}.
\end{equation}
Thus \eqref{eq:err0} holds with probability $\mathbb{P}(\mathcal{E}_1 \cap \mathcal{E}_2 \cap \mathcal{E}_3) \geq 1-\Cr{whatever1}e^{-\Cr{whatever2}\sqrt{mL}}$.
\end{proof}

\proof[\textbf{Proof of Corollary~\ref{cor:RD}}]
We start with calculating the number of bits needed to represent a codeword $\tilde{q}:= BD^{-r}q  = BD^{-r}Q_{\sd}^r(x)$. The entries of $q$ take on values in $\A_\delta^K$, with $|\A_\delta^{K}| = 2K$, as seen from the definition \eqref{eq:alph}. 
As $\|D^{-r}\|_{\ell_\infty\rightarrow
  \ell_\infty} \leq m^r$ and $\|B\|_{\ell_\infty\rightarrow \ell_\infty}\leq
m$, we have $\|\widetilde{q}\|_\infty =\|BD^{-r}q\|_\infty\leq
m^{r+1}\|q\|_\infty$. Thus, recalling the definition of $\A_\delta^K$, it can be seen that each entry of $\widetilde{q}$ takes on values in a set of cardinality at most $2m^{r+1}K$. We have $L$ such entries and therefore need at most \begin{equation}\mathcal{R} = L(r+1)\log_2(m) + L \log_2(2K)\label{eq:OurRate}\end{equation} bits to uniquely represent $\widetilde{q}$.

Next, to control the distortion we apply Theorem \ref{thm:main} with $\epsilon=0$ and with $k,L,$ and $m$ satisfying
$$\frac{m}{\Cr{C_alpha}}\ge L \ge \Cr{RIPc} k \log (N/k).$$ 
In particular, we choose  $$k = \lfloor
L/(\Cr{RIPc} \log N) \rfloor.$$

This gives an upper bound on the reconstruction error associated with \eqref{eq:opt}, namely
\[
\|\hat{x}-x\|_2 \leq \Cr{d3} \left(\fr{L}{m}\right)^{r/2-3/4}\delta+\Cr{d5}\frac{\sigma_k(x)}{\sqrt{k}}=:\mathcal{D}.
\]
Solving for $m$ in \eqref{eq:OurRate} and substituting in the distortion expression above, we have the rate-distortion relationship 
$$\mathcal{D}(\mathcal{R}) \leq \left(\Cr{d3} \cdot L^{r/2-3/4}\cdot \left( 2K \right)^\frac{r/2-3/4}{r+1}\cdot \delta\right) \cdot
2^{-\frac{r/2-3/4}{r+1} \frac{\mathcal{R}}{L}} +
\Cr{d5}\frac{\sigma_{ k}(x)}{\sqrt{k}}.$$ Substituting for $k$ completes the proof of (ii). Finally (iii) follows trivially, since for $k$-sparse signals $\sigma_k(x)=0$ and since $L\leq \Cr{RIPc} k \log N$.
\endproof

\begin{proof}[\textbf{ Proof of Corollary~\ref{cor:compr}}]
By Corollary \ref{cor:RD}, the number of bits needed to represent $BD^{-r}q$ is \begin{equation}\mathcal{R} = L(r+1)\log_2(m) + L \log_2(2K) = L \log_2(2m^{r+1}K).\label{eq:rate1}\end{equation} 
Moreover, the resulting approximation error from Theorem \ref{thm:main} is \[
\|\hat{x}-x\|_2 \leq \Cr{d3} \left(\fr{L}{m}\right)^{r/2-3/4}\delta+\Cr{d5}\frac{\sigma_k(x)}{\sqrt{k}}.
\]
Take any $x \in \mathcal{X}$ and assume without loss of generality that  the entries of $x$ are sorted in decreasing order of magnitude. Then, we have 
\begin{align*}
\frac{\sigma_k(x)}{\sqrt{k}} &= \frac{1}{\sqrt{k}} \sum_{j=k+1}^N |x_j| \\ 
		           		 &\leq \frac{\Cr{C_wlp}}{\sqrt{k}} \sum_{j=k+1}^N j^{-1/p}
					    \leq \frac{\Cr{C_wlp}}{\sqrt{k}} \int_k^\infty z^{-1/p}dz \\
					 &\leq \frac{\Cr{C_wlp}}{1/p-1}k^{1/2-1/p}. 
\end{align*} 
Defining $\Cl[errorterms]{Ccombined}:=\Cr{Ccombined}(p)=\max(\Cr{d3}\delta,\Cr{d5}\frac{\Cr{C_wlp}}{1/p-1})$ we now have 
\[\|x-\hat{x}\|_2 \leq \Cr{Ccombined} \left( \left(\frac{L}{m}\right)^{r/2-3/4} + k^{1/2-1/p} \right).\]
Noting that the right hand side of above holds for any $k\leq \frac{L}{\Cr{RIPc}\log N}$ (once $\Cr{Ccombined}$ is replaced by its maximum over all $k\leq \frac{L}{\Cr{RIPc}\log N}$ if necessary), we select $k=\lceil\frac{L}{2\Cr{RIPc}\log N}\rceil \in [ \frac{L}{2\Cr{RIPc}\log N},  \frac{L}{\Cr{RIPc}\log N}]$. Moreover, since the same right hand side is a decreasing function of $k$, we may remove the ``ceiling" function to obtain 
\[\|x-\hat{x}\|_2  \leq \Cr{Ccombined} \left( \left(\frac{L}{m}\right)^{r/2-3/4} + \left(\frac{L}{2\Cr{RIPc}\log N}\right)^{1/2-1/p} \right). \]
Setting the two summands in the right hand side above to be equal, we have
\begin{equation}\label{eq:mL}
m = \left(\frac{L^{r/2-5/4+1/p}}{(2\Cr{RIPc}\log N)^{1/p-1/2}}\right)^\frac{1}{r/2-3/4}.
\end{equation}
Note that since $L\geq 2\Cr{RIPc} \log N$, then $m\geq 2\Cr{RIPc}\log N$ follows, and this choice of $m$ yields
\begin{align}
\|x-\hat{x}\|_2 &\leq 2 \Cr{Ccombined} \left(  \frac{L}{2\Cr{RIPc}\log N}\right)^{1/2-1/p}=:\mathcal{D}\nonumber\\
\mathcal{R}&= L\log_2\left(\frac{ 2 L^{(r/2-5/4+1/p)(r+1)/(r/2-3/4)}}{(2\Cr{RIPc}\log N)^{(1/p-1/2)(r+1)/(r/2-3/4)}}K\right).
\nonumber\end{align}
Upon rearranging, we have $L \leq(2\Cr{RIPc}\log N)(\frac{\mathcal{D}}{2\Cr{Ccombined}})^\frac{1}{1/2-1/p}$, which when substituted into the expression for $\mathcal{R}$ gives

\begin{align}
 \mathcal{R} &\leq  (2\Cr{RIPc}\log N) \left( (r+1)\log_2(2\Cr{RIPc}\log N) + \log_2 K + \log_2\left(\left(\frac{2\Cr{Ccombined}}{\mathcal{D}}\right)^{\frac{r+1}{1/p-1/2}+\frac{r+1}{r/2-3/4}} \right) \right) \left(\frac{2\Cr{Ccombined}}{\mathcal{D}}\right)^\frac{1}{1/p-1/2}
 \\ & \leq \Cl{Compr1} \log N \log\log N + \Cl{Compr2} \log N \left(\frac{2\Cr{Ccombined}}{\mathcal{D}}\right)^\frac{1}{1/p-1/2}\log\left(\left(\frac{2\Cr{Ccombined}}{\mathcal{D}}\right)^{\frac{r+1}{1/p-1/2}+\frac{r+1}{r/2-3/4}}\right).
 \end{align}
For large enough $\mathcal{R}$, say it is such that $a:=\mathcal{R}-\Cr{Compr1}\log N \log \log N>1$ and $b:=\frac{2\Cr{Ccombined}}{\mathcal{D}}>1$,  we can use the fact that $a\leq c b\log b$ implies $a \leq  c b \max\{\log a/c,1\}$ when $a,b>1$ and $c$ is a positive constant.  Then we obtain 
$$D\lesssim \left(\frac{\log N \log(\mathcal{R}/\log N- \log \log N)}{\mathcal{R}-\log N \log\log N}\right)^{1/p-1/2}.$$
			
\end{proof}

\section{Proof of Theorem \ref{thm:impv}}\label{sec:proof2}

%

First, we state the Johnson-Lindenstrauss Lemma \cite{JLoriginal}, in the form it appears in \cite{baraniuk2008simple}.
\begin{lm}[Johnson-Lindenstrauss lemma]\label{lm:ls_pts} Suppose $\eta$, $p\in (0,1)$, and $S$ is a finite set in $\mathbb{R}^m$. Let $B\in \R^{L\times m}$ be a Bernoulli random matrix whose entries take value 1 or -1 with equal probability. Set $\widetilde{B}=\frac{1}{\sqrt L} B$. Then 
$$
(1-\eta)\|x\|_2 \leq \|\widetilde{B} x\|_2\leq (1+\eta)\|x\|_2
$$
for all $x\in S$ with probability at least $1-p$, provided that 
provided that $L\geq \frac{4+2\log_{|S|}(1/p)}{\eta^2/2-\eta^3/3}\ln|S|.$
\end{lm}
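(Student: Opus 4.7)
}
The plan is to mimic the skeleton of the proof of Theorem \ref{thm:main}, replacing the crude $\|Bu\|_2 \le 3Cm$ bound by the sharper Johnson--Lindenstrauss-type bound $\|Bu\|_2 \le (2+\eta)\gamma(r)\sqrt{mL}$; the saved factor $\sqrt{m/L}$ is exactly what turns the exponent $r/2-3/4$ into $r/2-1/4$. As in Section \ref{sec:proof1}, I would first collect three good events on $(\Phi,B)$: (i) $\sigma_L(BD^{-r})\ge \sqrt{m}(m/L)^{r/2-1/4}$, which follows from Corollary \ref{cor:sinvalue} applied with $\alpha=1/2$; (ii) the matrix $\tfrac{1}{\sqrt L}(R^T\Phi)_L$ (with $R$ from the SVD $BD^{-r}=TSR^T$) satisfies the RIP of order $2k$ with constant $\delta_{2k}<1/9$, by Lemma \ref{lm:RIP} applied in the regime $L\ge \Cr{RIPc2}(k\log N+k\log m)$; and (iii) $\|B\|_{\ell_2\to\ell_2}\le \sqrt L+2\sqrt m$ via standard sub-Gaussian concentration. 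Each of these fails with probability at most $e^{-cL}$.

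The key technical step---and the main obstacle---is establishing the \emph{uniform} feasibility claim
\[
\|Bu(x)\|_2\le 3\gamma(r)\sqrt{mL}\qquad\text{for all } x\in\Sigma_k^N\cap B_2^N \text{ with } \|\Phi x\|_\infty\le 1,
\]
where $u(x)$ is the state vector produced by $Q_{\sd}^r$ on input $\Phi x$. The strategy is a covering-plus-JL argument. Build a cover $\mathcal{N}$ of $\Sigma_k^N\cap B_2^N$ at scale $\varepsilon\asymp 1/m$ by taking, for each of the $\binom{N}{k}$ supports, a $\varepsilon$-net of the unit ball in $\R^k$; this yields $\log|\mathcal N|\lesssim k\log N+k\log m$, which matches the hypothesized lower bound on $L$. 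For each $x_0\in\mathcal N$, since $\|u(x_0)\|_2\le\gamma(r)\sqrt m$, Lemma \ref{lm:ls_pts} applied to the finite set $\{u(x_0):x_0\in\mathcal N\}$ gives $\|Bu(x_0)\|_2\le (1+\eta)\gamma(r)\sqrt{mL}$ for all $x_0\in\mathcal N$ simultaneously, with probability $1-e^{-cL}$, provided $L\gtrsim \log|\mathcal N|$. To pass from $\mathcal N$ to arbitrary $x$, I would pick $x_0\in\mathcal N$ with $\|x-x_0\|_2\le\varepsilon$, write
\[
\|Bu(x)\|_2 \le \|Bu(x_0)\|_2 + \|B(u(x)-u(x_0))\|_2,
\]
and control the second term by combining the operator-norm bound $\|B\|_{\ell_2\to\ell_2}\lesssim \sqrt m$ with the fact that at scale $\varepsilon\asymp 1/m$ the perturbation $u(x)-u(x_0)=D^{-r}(\Phi(x-x_0)-(q(x)-q(x_0)))$ can be bounded using the stability of the quantizer together with a union bound (over the net) on quantizer transitions; the $\varepsilon\asymp 1/m$ choice is exactly what forces the extra $k\log m$ term in the hypothesis on $L$. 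The difficulty is precisely that $u(\cdot)$ is not Lipschitz: near quantization thresholds, $q(x)\ne q(x_0)$, so one must either argue that for a fine enough net these ``boundary'' events add at most $\gamma(r)\sqrt{mL}$ to the bound, or alternatively discretize the state-vector space along a structured grid. This is the place where the slack factor $(2+\eta)$ rather than $(1+\eta)$ in the constraint of \eqref{eq:noiseless} is consumed.

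Once feasibility is in hand, the remaining error analysis is essentially a cleaner version of the Theorem \ref{thm:main} computation. By feasibility of both $x$ and $\hat x$ together with the triangle inequality, $\|BD^{-r}\Phi(\hat x-x)\|_2\le 2(2+\eta)\gamma(r)\sqrt{mL}=6\gamma(r)\sqrt{mL}$ for $\eta=1$. Using the SVD $BD^{-r}=TSR^T$, unitary invariance, and the monotonicity of singular values,
\[
\sigma_L(BD^{-r})\,\|\widetilde\Phi_L(\hat x-x)\|_2\le \|BD^{-r}\Phi(\hat x-x)\|_2\le 6\gamma(r)\sqrt{mL},
\]
so on event (i) I get $\tfrac{1}{\sqrt L}\|\widetilde\Phi_L(\hat x-x)\|_2\le 6\gamma(r)(L/m)^{r/2-1/4}$. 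Since $x$ is exactly $k$-sparse and $\hat x$ is $\ell_1$-minimal among feasible points, Proposition \ref{pro:foucart} (with $p=2$, the RIP from event (ii), and $\sigma_k(x)_1=0$) yields $\|x-\hat x\|_2\lesssim (L/m)^{r/2-1/4}\delta$, which is the claimed bound. A union bound over events (i)--(iii) and the uniform-JL event gives the stated probability $1-\Cr{whatever12}e^{-\Cr{whatever22}L}$.
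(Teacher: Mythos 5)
Your proposal is addressed to Theorem \ref{thm:impv}, not to the statement you were given (Lemma \ref{lm:ls_pts}, the Johnson--Lindenstrauss lemma, which the paper takes from \cite{JLoriginal,baraniuk2008simple} without reproducing a proof). Treating it instead as a blind attempt at Theorem \ref{thm:impv}: your skeleton is the right one and matches the paper's, namely good events on $\sigma_L(BD^{-r})$ (Corollary \ref{cor:sinvalue}) and on the RIP of $\tfrac{1}{\sqrt L}(R^T\Phi)_L$ (Lemma \ref{lm:RIP}), uniform feasibility of the constraint in \eqref{eq:noiseless}, and a final pass through Proposition \ref{pro:foucart}.

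The gap is in the feasibility step. You cover $\Sigma_k^N\cap B_2^N$ with a Euclidean $\varepsilon$-net $\mathcal N$, apply Lemma \ref{lm:ls_pts} to the finite set $\{u(x_0):x_0\in\mathcal N\}$, and then try to transfer to general $x$ by bounding $\|B(u(x)-u(x_0))\|_2$. But when $x$ and $x_0$ lie in different quantization cells, $u(x)-u(x_0)=D^{-r}\big(\Phi(x-x_0)-(q(x)-q(x_0))\big)$ contains a $q$-difference of size $\Omega(\delta)$ in at least one coordinate, and refining $\varepsilon$ does not shrink it; the only generic bound is then $\|B(u(x)-u(x_0))\|_2\le\|B\|_{\ell_2\to\ell_2}\,\|u(x)-u(x_0)\|_2\lesssim m$, which destroys the $\sqrt{mL}$ target and collapses you back to the exponent $r/2-3/4$ of Theorem~\ref{thm:main}. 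You correctly flag that $u(\cdot)$ is not Lipschitz, but the two sketched remedies are left unargued and do not obviously close this. The paper avoids the issue entirely: Lemma \ref{lm:constraint} builds a net $S$ with the extra requirement \eqref{eq:cell-net} that each $x$ have a net point $s$ in the \emph{same quantization cell}, so the transfer term is just $BD^{-r}E(x-s)$ with no $q$-jump; the feasibility of that stronger, cell-aware net construction rests on the cell-count estimate of Lemma \ref{lm:sinan} (the number of $\sd$ cells meeting $B_2^k$ is only polynomial in $m$), which is the ingredient your argument is missing. Once the within-cell net is in place, the JL step and the union over supports proceed essentially as you outline.
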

\begin{defn}[Quantization cell] For a fixed  quantizer $Q: \mathbb{R}^m \rightarrow \mathcal {A}^m $ and a measurement scheme 
\begin{align*}
E: & \mathcal {X} \rightarrow \mathbb{R}^m \\ &x\rightarrow \ \ y,
\end{align*}
the quantization cells associated with $Q$ and $E$, that intersect $\mathcal{X}$, are defined by
$$C_{Q,E}(q) =\{ x\in \mathcal {X}: \ \ Q(E(x))= q\},$$
where we call $q \in \mathcal{A}^m$ the center of the cell $C_{Q,A}(q)$.
\end{defn}

The following lemma provides an upper bound on the number of $\sd$ cells in a bounded region. It is due to Sinan G{\"u}nt{\"u}rk. We provide his original (unpublished) proof in the appendix.   
\begin{lm}\label{lm:sinan} Let $B_R$ be the $k$ dimensional $\ell_2$ ball with radius $R$, $E$ an $m\times k$ $(k\leq m)$ sub-Gaussian matrix. Denote by $Q_{\SD}^r$ the $r$th order stable $\sd$ quantizer and assume the quantization step size is $\delta$. For any $\alpha>0$, with probability $1-e^{-\alpha^2 mk}$, the total number $N_C$ of $\sd$ cells associated with $Q_{\sd}^{r}$ and $E$, that intersect $B_R$, satisfies
\begin{equation}\label{eq:n_cells}
N_C \leq \Cl{C_r}  \max\left\{2^{2k}, \delta^{-k} R^k (1+\alpha)^k\frac{m^{(r+1)k}}{k^{k/2-1/2}}\right\},
\end{equation}
where $\Cr{C_r}$ is a positive constant that depends only on $r$ and the parameters of the sub-Gaussian distribution.
\end{lm}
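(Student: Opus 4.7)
The plan is to identify each cell with its unique codeword $q\in\mathcal{A}^m$ and to count the realizable $q$'s. Any realizable $q$ satisfies $q = Ex - D^r u$ with $x \in B_R$ and $\|u\|_\infty \leq \gamma(r)$ (by stability), which I rewrite as $D^{-r}q = D^{-r}Ex - u$. Thus, (i) $D^{-r}q$ lies within $\ell_\infty$-distance $\gamma(r)$ of the $k$-dimensional ellipsoid $D^{-r}E(B_R)$, and (ii) $D^{-r}q$ sits on the lattice $\Lambda := \delta D^{-r}\Z^m$ of covolume $\delta^m$ (using $\det D = 1$). Hence $N_C$ is at most the number of $\Lambda$-points in the ``tube'' $\Omega := D^{-r}E(B_R) + [-\gamma(r),\gamma(r)]^m$.

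To bound that count I would decompose $\R^m = W \oplus W^\perp$ with $W := \mathrm{range}(D^{-r}E)$ of dimension $k$ almost surely. On $W$, the $W$-component of $D^{-r}q$ lies in the $k$-dimensional fattened ellipsoid whose volume is at most $\mathrm{vol}_k(B_2^k)\cdot R^k \prod_{i=1}^k \sigma_i(D^{-r}E)$ (up to a constant-factor fattening); on $W^\perp$, using $\|u\|_2 \leq \sqrt m\,\gamma(r)$, the component lies in a ball of radius $\sqrt m\,\gamma(r)$. A standard volumetric lattice-point count (fatten $\Omega$ by a fundamental domain of $\Lambda$ and divide by $\delta^m$), together with Stirling's estimate $\mathrm{vol}_k(B_2^k) \sim (2\pi e/k)^{k/2}/\sqrt{\pi k}$, yields an upper bound of the form
\[
N_C \;\lesssim\; \frac{R^k \prod_{i=1}^k \sigma_i(D^{-r}E)}{\delta^k\, k^{k/2-1/2}}.
\]

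Next, I would control $\prod_i \sigma_i(D^{-r}E)$ probabilistically. The singular-value bound $\|D^{-r}\|_{\mathrm{op}} \lesssim m^r$ from Proposition 3.2 of \cite{GLPSY}, paired with sub-Gaussian concentration $\|E\|_{\mathrm{op}} \leq (1+\alpha)\sqrt m$ holding with probability at least $1-e^{-c\alpha^2 mk}$, gives $\sigma_i(D^{-r}E) \lesssim (1+\alpha) m^{r+1/2}$ for each $i$. A sharper Hadamard-type column bound $\det((D^{-r}E)^*(D^{-r}E))^{1/2} \leq \prod_i \|D^{-r}E_i\|_2$ then upgrades the $m^{(r+1/2)k}$ prefactor to the $m^{(r+1)k}$ claimed in the lemma, outside the degenerate regime.

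The main obstacle is keeping the volumetric argument on $W^\perp$ under control: a naive $\ell_2$-tube of radius $\sqrt m\,\gamma(r)$ in $W^\perp$ would introduce a factor of order $m^{(m-k)/2}/(m-k)^{(m-k)/2}$, exponential in $m$. Circumventing this blowup requires exploiting both the cube structure of the stability constraint $\|u\|_\infty \leq \gamma(r)$ and the specific orientation of $\Lambda$ relative to $W^\perp$, essentially because $\Lambda = \delta D^{-r}\Z^m$ is strongly aligned with $W$ through the shared $D^{-r}$ factor; the proper accounting for this alignment is the delicate step. The fall-back term $2^{2k}$ in the stated maximum accommodates the degenerate small-$R$ regime, where the volumetric estimate drops below a trivial combinatorial count.
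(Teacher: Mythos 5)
Your proposal takes a genuinely different route from the paper, but it contains a gap that you yourself flag and then do not close, and I do not believe the fix you gesture at actually works.

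You attempt to count lattice points of $\Lambda=\delta D^{-r}\Z^m$ inside the fattened ellipsoid $\Omega=D^{-r}E(B_R)+[-\gamma(r),\gamma(r)]^m$, and you correctly observe that the naive volume/covolume heuristic introduces an uncontrolled $(m-k)$-dimensional factor in the $W^\perp$ directions. You then write that the blowup is circumvented ``essentially because $\Lambda$ is strongly aligned with $W$ through the shared $D^{-r}$ factor,'' but no such alignment exists: $W=D^{-r}(\operatorname{col} E)$ depends on the random subspace $\operatorname{col} E$, which has nothing to do with $\Z^m$, and after hitting both sides by $D^r$ the problem simply becomes counting integer points in $E(B_R)+[-c,c]^m$ with $c$ a constant --- the condition number of $D^{-r}$ has disappeared and the $(m-k)$-dimensional volume problem is still there. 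The reason the naive volumetric count is wrong (not merely loose) is that the ``tube'' is a cube-fattened $k$-dimensional set, and the lattice is exactly aligned with the cube, so most of the volume corresponds to no lattice point at all. This is not a technicality that can be patched by a clever choice of fundamental domain; it is the central obstruction.

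The paper sidesteps this entirely by changing ambient space. Since each cell of the quantizer is a region of $B_R\subset\R^k$ cut out by the hyperplanes $H_{i,l}=\{x:\langle\tilde e_i,x\rangle=l-\tfrac12\}$, the count lives in $\R^k$, not $\R^m$: one bounds the number $n_R$ of hyperplanes that can meet $B_R$ (via Cauchy--Schwarz, $\#\{l:H_{i,l}\cap B_R\neq\emptyset\}\le 1+2R\|\tilde e_i\|_2$) and then invokes the classical fact that an arrangement of $n$ hyperplanes in $\R^k$ has at most $\sum_{i=0}^k\binom{n}{i}$ regions. The sub-Gaussian concentration of $\|E\|_F$ then controls $\sum_i\|\tilde e_i\|_2$, and Stirling gives the $k^{-(k/2-1/2)}$ factor. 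In your language, the paper is observing that the map from nonempty cells to codewords $q$ is injective, so it suffices to count cells in $\R^k$, where the arrangement bound is polynomial in $n_R$ of degree $k$ and carries no $m-k$ penalty. That reframing is exactly the missing idea.

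A smaller confusion: you speak of needing a Hadamard-type bound to ``upgrade'' an $m^{(r+1/2)k}$ prefactor to $m^{(r+1)k}$. You are trying to prove an upper bound, and $m^{(r+1/2)k}\le m^{(r+1)k}$, so a smaller prefactor would already imply the claim; there is nothing to upgrade. In the paper the exponent $r+1$ simply arises from chaining $\|\tilde E\|_F\le m^r\|E\|_F$ with $\sum_i\|\tilde e_i\|_2\le\sqrt m\,\|\tilde E\|_F$ and $\|E\|_F\lesssim(1+\alpha)\sqrt{mk}$, not from any optimization of the product of singular values.
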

\begin{lm}\label{lm:constraint}
Let $B \in \mathbb{R}^{L\times m}$ be a Bernoulli random matrix, $E \in \mathbb{R}^{m \times k}$ be a sub-Gaussian matrix, and $D \in \mathbb{R}^{m \times m}$ be the difference matrix.  Let $B_2^k$ be the unit ball of $\ell_2$ in $\mathbb{R}^k$. Assume that 
\begin{equation}\label{eq:Lbound}
L \geq \Cl{c_1}\frac{\log(\frac{1}{\theta} )+k\log m}{\eta^2/2-\eta^3/3}.
\end{equation}  Then for any $x\in B_2^k$, with probability over $1-e^{- mk}-\theta $, we have
\begin{equation}\label{eq:constraint}
\|BD^{-r}(E x-q)\|_2 \leq \sqrt{m L}(2+\eta),
\end{equation}
where $q=Q^r_{\Sigma\Delta}(E x)$ and $Q^r_{\Sigma\Delta}$ denotes a stable $r$th order $\sd$ quantizer with step size $\delta=1$, and $\Cr{c_1}$ is a positive constant that may depend on $r$. 
\end{lm}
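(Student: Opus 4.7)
Set $u := D^{-r}(Ex-q)$, so the quantity to control is $\|Bu\|_2$. Stability of the $\sd$ scheme with step size $\delta = 1$ gives $\|u\|_\infty \leq \gamma(r)$, and hence $\|u\|_2 \leq \gamma(r)\sqrt{m}$. The central difficulty is that $u$ depends on $x$ through the discontinuous map $x \mapsto Q^r_\sd(Ex)$, so a direct $\epsilon$-net argument on $B_2^k$ cannot work. The plan is to combine Lemma \ref{lm:sinan}, which bounds the number of distinct quantization outputs $q$, with a \emph{cell-aware} $\epsilon$-net to which Lemma \ref{lm:ls_pts} can be applied.

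Concretely, I would first apply Lemma \ref{lm:sinan} with $R=1$ and a constant $\alpha$ to obtain, with probability at least $1-e^{-mk}$, a bound $N_C$ on the number of $\sd$-cells intersecting $B_2^k$ satisfying $\log N_C \lesssim_r k\log m$. For each such cell $j$ (center $q_j$) fix an $\epsilon$-net $\mathcal{N}_j \subset \mathrm{cell}_j \cap B_2^k$ in the $\ell_2$ metric with $|\mathcal{N}_j| \leq (3/\epsilon)^k$, so that $\mathcal{N} := \bigcup_j \mathcal{N}_j$ has cardinality at most $N_C(3/\epsilon)^k$. For each $\tilde x \in \mathcal{N}$, set $\tilde u := D^{-r}(E\tilde x - q_j)$; stability gives $\|\tilde u\|_2 \leq \gamma(r)\sqrt m$, and applying Lemma \ref{lm:ls_pts} to $\{\tilde u\}$ yields, with probability at least $1-\theta$ on $B$,
\[
\|B\tilde u\|_2 \leq (1+\eta/2)\sqrt L\,\|\tilde u\|_2 \leq (1+\eta/2)\gamma(r)\sqrt{mL} \qquad \text{for every } \tilde x \in \mathcal{N},
\]
provided $L \gtrsim \bigl(\log(1/\theta) + \log N_C + k\log(1/\epsilon)\bigr)/\eta^2$.

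For an arbitrary $x \in B_2^k$, pick $\tilde x \in \mathcal{N}$ in the same cell as $x$ with $\|x-\tilde x\|_2 \leq \epsilon$. Since $Q^r_\sd(Ex) = Q^r_\sd(E\tilde x) = q_j$, one has the exact linear identity $u(x) - \tilde u = D^{-r}E(x-\tilde x)$, and therefore $\|B(u(x)-\tilde u)\|_2 \leq \|BD^{-r}E\|_{op}\,\epsilon$. Standard sub-Gaussian concentration gives $\|B\|_{op},\|E\|_{op} \lesssim \sqrt m$, and Proposition~3.2 of \cite{GLPSY} gives $\|D^{-r}\|_{op} \lesssim m^r$, so $\|BD^{-r}E\|_{op} \lesssim m^{r+1}$ (with probability absorbed into the $e^{-mk}$ term). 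Choosing $\epsilon \asymp m^{-(r+1)}\sqrt{mL}$ makes $\|BD^{-r}E\|_{op}\,\epsilon \leq \sqrt{mL}$ while keeping $\log(1/\epsilon) \lesssim_r \log m$; the JL requirement on $L$ then collapses to \eqref{eq:Lbound}. Combining via the triangle inequality and absorbing $\gamma(r)$ into the final prefactor gives $\|Bu(x)\|_2 \leq (1+\eta/2)\gamma(r)\sqrt{mL} + \sqrt{mL} \leq (2+\eta)\sqrt{mL}$, as required.

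The main obstacle is the within-cell step: the crude product bound $\|BD^{-r}E\|_{op} \lesssim m^{r+1}$ forces a small $\epsilon$, and one must verify that the resulting net of size $N_C(3/\epsilon)^k$ does not make the Lemma~\ref{lm:ls_pts} condition stronger than \eqref{eq:Lbound}. This works because both $\log N_C$ and $\log(1/\epsilon)$ grow only like $O_r(\log m)$, so the total log-cardinality is $O_r(k\log m)$, precisely matching the stated dependence (with the constant $c_1$ in \eqref{eq:Lbound} absorbing the $r$-dependence).
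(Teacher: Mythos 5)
Your proposal follows essentially the same route as the paper's own proof: both use Lemma~\ref{lm:sinan} with $R=1$ to bound the number of cells, build a cell-aware $\epsilon$-net of $B_2^k$ with $\lesssim N_C(3/\epsilon)^k$ points, apply Lemma~\ref{lm:ls_pts} to the state vectors on the net, and then bridge the gap to an arbitrary $x$ via the exact linear identity inside a cell plus a crude $\|BD^{-r}E\|_{op}$ bound (your $\epsilon\asymp m^{-(r+1)}\sqrt{mL}$ is just the paper's $\epsilon=\sqrt{mL}/\|BD^{-r}E\|_2$ after substituting that bound). The only cosmetic difference is that the paper avoids a separate probability event for $\|B\|_{op}$ by using the deterministic Frobenius bound $\|B\|_2\leq\|B\|_F=\sqrt{mL}$ and conditions on $\|E\|_F\leq 2\sqrt{mk}$ (the same event already controlled in Lemma~\ref{lm:sinan}), whereas you invoke operator-norm concentration for $B$ and $E$ and informally fold it into the $e^{-mk}$ term; either accounting is fine, but the paper's is tighter bookkeeping.
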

\begin{proof}

A standard $\epsilon$-net, say $S$, for $B_2^k$ is a finite set of points in $B_2^k$ with the property that for each  $x\in B_2^N$ there exists an $s\in S$ with $\|x-s\|_2 \leq \epsilon$. We will work with such a net $S$ for $B_2^k$, but we require that in addition to 
$s$ being $\epsilon$-close to $x$ that it lies in the same quantization cell as $x$. That is 
 \begin{equation}\label{eq:cell-net}\max\limits_{x\in B_2^k} (\ \  \min\limits_{s\in S, Q_{\sd}^r (Ex)=Q_{\sd}^r (Es)}\|x-s\|_2) \ \ \leq \ \ \epsilon.
 \end{equation} 
 
In particular, if such an $S$ exists with a sufficiently small $\epsilon$, say $\epsilon =\sqrt{mL}/\|BD^{-r}E\|_2$, and if all the points in $S$ satisfy a version of \eqref{eq:constraint}  with a slightly tighter bound,
 \begin{equation}\label{eq:constraint_loose}
 \|BD^{-r}(Es-q)\|_2\leq \sqrt{mL}(1+\eta),
 \end{equation}
  then we can show that \eqref{eq:constraint} is satisfied by all the points in $B_2^k$.  Indeed, suppose \eqref{eq:cell-net} and \eqref{eq:constraint_loose} hold. Let $x\in B_2^k$, and let $s$ be its associated point in $S$. Denote  
  $q_x=Q_{\sd}^r(Ex)$, and 
  $q_s=Q_{\sd}^r(Es)$, then using the triangle inequality we have
\begin{align*}\label{eq:decomp}
\|BD^{-r}(Ex-q_x)\|_2&\leq \|BD^{-r}(Es-q_s)\|_2+\|BD^{-r}E(s-x)\|_2  +\|BD^{-r}(q_x-q_s)\|_2 \\&\leq (1+\eta)\sqrt{mL}+\|BD^{-r}E\|_2\epsilon \\
&\leq \sqrt{mL}(2+\eta).
\end{align*} 
Above, for the second inequality we  used the assumption that $s$ satisfies \eqref{eq:constraint_loose} and that $s$ lies in the same cell as $x$.
The remaining task is to find an $\epsilon$-net $S$ that satisfies both \eqref{eq:cell-net} with $\epsilon =\sqrt{mL}/\|BD^{-r}E\|_2$ and \eqref{eq:constraint_loose}.

To cover $B_2^k$ in the sense of \eqref{eq:cell-net}, we first use $\epsilon$-balls to cover each quantization cell that intersects with $B_2^k$, and then put them together to get a cover of $B_2^k$. Since $C_{Q_{\sd}^r,E} \cap B_2^k \subseteq B_2^k$, the number of $\epsilon$ balls to cover a single cell is simply bounded by the cardinality of a net for $B_2^k$ and that is, in turn,  bounded by $\left(\frac{3}{\epsilon}\right)^k$ (see, e.g., \cite{foucart2013mathematical}). Moreover, by Lemma \ref{lm:sinan}, the total number of cells, $N_C$, is bounded above by \eqref{eq:n_cells}. Set $\alpha =1$ in \eqref{eq:n_cells}. Then with probability exceeding $1-e^{-mk}$ on the draw of $E$, the cardinality of $S$ can be bounded by
\[
|S| \leq \left(\frac{3}{\epsilon} \right)^k N_C \leq \Cl{C_r2}  m^{4kr}.
\]  
Here $\Cr{C_r2}$ is a positive constants that depend on $r$. To obtain the above bound, we condition on the event $\|E\|_2 \leq \|E\|_F\leq 2\sqrt{mk}$ (which is the same event that yields the probability bound in Lemma \ref{lm:sinan}, hence we don't need to account for it again), and we use the estimates 
 $$\epsilon = \frac{\sqrt{mL}}{\|BD^{-r}E\|_2} \geq \frac{\sqrt{mL}}{\|B\|_2\|D^{-r}\|_2\|E\|_2}  \geq \frac{\sqrt{mL}}{\sqrt{mL} \cdot m^r \cdot 2\sqrt{km}}  \geq \frac{1}{2} m^{-r-1}.$$
Now for this $S$, we use the stability of the $\sd$ quantizer, i.e.,  $\|D^{-r}(Es-q)\|_2\leq \gamma(r)\sqrt m$ and the Johnson-Lindenstrauss Lemma (Lemma  \ref{lm:ls_pts}), to get \eqref{eq:constraint} satisfied  by all points in $S$ with probability $\theta$ as long as $L$ satisfies \eqref{eq:Lbound}.  Hence both \eqref{eq:cell-net} and \eqref{eq:constraint_loose} are now satisfied as desired, and the proof is complete. In particular, the probability is obtained by combining $\theta$ with the the probability of failure associated with Lemma \ref{lm:sinan}.
\end{proof}
\begin{remark} For a general step size $\delta$ of the alphabet, \eqref{eq:constraint} generalizes to 
\[
\|BD^{-r}(E x-q)\|_2 \leq \sqrt{m L}(2+\eta)\delta,
\]
provided that 
$L \geq \Cr{c_1} \frac{\log(\frac{1}{\theta} )+k\log \frac{m}{\delta}}{\eta^2/2-\eta^3/3},$ by some minor modifications to the proof of Lemma \ref{lm:constraint}. 
\end{remark}

\begin{proof}[\textbf{Proof of Theorem \ref{thm:impv}}]
Fix a support set $T\subset \{1,...,N\}$ with $|T|=k$ and invoke Lemma \ref{lm:constraint} with $\theta =e^{-2k\log (eN/k)}$. Then, provided  $$L\geq \Cr{c_1} \frac{k\big(2\log( eN/k)+ \log m \big)}{\eta^2/2-\eta^3/3},$$ all $x\in \Sigma_k^N\cap B_2^N$ that are supported on $T$ satisfy \eqref{eq:constraint} with probability exceeding  $1-e^{-2k\log(eN/k)}-e^{-mk}$. Applying a union bound over all sets $T$ with $|T|=k$, we conclude that the probability that there exists $x\in \Sigma_k^N \cap B_2^N$ violating \eqref{eq:constraint} is  no greater than $\binom{N}{k} (e^{-2k\log (eN/k)}+e^{-km}) \leq 
e^{- k\log(eN/k)}+e^{-km+k\log (eN/k)}$ by using the bound $\binom{N}{k}\leq \big(\frac{eN}{k}\big)^k$. 
Conditioning on the event that \eqref{eq:constraint} is satisfied for all $x \in \Sigma_k^N\cap B_2^N$, we have
\begin{equation}\|BD^{-r} \Phi (\hat{x}-x)\|_2 \leq 2\delta\sqrt{mL}(2+\eta).\label{eq:const1}\end{equation}
 On the other hand, let $BD^{-r}=TSR^T$ be the singular value decomposition of $BD^{-r}$. By Lemma \ref{lm:RIP}, there exists a $\Cr{C1_KSY}$ such that as long as $L\geq \Cr{C2_KSY}\frac{1}{\delta_{2k}^2} k\log N $,  the matrix $\frac{1}{\sqrt L}R^T\Phi$ satisfies the RIP of order $2k$ with (say) constant $\delta_{2k}=1/10$ with probability over $1-e^{-\Cr{C1_KSY} L}$. Conditioning further on this event, we have 
\begin{align}\label{eq:const2}
\|BD^{-r} \Phi (\hat{x}-x)\|_2& = \|SR^T\Phi (\hat{x}-x)\|_2 \geq \sigma_L(BD^{-r}) \| R^T\Phi (\hat{x}-x)\|_2  \nonumber\\
& \geq \sqrt m\left(\frac{m}{L}\right)^{r/2-1/4}\sqrt L \|\frac{1}{\sqrt L}R^T\Phi(\hat{x}-x)\|_2.
\end{align}
Combining the inequalities \eqref{eq:const1} and \eqref{eq:const2}, we get
\[
\|\frac{1}{\sqrt L}R^T \Phi (\hat{x}-x)\|_2 \leq 2\delta(2+\eta)\left(\frac{L}{m}\right)^{r/2-1/4}.
\]
Finally, set $\eta=1$, and apply Proposition \ref{pro:foucart}  to complete the proof.
\end{proof}


\section{Numerical Experiments}
\noindent\textbf{Experiment 1: {Sparse signals}}.
We illustrate the rate-distortion relation \eqref{expdecaycomp}, which was derived as a corollary of the stability results in Theorem \ref{thm:main} and Theorem \ref{thm:impv}. We set $N=1200$, the sparsity level $k=5$, and we vary $m$ in the interval $[m_{\min},m_{\max}]$ with $m_{\min}=\lfloor 10^{2.1}\rfloor $ and $m_{\max}= 10^{3} $. We randomly generate and fix an $m_{\max}\times N$  Gaussian measurement matrix $\Phi $ and an $L \times m_{\max}$ Bernoulli encoding matrix $B$, with $L=200$. By Theorem \ref{thm:main}, fixing $L$ in this way is permitted, as  $k$ and $N$ are both fixed.    
For each value of $m$, we form the sensing matrix $\Phi_m$ using the first $m$ rows of $\Phi$ and the JL matrix $B_m$ using the first $m$ columns of $B$. A test set $S=\{x_i \in \Sigma_5^N: i=1,...,T\}$, with $T=50$ sparse signals is generated. Here, each non-zero entry of $x_i$ is generated independently from the standard Gaussian distribution. Denote by $\hat{x}_i^1$ be the reconstruction from $q_i^1=Q_{\sd}^1( \Phi_m x_i)$ via \eqref{eq:opt2} and $\hat{x}_i^2$ be the reconstruction from $q_i^2=Q_{\sd}^2(\Phi_m x_i)$ via \eqref{eq:opt}, both with $\epsilon=0$. Note that here we use the quantization alphabet $\mathcal{A}_{\delta}^K$ with $K=20$, $\delta=0.1$.  Figure \ref{Fig:sparse} is a  semi-log plot of the average error $\mathcal{D}=\frac{1}{T} \sum\limits_{i=1}^T \|x_i- \hat{x}_i^r\|^2_2$ of the $r$th order $\sd$ ($r=1,2$) as a function of the rate $\mathcal{R}=L\log_2 2m^{r+1} K $ (as computed in Corollary \ref{cor:RD}). 
The slopes of the linear fitting curves match the coefficient in front of $\mathcal{R}$ in  \eqref{expdecaycomp}. This is because when $L$ is known and the signal is strictly sparse, \eqref{expdecaycomp}  reduces to 
\[
\mathcal{D}\lesssim 2^{-\frac{(r/2-3/4)\mathcal{R}}{L}}
\]
if \eqref{eq:opt} is used for reconstruction and
\[
\mathcal{D}\lesssim 2^{-\frac{(r/2-1/4)\mathcal{R}}{L}}
\]
if \eqref{eq:opt2} is used for reconstruction.  This implies that,  for the current choice of parameters, the theoretical slopes for both curves are about $0.00125$ which is close to what we observe in the figure.

\begin{figure}[htbp]
\begin{center}
\includegraphics[width=9cm, height=7cm]{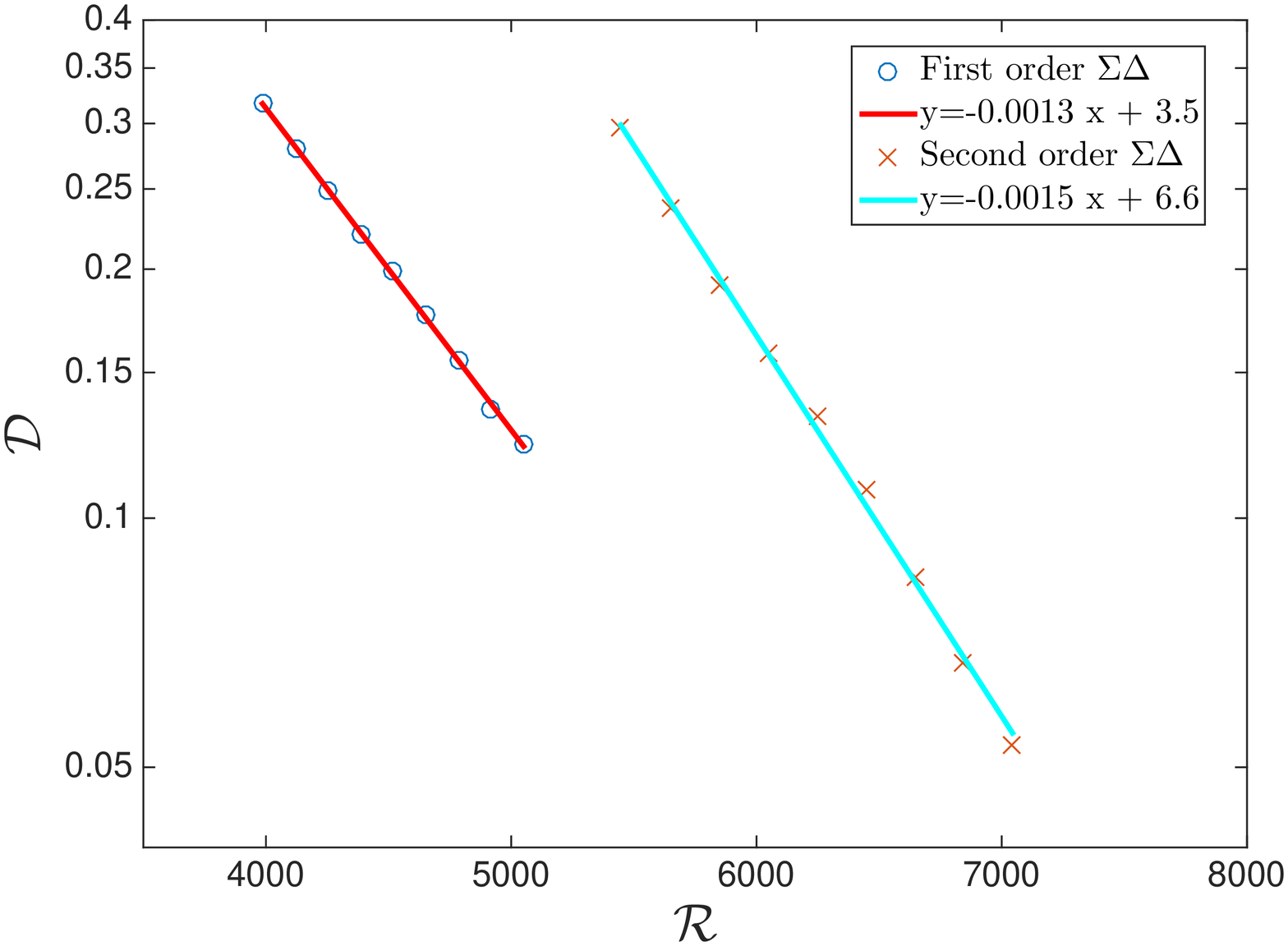}
\caption{Semi-log plot of $\mathcal{D}$ versus $\mathcal{R}$ where  $\mathcal{D}$ is the average error of reconstructions of 50 sparse signals.}
\label{Fig:sparse}
\end{center}
\end{figure}

\noindent\textbf{Experiment 2: Compressible signals}. This experiment illustrates the polynomial rate-distortion relationship for compressible signals (Corollary \ref{cor:compr}).  Here, we again compute the distortion $\mathcal{D}$ for different values of $\mathcal{R}$, where both change through varying $m\in [m_{\min}, m_{\max}]$ with $m_{\min}=\lfloor 10^{2.1}\rfloor $ and $m_{\max}=\lfloor 10^{3.2} \rfloor $. Similar to the previous experiment,  the matrix $B_m$ is formed by taking the upper left block of size $L\times m$ from a predefined big Bernoulli matrix $B$ of size $m_{\max} \times m_{\max}$. For each $m$, $L$ is calculated via Equation \eqref{eq:mL} with the $\Cr{RIPc}$ chosen heuristically but fixed for all choices of $m$. Here, we set $N=1200$, $\delta=0.1$, $K=20$, $T=500$, $r=2$, and we generate compressible signals from the $w\ell_p$ ball, with $p=1/3$ and $p=1/4$. For these choices of $p$, the results are reported in Figure \ref{Fig:compr}, where $\mathcal{D}$ is the average reconstruction error over 50 independent compressible signals. The number of bits $\mathcal{R}$ is computed via \eqref{eq:rate1}. Note that Figure \ref{Fig:compr} (which is the log-log plot of $\mathcal{D}$ with respect to $\mathcal{R}$)  shows that the fitting lines have slopes close to $1/p-1/2$ (Corollary \ref{cor:compr}) for both choices of $p$.
 \begin{figure}[htbp]
\centering
\includegraphics[width=9cm, height=7cm]{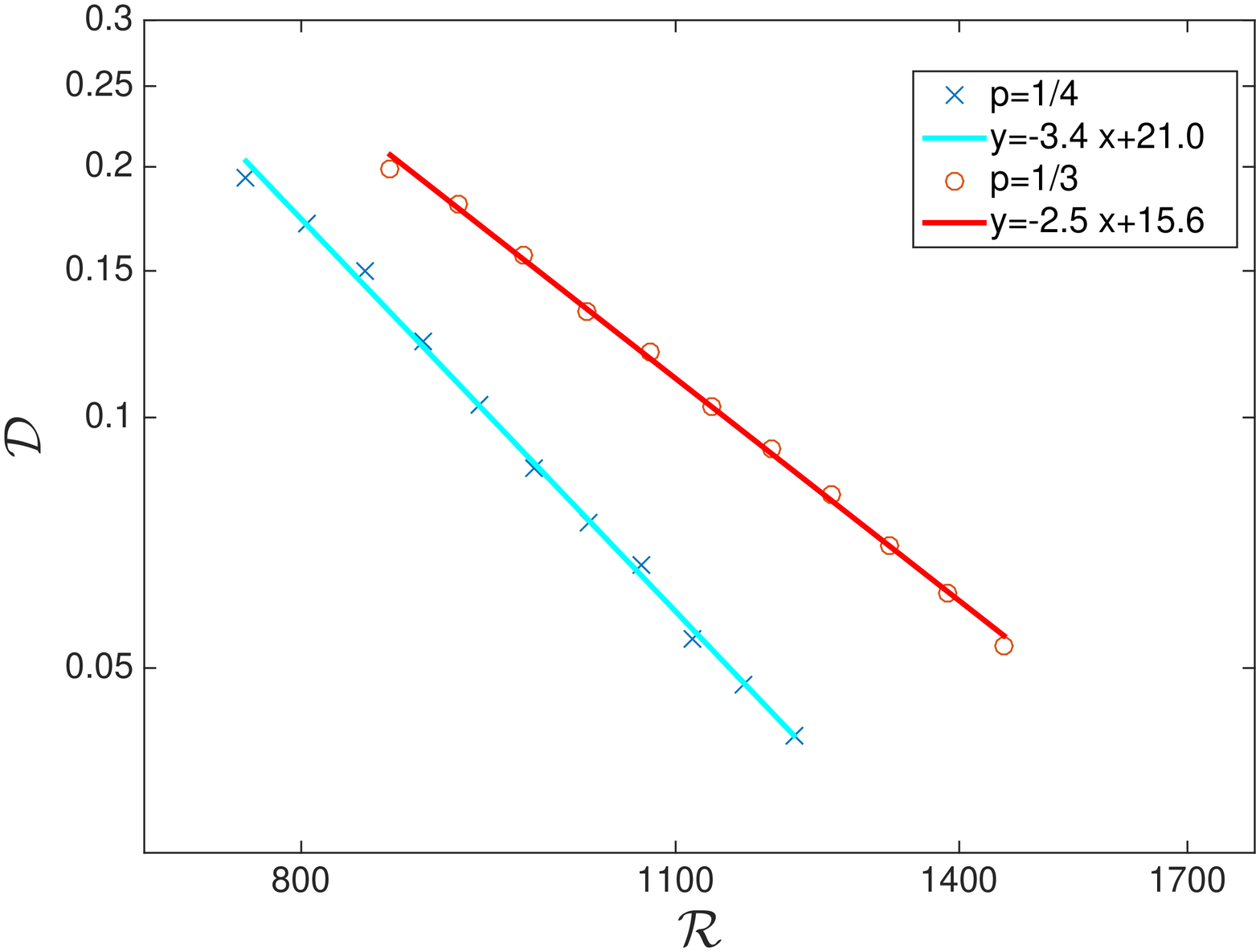}
\caption{Log-log plot of $\mathcal{D}$ versus $\mathcal{R}$, where  $\mathcal{D}$ is the average error of reconstructions of compressible signals from the second order $\sd$ quantization.}
\label{Fig:compr}
\end{figure}

 \begin{figure}[htbp]
\centering
\includegraphics[width=9cm, height=7cm]{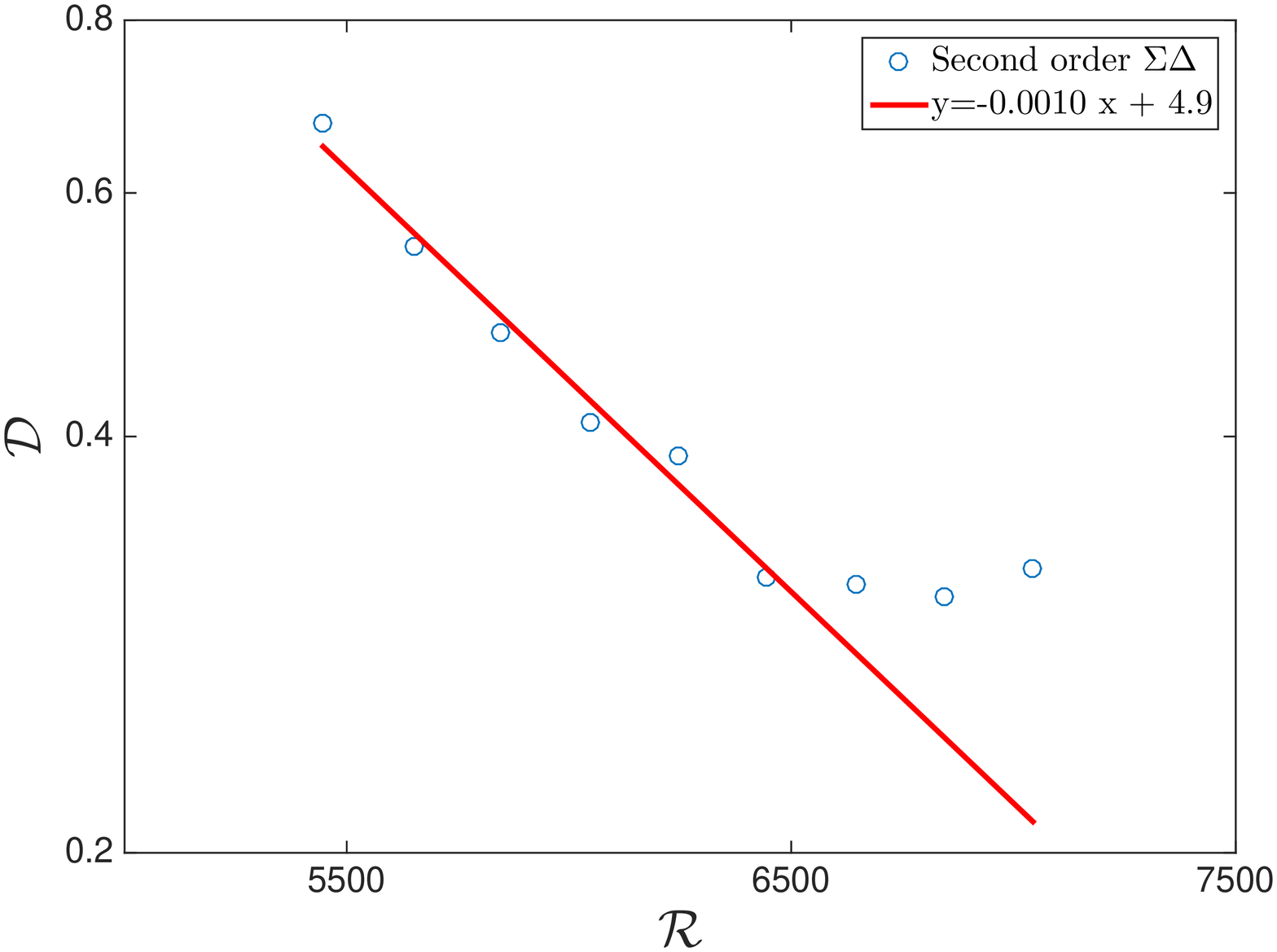}
\caption{Semi-log plot of $\mathcal{D}$ versus $\mathcal{R}$ where  $\mathcal{D}$ is the maximum error of reconstructions for 50 sparse signals via second order $\sd$ quantization and noisy measurements}
\label{Fig:noisy}
\end{figure}

\noindent\textbf{Experiment 3: Noisy measurements}. Using the same setting as in Experiment 1 but with noisy measurements $y=Ax+e$, we observe that the  error decay stops as soon as it reaches the level of the noise (see Figure \ref{Fig:noisy}). Here each entry of $e$ is i.i.d., drawn from the uniform distribution on $[-0.05,0.05]$. Each point in the figure is the maximum error over 100 independent trials. 
\section{Acknowledgements} 
The authors thank Sinan G{\"u}nt{\"u}rk for useful conversations and for sharing his proof of Lemma \ref{lm:sinan}. 
R. Saab has been supported in part by a UCSD  Research Committee Award, a Hellman Fellowship, and the NSF under grant DMS 1517204. R.~Wang was funded in part by an NSERC
Collaborative Research and Development Grant DNOISE II (22R07504). {\"O}.~Y{\i}lmaz was funded in part by a Natural Sciences and
Engineering Research Council of Canada (NSERC) Discovery Grant
(22R82411), an NSERC Accelerator Award (22R68054) and an NSERC
Collaborative Research and Development Grant DNOISE II (22R07504).


\bibliographystyle{IEEEtran}

\bibliography{DCCbib,nsfrefs,refs}
\section{Appendix}
\noindent\emph{\textbf{Proof of Lemma \ref{lm:sinan} (Due to Sinan G{\"u}nt{\"u}rk)}.}  Let $N_C := \# \{q: C_{E,Q}(q)\cap B_R\} $ be the number of quantization cells that intersect with $B_R$.  
Without loss of generality, we work with the infinite alphabet $\delta \mathbb{Z}$, with $\delta=1$. It is straightforward to generalize the argument to a finite alphabet and to show that the number of cells scales linearly with $1/\delta^k$. 

Let $y=Ex$ and fix the quantizer to be $Q_{\sd}^r$. Having $x\in C_{E,Q_{\sd}^r}(q)$ means that there exists $u\in \mathbb{R}^m$ with 
\begin{equation}\label{equ_cell}
y-q=D^r u,   \quad u_i \in [ -0.5, 0.5), \quad 1\leq i \leq m. 
\end{equation}
Let $\widetilde{y}= D^{-r}y$, $\widetilde{q}=D^{-r}q$ and $\widetilde{E}=D^{-r}E$. Plugging in \eqref{equ_cell} gives 
\[
\wtl y -\wtl q=  u, \quad u\in [ -0.5, 0.5), \quad 1\leq i \leq m
\]
which means $x \in C_{\wtl E, Q_0} (\wtl q)$, where $(Q_0(y))_i = \arg\min\limits_{n\in\Z} |n-y_i|$, for $i=1,...,m$. Since this  holds for all the $x$ in $C_{E,Q_r}(q)$, we then have
\[
C_{E,Q_r}(q)=C_{\widetilde{E},Q_0} (\widetilde{q}).
\] 
 Since $q$ and $\widetilde{q}$ are in one-to-one correspondence, it follows that 
$$
N_C = \# \{\wtl q:  \  C_{\widetilde{E},Q_0}(\wtl q) \cap B_R \}.
$$
Let $(\widetilde{e_i})_1^m$ be the rows of $\widetilde{E}$. The interior of $C_{\widetilde{E},Q_0}(\widetilde{q})$ is given by 
\[
\bigcap\limits_{i=1}^m \{x\in \mathbb{R}^k: \langle \wtl e_i, x \rangle \in (\wtl q_i -0.5,\wtl q_i+0.5)\}.
\]
For each $1\leq i\leq m$ and each $l\in \mathbb{Z}$, consider the hyperplanes
\[
H_{i,l}:=\{x\in \mathbb{R}^k: \langle \wtl e_i,x \rangle = (l-0.5) \}.
\]
and note that the collection of hyperplanes $\{H_{i,l}: 1\leq i\leq m, l \in \mathbb{Z}\}$ determines the set of all cells, i.e., $\{C_{\wtl E,Q_0} (\wtl q): \wtl q \in \mathcal{A}^m \}$. 
%
Now let us focus on the cells intersecting the ball $B_R$. For any $x\in B_R$, the Cauchy-Schwarz inequality implies
\[
|\langle \wtl e_i, x \rangle| \leq \|\wtl e_i\|_2 R.
\]
Hence 
\[
\# \{l: H_{i,l} \cap B_R \neq \emptyset \}  \leq 1+2R\|\wtl e_i\|_2,
\]
and
\[
n_R := \# \{(i,l): H_{i,l} \cap B_R \neq \emptyset\} \leq m+2R \sum\limits_{i=1}^m \|\wtl e_i\|_2.
\]
What remains is to bound the sum above. To that end, note that
\[
\sum\limits_{i=1}^m \|\widetilde{e}_i\|_2 \leq \sqrt m \|\wtl E\|_F \leq m^{r+1/2} \|E\|_F
\]
since $\|D^{-r} \|_2 \leq m^r $. Now, observe that $\|E\|_F^2$ is simply the  sum of the squares of independent sub-Gaussian variables, hence a sum of independent sub-exponential variables. By the large deviation result on such sub-exponential variables (see e.g., Section 5.2.4 \cite{ve12-1}),  for each $\alpha >0$ with probability exceeding $1-e^{-\alpha^2 mk}$, we have $\|E\|_F \leq (1+\alpha) \sqrt {mk}$. Thus we have 
\[
\sum\limits_{i=1}^m \|\wtl e_i\|_2 \leq (1+\alpha)m^{r+1} \sqrt k, 
\]
which in turn gives
\[
n_R \leq m+2(1+\alpha)R m^{r+1} \sqrt k
\]
as an upper bound on the number of hyperplanes. On the other hand, it is well known that the total number of cells determined by $n$ hyperplanes in $\mathbb{R}^k$ is at most 
$$\sum_{i=0}^k \binom{n}{i}.$$
Set $n=n_R$, and note that  depending on the size and location of the ball $B_R$ there are two possibilities: \\
\textbf{Case 1}. $n_R >2k$. In this case, Stirling's approximation gives
\[
N_C \leq \sum\limits_{i=0}^k \binom {n_R}{i} \leq (k+1)\binom{n_R}{k} \leq (k+1)^{1/2} \left(\frac{me}{k}\right)^k (1+2(1+\alpha)Rm^r\sqrt k )^k.
\]
\textbf{Case 2}. $n_R \leq 2k$. In this case, we use the trivial bound $N_C \leq 2^{n_R}\leq 2^{2k}$.
Combining the two estimates 
\[
N_C \lesssim  \max\left\{2^{2k}, (1+\alpha)^k R^k \frac{m^{(r+1)k}}{k^{k/2-1/2}} \right\}.
\]
\qed
 
\end{document}